\def \beq{\begin{equation}}
\def \eeq{\end{equation}}
\def\and {{\rm \; and \;}}
\newcommand{\R}{{\mathbb R}}
\newcommand{\y}{{\bf y}}
\newcommand{\C}{\mathbb{C}}
\newcommand{\x}{{\bf k}}
\newcommand{\xy}{{\bf x}}
\newcommand{\yy}{\underline{y}}
\newcommand{\bc}{{\bf c}}
\newcommand{\A}{{\bf A}}
\newcommand{\Z}{{\mathbb Z}}
\newtheorem{theorem}{Theorem}[section]
\newtheorem{definition}[theorem]{Definition}
\newtheorem{proposition}[theorem]{Proposition}
\newtheorem{lemma}[theorem]{Lemma}
\theoremstyle{definition}
\newtheorem{remark}[theorem]{Remark}
\newtheorem{problem}[theorem]{Problem}
\numberwithin{equation}{section}
\begin{document}

\noindent 
\begin{center}
\textbf{\Large On the construction of composite Wannier functions}

\vspace{0.5cm}

March 9, 2016
\end{center}

\vspace{0.5cm}

\noindent 

\begin{center}

\small{
Horia D. Cornean\footnote{Department of Mathematical Sciences,     Aalborg University, Fredrik Bajers Vej 7G, 9220 Aalborg, Denmark}, Ira Herbst\footnote{Department of Mathematics, 
University of Virginia, 
Charlottesville, VA 22903, USA }, Gheorghe Nenciu\footnote{Institute of Mathematics of the Romanian Academy,
P.O. Box 1-764, RO-70700, Bucharest, Romania}}

\end{center}

\vspace{0.5cm}

\noindent

\begin{abstract}

We give a constructive proof for the existence of an $N$-dimensional Bloch basis which is both smooth (real analytic) and periodic with respect to its $d$-dimensional quasi-momenta, when $1\leq d\leq 2$ and $N\geq 1$. The constructed Bloch basis is conjugation symmetric when the underlying projection has this symmetry, hence the corresponding exponentially localized composite Wannier functions are real. In the second part of the paper we show that by adding a weak, globally bounded but not necessarily constant magnetic field, the existence of a localized basis is preserved.      
\end{abstract}

\tableofcontents

\newpage

\section{Introduction and main results}\label{intro}

\subsection{Generalities}

The Wannier functions (as bases of localized functions spanning subspaces corresponding to energy bands in periodic solids) have been playing a central role since their introduction in 1937 in both qualitative and quantitative aspects of the one-electron theory of solid state physics.
In particular, they are the key ingredient in obtaining effective tight-binding Hamiltonians, a topic related to the Peierls-Onsager substitution.  
After the seminal paper of Marzari and Vanderbilt \cite{MV}, they have become a powerful tool in {\em ab initio} computational studies of electronic properties of materials (see \cite{N2}, \cite{MMYSV}, \cite{FMP} and references given there). 

The existence and construction of {\em exponentially localized} Wannier functions (ELWF) has been one of the fundamental problems in solid state physics. Let us remind the reader that  prior to the 
crucial observation made by Thouless \cite{Th} that the very existence of ELWF implies that the quantum Hall current 
of the corresponding band vanishes, the fact that ELWF might not exist was largely overlooked by physicists. 

It turns out that due to several mathematical subtleties,  the proof of exponential localization is difficult and depends on $d$ (the dimension of the configuration space) among other things. Accordingly,  
 the progress in proving the existence of 
ELWF in all cases of physical interest was slow. For example, the existence of ELWF for composite energy bands of time-reversal invariant Hamiltonians in two and three dimensions has been  only recently achieved \cite{P}, \cite{BPCMM}.

The first rigorous result about the existence of ELWF in one dimension was obtained  by Kohn in his 1959  classic paper \cite{K}. 
Using a technique based on ordinary differential equations, Kohn considered simple bands in crystals with a center of inversion and obtained a complete solution: he constructed Wannier functions which were real, symmetric or antisymmetric with respect to an appropriate reflection and with an optimal exponential decay. Kohn's method does not generalize to higher dimensions and the next major step was made  
by des Cloizeaux. In a couple of basic papers \cite{dCl1}, \cite{dCl2}  in 1964 which are still worth reading, he showed 
that the existence of ELWF for composite bands in crystals of arbitrary dimension can be achieved in two steps:

\begin{enumerate}[I]
\item. To each isolated energy band one associates a  family $P({\bf k}), {\bf k}\in {\R}^d$, of rank $N$ 
orthogonal projections ($N=1$ for simple bands and $1<N<\infty$ for composite bands) which are $\Z^d$-periodic and jointly
analytic in a complex tubular neighborhood $I$ of ${\mathbb R}^d$. These projections live in a separable Hilbert space $\mathcal{H}$. 

\item. Show that the vector bundle  $\mathbb{B}:=\{(\x, {\bf v}):\; \x\in \mathbb{T}^d,\; {\bf v}\in {\rm Ran}P({\x})\} $ is {\bf trivial} (= the subspace admits a continuous orthonormal basis). More precisely, one has to show there exist  $N$ vectors 
$\{\Xi_j({\bf k})\}_{j=1}^N\subset \mathcal{H}$ which are continuous,  $\mathbb{Z}^d$-periodic and form an orthonormal basis of 
$ {\rm Ran}(P({\bf k}))$ for all ${\bf k}\in\R^d.$
\end{enumerate}

Step I. is quite general and it is based on the Bloch-Floquet-Gelfand transform and analytic perturbation theory.  It works for all standard Schr\"odinger operators with periodic potentials, including their discrete and pseudo-differential 
variants (see e.g.\cite{dCl1},\cite{N2}, \cite{FMP}, \cite{dN-G} as well as section \ref{subs1.2}  below).

Step II. is a subtle and more demanding mathematical problem, due to the fact that vector bundles can be non-trivial. A well known example is the tangent space to any even-dimensional unit sphere: the existence of a smooth orthonormal basis would contradict the Hairy-Ball Theorem. des Cloizeaux realized that in proving step II the obstructions have topological origin and showed that 
they are not present for simple bands ($N=1$) in the absence of a magnetic field or when the periodic electric potential has a center of inversion.
The restriction to crystals with a center of inversion  was removed in 1983 by Nenciu \cite{N0} who proved  by 
operator analytic methods that at the abstract level, for arbitrary $d$ and $N=1$ the obstructions to the triviality of the vector bundle $\mathbb{B}$ 
are absent if $P(\x)$ satisfies
\begin{equation}\label{cs}
\theta P({\bf k})\theta=P(-{\bf k})
\end{equation}
for some antiunitary involution $\theta$. The proof
is constructive and leads to optimally localized, real Wannier functions
\cite{N2}. A simpler proof was later given by Helffer and Sj\"ostrand \cite{He-Sj}; however, it seems that it does not provide the optimal exponential decay for the Wannier functions. 

For a Schr\"odinger operator with a real symbol (i.e. without a magnetic potential), \eqref{cs} is insured by the fact that it commutes with complex conjugation i.e. it  obeys time reversal symmetry. Concerning the case $N>1$, it has 
been sugested \cite{N2}, \cite{NN2} that exploiting \eqref{cs} in the context of characteristic classes theory in combination with 
some deep results in the theory of analytic functions of several complex variables might lead to the existence of ELWF. Indeed, 
if $ d=2,3$ and if $P({\bf k})$ satisfies \eqref{cs}, the triviality of the bundle was proved by Panati \cite{P}, providing a  non-constructive proof of the existence of ELWF for all cases of physical interest \cite{BPCMM}. { In higher dimensions, examples of non-trivial bundles have been given in \cite{dN-G}. }

 While concerning the non-constructive existence of ELWF the situation is satisfactory, there are still many interesting questions left open in the area. Below we list a few of them:
 
 \begin{enumerate}

\item As it stands in \cite{P},  the triviality of the bundle generated by $P({\bf k})$ for $d=2,3$ and $N>1$ is an abstract existence result, but at least for 
computational aspects, a constructive proof is needed. An important step in this direction has been very recently made in \cite{FMP}: the authors construct real { valued} Wannier functions which decay faster than any polynomial. 

\item All the results outlined above (as well as the very definition of Wannier functions) require strict periodicity of 
the underlying Hamiltonian. However, as stressed by many authors (see e.g. \cite{KO}, \cite{GK}, \cite{Ki} ) results on the construction of 
exponentially localized bases in non-periodic (or at least nearly periodic) systems are highly desirable.

\item For theoretical as well as computational purposes it is highly desirable to construct localized Wannier functions inheriting as
much as possible the symmetries of the underlying Hamiltonian, e.g. time reversal invariance should lead to real valued  Wannier functions.

\item At least from a computational point of view it is important to construct Wannier functions with best localization properties. In this context, one has to define precise localization criteria and to discuss the relations between them \cite{MV, CNN, MP, PP}.

\item In dimension one, Wannier functions can be constructed as the eigenvectors of the position operator restricted to the range of the Riesz projection of the isolated band \cite{NN1}. Can this method be extended to higher dimensions? { Some recent progress on the so-called `radial localization problem' has been made by Prodan \cite{Pr}. }
\end{enumerate}

The aim of our paper is to add some results concerning the first three questions above. Our first main result is a construction (assuming \eqref{cs})
 of real valued ELWF for $d=2$ and $N>1$. Our method is quite different from the one in  \cite{FMP}; in particular, it does not require a preliminary reduction to $N=2$, and builds on the 
operator analytic approach in \cite{N0}, \cite{N2}. 
{ Note that the results of our paper are not  applicable to the so-called `fermionic' case related to $\Z_2$ topological insulators \cite{SV, FMP2}. However, our methods can be adapted in order to deal with the two dimensional fermionic case (\cite{CMT}, in preparation). 

}

The
discussion above is about the case when no magnetic fields are present. { While the existence of Wannier bases for real and periodic Schr\"odinger operators can be reduced to the study of the existence of smooth and periodic Bloch bases, for non-periodic systems the situation is more complicated. Nevertheless, for a large class of non-periodic perturbations of periodic systems it can be proved that localized bases still exist, see \cite{NN2}; the main idea behind it is based on a `continuity' argument which interpolates between the periodic and non-periodic operators.}

Our second main result gives the construction of bases of exponentially 
localized functions for the case when a weak, globally bounded magnetic field is added, taking for granted the 
existence of such bases for the zero magnetic field case.

The content of the paper is as follows. In paragraph \ref{subs1.2} we outline how one arrives at the family $P({\bf k})$ starting from a periodic tight-binding Hamiltonian/ Schr\"odinger operator. In paragraph \ref{subs1.3} we formulate the main results. Sections \ref{sec2} and \ref{sec3} are devoted to proofs. 

\subsection{From time reversal symmetric Hamiltonians to periodic $P({\bf k})$ satisfying conjugation symmetry}\label{subs1.2}

Let $\mathcal{B}\subset \R^d$ be a finite set containing $D$ points and let  
$$\Lambda:=\mathcal{B}+\mathbb{Z}^d$$
be the discrete configuration space. We assume that every point $\y\in \Lambda$ can be uniquely written as:
$$\y=\yy +\gamma,\quad \yy\in \mathcal{B},\quad \gamma\in \mathbb{Z}^d.$$ 
Consider the Hilbert space $l^2(\Lambda)$ and let $H_0$ be a bounded self-adjoint operator  
acting on this space, which commutes with the translations acting on $\mathbb{Z}^d$. It is uniquely determined by the 'matrix elements' 
$$H_0(\yy,\gamma;\yy',\gamma')= H_0(\yy,\gamma-\gamma';\yy',0),\quad \forall \yy,\yy'\in \mathcal{B},\quad \forall \gamma,\gamma'\in \mathbb{Z}^d.$$
Consider the unitary Bloch-Floquet-Gelfand transform 
$$l^2(\Lambda)\ni \Psi \mapsto (G\Psi)(\yy;\x):=\sum_{\gamma\in \mathbb{Z}^d}e^{2\pi i \x\cdot \gamma}\Psi(\yy+\gamma)\in L^2([-1/2,1/2]^d;\C^D).$$
We have 
$$GH_0G^*=\int_{[-1/2,1/2]^d}^\oplus h(\x) d\x,\quad h(\x)\in \mathcal{L}(\C^D),\quad h(\yy,\yy';\x):=\sum _{\gamma\in \mathbb{Z}^d}e^{2\pi i \x\cdot \gamma}H_0(\yy,\gamma;\yy',0).$$
The $D\times D$ matrix $h(\x)$ will have $D$ eigenvalues (Bloch energies), whose range determine the spectrum of $H_0$. We see that $h(\x)$ can be periodically extended to the whole of $\R^d$ and that the decay of $H_0( \yy,\gamma;\yy',0)$ as function of $\gamma$ tells us how regular $h(\x)$ is. For example, exponential localization of $H_0$ gives real analyticity for $h$. Moreover, if $H_0$ is a real kernel, then we immediately obtain:
$$h(\yy',\yy;\x)=\overline{h(\yy,\yy';\x)}=h(\yy,\yy';-\x),\quad \overline{h(\x)f}=h(-\x)\overline{f},$$
a property which from now on will be called conjugation symmetry. 
We will see in Lemma \ref{lema-dec-1} that for a self-adjoint matrix, conjugation symmetry is equivalent to:
$${}^t h(\x)=h(-\x),\quad \x\in \R^d,$$
where ${}^t h(\x)$ means transposition.

Assume that the union of the ranges of $N<D$ eigenvalues $\lambda_j(\x)$ of $h(\cdot)$ is separated from the range of the other eigenvalues, i.e. there exists a gap in the spectrum of $H_0$. We can define $P(\x)$ to be the Riesz projection associated to these $N$ eigenvalues. $P(\cdot)$ is periodic and has the same type of smoothness as $h$.  Moreover, 
because 
$${}^t\{(h(\x)-z)^{-1}\}=(h(-\x)-z)^{-1},$$ 
it follows that $P(\cdot)$ has the same conjugation property. 

Consider an orthonormal system of $N$ vectors  $\Xi_j(\x)\in\C^D$ which form a basis for ${\rm Ran}(P(\x))$, { continuous in $\x\in [-1/2,1/2]^d$}. Then we can define:
\begin{equation}\label{zumba0}
w_j(\y)=w_j(\yy+\gamma):=\int_{[-1/2,1/2]^d} e^{-2\pi i \x\cdot \gamma}\Xi_j(\yy;\x)d\x,\quad 1\leq j\leq N,\quad \yy\in \mathcal{B},\quad \gamma\in \mathbb{Z}^d.
\end{equation}
It is known that the set obtained by translating them as follows: 
$$\{w_j(\cdot -\gamma):\; \gamma\in \mathbb{Z}^d,\; 1\leq j\leq N\}\subset l^2(\Lambda)$$
forms an orthonormal basis for the range of the spectral projection of $H_0$ corresponding to the part of the spectrum given by
$$\sigma_0:=\bigcup_{j=1}^N {\rm Ran}\;\lambda_j{(\cdot)}.$$

A similar problem can be formulated for continuous operators. Let $V$ be a $\Z^d$-periodic, real and bounded potential. The operator $H=-\Delta +V$ has purely absolutely continuous spectrum. 

Let $\Omega:=[-1/2,1/2]^d$. Consider the Bloch transform 
$$L^2(\R^d)\ni \Psi \mapsto (F\Psi)(\yy;\x):=\sum_{\gamma\in \mathbb{Z}^d}e^{2\pi i \x\cdot\gamma}\;\Psi(\yy+\gamma)\in L^2(\Omega;L^2(\Omega)).$$
We have (in the formula below, $\x$-BC is a short-cut for $\x$-dependent boundary conditions $f(\yy+\gamma)=e^{-2\pi i\x\cdot \gamma} f(\yy)$, when both $\yy$ and $\yy+\gamma$ belong to the boundary of $\Omega$):
$$FHF^*=\int_{\Omega}^\oplus h_\x d\x,\quad h_\x=-\Delta+V \; {\rm with \; \x{\rm -BC}\; in } \; L^2(\Omega).$$
Each fiber Hamiltonian $h_\x$ has purely discrete spectrum. Assume that the range of $N$ eigenvalues form an isolated spectral island $\sigma_0$. { Using Combes-Thomas exponential estimates and elliptic regularity (see Proposition 3.1 in \cite{CN2})}, one can show that {  the Riesz projection $\Pi$ corresponding to $\sigma_0$ } has a real integral kernel $\Pi(\xy,\xy')$ which is jointly continuous and exponentially localized around the diagonal, i.e. there exists $\alpha>0$ such that $|\Pi(\xy,\xy')|\leq e^{-\alpha |\xy-\xy'|}$. Then 
$$F\Pi F^*=\int_{\Omega}^\oplus P_\x d\x,\quad P_\x(\yy,\yy')=\sum _{\gamma\in \mathbb{Z}^d}e^{2\pi i \x\cdot \gamma}\Pi(\yy+\gamma;\yy') \; {\rm in}\; L^2(\Omega).$$
Again, we see that $P_\x$ has the conjugation symmetry, is real analytic and $\Z^d$-periodic in $\x$. Note that \eqref{zumba0} also makes sense in the continuous case, with the only difference that $\yy\in \Omega$. 

{ In the continuous case one can also consider the Bloch-Zak transform 
$$L^2(\R^d)\ni \Psi \mapsto (F_Z\Psi)(\yy;\x):=\sum_{\gamma\in \mathbb{Z}^d}e^{2\pi i \x\cdot(\yy+\gamma)}\;\Psi(\yy+\gamma)\in L^2(\Omega;L^2(\Omega)).$$
At fixed $\x$ and for $\Psi\in C_0^\infty(\R^d)$, the function 
$(F_Z\Psi)(\cdot;\x)$ is $C^\infty(\overline{\Omega})$ and periodic. We have:
$$F_ZHF_Z^*=\int_{\Omega}^\oplus \tilde{h}_\x d\x,\quad \tilde{h}_\x=(-i\nabla +2\pi\x)^2+V \; {\rm with \; {\rm periodic-BC}\; in } \; L^2(\Omega).$$
The Riesz projection $\Pi$ is decomposed as:
$$F_Z\Pi F_Z^*=\int_{\Omega}^\oplus \Pi_\x d\x,\quad \Pi_\x(\yy,\yy')=\sum _{\gamma\in \mathbb{Z}^d}e^{2\pi i \x\cdot (\yy+\gamma -\yy')}\Pi(\yy+\gamma;\yy') \; {\rm in}\; L^2(\Omega).$$
We see that the range of $\Pi_\x$ consists of $\Omega$-periodic functions in $\yy$, but $\Pi_\x$ is no longer periodic in $\x$. Instead, denoting by $\tau_\lambda$ the unitary operator acting on $L^2(\Omega)$  given by $(\tau_\lambda f)(\yy)=e^{2\pi i\lambda \cdot \yy}f(\yy)$ we have the identity: 
$$\Pi_{\x+\lambda}=\tau_\lambda \Pi_\x \tau_\lambda^{-1},\quad \forall \lambda\in \Z^d.$$
}

If the $\Xi_j$'s are eigenvectors of $h(\x)$, then the corresponding $w_j$'s (see \eqref{zumba0}) are called Wannier functions. In general, the Wannier functions are just square integrable, but in many physical applications it is important to construct better spatially localized Wannier functions. For example, if we can simultaneously choose the $\Xi_j$'s to be periodic and real-analytic, then a standard Paley-Wiener argument shows that the Wannier functions are exponentially localized. Moreover, if we also have  $\overline{\Xi_j(\x)}=\Xi_j(-\x)$, then the Wannier functions are real.  

If the $\Xi_j$'s are not eigenvectors for the fibre Hamiltonian but they form an orthonormal basis for ${\rm Ran}(P(\x))$, we can still define the corresponding $w_j$'s. They are called composite Wannier functions.

\subsection{Construction of composite Wannier functions}\label{subs1.3}

If $\R^d\ni \x\mapsto P(\x)\in \mathcal{L}(\mathcal{H})$ is a  projection valued continuous map in
some separable complex Hilbert space $\mathcal{H}$, then the dimension of ${\rm Ran}(P(\x))$ is constant. We list four basic problems. 

\begin{problem}\label{problema1}
{\it Let $d\geq 1$ and let $\R^d\ni \x\mapsto P(\x)\in \mathcal{L}(\mathcal{H})$ be an orthogonal projection valued map which is $\mathbb{Z}^d$-periodic and belongs to $C^n(\R^d)$, $n\geq 1$. Let ${\rm dim}{\rm Ran}(P(\x))=N<\infty$. Can one find a system of $N$ vectors 
$\{\Xi_j(\x)\}_{j=1}^N$ which form an orthonormal basis in ${\rm
  Ran}(P(\x))$ and which are both in $C^n(\R^d)$ and
$\mathbb{Z}^d$-periodic? }
\end{problem}

\vspace{0.2cm}

\begin{problem}\label{problema2}
{\it If $d\geq 1$, let $\R^d\ni \x\mapsto P(\x)\in \mathcal{L}(\mathcal{H})$ be an orthogonal projection valued map, $\mathbb{Z}^d$-periodic and jointly analytic in a tubular complex neighborhood $I$ of $\R^d$. Let ${\rm dim}{\rm Ran}(P(\x))=N<\infty$.
Can one find a system of $N$ vectors 
$\{\Xi_j(\x)\}_{j=1}^N$ which are jointly analytic in a (possibly smaller than $I$)  tubular complex neighborhood $I'$ of $\R^d$, 
are $\mathbb{Z}^d$-periodic and form an orthonormal basis of ${\rm
  Ran}(P(\x))$ when $\x\in\R^d$? }
\end{problem}

\vspace{0.2cm}

Most of the results of this paper will only be valid if some conjugation symmetries hold. Here is a list of definitions: 

\begin{definition}\label{CS}
We say that a family $H(\cdot)$ { of linear operators} acting on a complex Hilbert space $\mathcal{H}$ has the $CS$ property if there exists an anti-unitary operator $K$ such that 
\begin{equation}\label{septemb1}
K^2={\rm Id},\quad KH(\x)K=H(-\x),\quad \x\in \R^d.
\end{equation}
\end{definition}

The most common example of such a $K$ in $l^2(\mathbb{Z})$ is the complex conjugation $Kf=\overline{f}$:
\begin{equation}\label{jan0}
\overline{H(\x)f}=H(-\x)\overline{f},\quad \x\in\R^d.
\end{equation}
In fact, up to a change of basis, every such $K$ is given by a complex conjugation. 

\begin{definition}\label{CS'}
We say that a family of invertible operators $U(\cdot)$ acting on a complex Hilbert space $\mathcal{H}$ has the $CS '$ property if there exists an anti-unitary operator $K$ such that 
\begin{equation}\label{septemb1''}
K^2={\rm Id},\quad KU(\x)K=[U(-\x)]^{-1},\quad \x\in \R^d.
\end{equation}
\end{definition}

The third problem is the following:
\begin{problem}\label{problema3}
{\it Assume that $P(\cdot)$ has the $CS$ property. Can we construct a basis as in the previous two problems,  which also obeys the property $K\Xi_j(\x)=\Xi_j(-\x)$? }
\end{problem}

\vspace{0.5cm}

Finally, let us formulate a more general problem as in \cite{P, FMP}, { motivated by the Bloch-Zak transform from the continuous case.} Assume that we have some separable complex Hilbert space $\mathcal{H}$ with a conjugation $K$ as in \eqref{septemb1}, and a family of unitary operators $\tau_\lambda$ with $\lambda \in \mathbb{Z}^d$ such that 
\begin{equation}\label{septemb2}
\tau_0={\rm Id},\quad \tau_{\lambda}\tau_\mu =\tau_{\lambda+\mu}(=\tau_\mu\tau_{\lambda}),\quad \tau_\lambda^*=\tau_{\lambda}^{-1}(=\tau_{-\lambda}),\quad K\tau_\lambda =\tau_{-\lambda}K.
\end{equation}

\begin{problem}\label{problema4}
{\it If $d\geq 1$, let $\R^d\ni \x\mapsto \Pi(\x)$ be an orthogonal projection valued map in $\mathcal{H}$ such that:
$$ \Pi(\x+\lambda)=\tau_{\lambda}\Pi(\x)\tau_{\lambda}^*,\quad K\Pi(\x)K=\Pi(-\x),\quad \forall \x\in\R^d,\quad \forall \lambda\in \mathbb{Z}^d. $$
Assume that the map is jointly analytic in a tubular complex neighborhood $I$ of $\R^d$. 
Assume that the dimension of ${\rm Ran}(\Pi(\x))$ equals
$ N<\infty$. Can one find a system of $N$ vectors 
$\{\Psi_j(\x)\}_{j=1}^N$ which are jointly analytic in a (possibly smaller than $I$)  tubular complex neighborhood $I'$ of $\R^d$, which form an orthonormal basis of ${\rm
  Ran}(\Pi(\x))$ { when $\x\in\R^d$}, obey $\tau_\lambda \Psi_j(\x)=\Psi_j(\x+\lambda)$ and $K \Psi_j(\x)=\Psi_j(-\x)$ for all $\x\in\R^d$ and $\lambda\in \mathbb{Z}^d$?}
\end{problem}

{ 
\begin{remark}\label{remarcazak}
We will show in Subsection \eqref{reducere4} that this more general problem can be reduced to the periodic case. However, other more subtle properties like the existence of a canonical Berry connection only appear in the Zak picture, see \cite{FCG} for details.  
\end{remark}
}

\vspace{0.2cm}

Here is our first main result.
\begin{theorem}\label{metateorema} The following statements hold true: 

\noindent {\rm (i)} Problem \ref{problema4} can be reduced to the first three; see Subsection \ref{reducere4};

\noindent {\rm (ii)} All problems can be reduced to finite dimensional Hilbert spaces; see Subsection \ref{finidimi};

\noindent {\rm (iii)} Assume that the family $P(\x)$ is real analytic. If we can construct a {\bf continuous} solution to our problems, then it can be made {\bf real analytic}, preserving all its other properties; see Lemmas \ref{lema3} and \ref{lema3'};  

\noindent {\rm (iv)} If $d=1$, a solution exists even in the absence of the $CS$ property; see \eqref{solprob1} and \eqref{matrix2}; 

\noindent {\rm (v)} If $d=2$, a solution can be constructed if the $CS$ property holds true. The main technical result is Proposition \ref{teorema-u}.
\end{theorem}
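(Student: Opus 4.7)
This is a meta-theorem packaging five largely independent reductions; I would treat each part in turn. For (i), I would construct unitaries $V(\x)$ jointly analytic in a tubular neighborhood of $\R^d$ satisfying $V(\x+\lambda)=\tau_\lambda V(\x)$ and $KV(\x)K=V(-\x)$. Then $P(\x):=V(\x)^*\Pi(\x)V(\x)$ becomes periodic and satisfies the $CS$ property, and any solution $\{\Xi_j(\x)\}$ of Problem~\ref{problema3} for $P$ lifts to $\Psi_j(\x):=V(\x)\Xi_j(\x)$ solving Problem~\ref{problema4}. In the Bloch-Zak setting the natural choice is the multiplication operator $(V(\x)f)(\yy)=e^{-2\pi i\x\cdot\yy}f(\yy)$, which commutes appropriately with complex conjugation. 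For (ii), I would fix $\x_0$, pick an orthonormal basis of ${\rm Ran}\,P(\x_0)$, and note via Kato's argument that $P(\x)$ applied to these vectors spans ${\rm Ran}\,P(\x)$ in every neighborhood where $\|P(\x)-P(\x_0)\|<1$. A finite subcover of the compact torus yields a finite-dimensional $\x$-independent subspace containing all fibers; enlarging it by its $K$-image preserves the $CS$ property.

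For (iii), starting from a continuous periodic orthonormal basis $\{\Xi_j^0(\x)\}$, I would Fourier-truncate each $\Xi_j^0$ to obtain trigonometric polynomials $\widetilde{\Xi}_j(\x)$ uniformly close to $\Xi_j^0$. Since $P$ is analytic, $P(\x)\widetilde{\Xi}_j(\x)$ is analytic and periodic, and for sufficiently fine truncation remains linearly independent. A Gram-Schmidt orthonormalization then preserves analyticity, because the Gram matrix is close to the identity and analytic, so its inverse square root is analytic. To preserve $CS$, I would first symmetrize $\widetilde{\Xi}_j(\x)$ by averaging with $K\widetilde{\Xi}_j(-\x)$.

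For (iv), in one dimension I would use Kato's parallel transport: the unitary $U(k)$ solving $U'(k)=[P'(k),P(k)]U(k)$ with $U(0)={\rm Id}$ intertwines $P(0)$ with $P(k)$. Choosing an orthonormal basis $\{e_j\}$ of ${\rm Ran}\,P(0)$, the vectors $U(k)e_j$ are analytic but have monodromy $U(1)\in\mathcal{U}(N)$. Writing $U(1)=e^{2\pi iA}$ with $A$ self-adjoint and setting $\Xi_j(k):=U(k)e^{-2\pi ikA}e_j$ restores periodicity without breaking analyticity; no $CS$ hypothesis is needed here.

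The two-dimensional case (v) is the main obstacle and the heart of the paper. The plan is to first apply the one-dimensional construction in the variable $k_1$, producing a basis analytic in both variables and periodic in $k_1$; the defect of periodicity in $k_2$ is encoded by an analytic $\mathcal{U}(N)$-valued loop $M(k_1)$. The $CS$ hypothesis is what forces the topological obstructions (winding of $\det M$, and the finer $U(N)/U(1)$ part) to vanish, so $M$ can be deformed into a constant matrix which is then absorbed into the basis. The delicate point is producing an explicit \emph{analytic} trivialization rather than a merely continuous one, together with preservation of $CS$ throughout; this is the technical core captured by Proposition~\ref{teorema-u}.
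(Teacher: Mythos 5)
Most of your reductions track the paper's strategy, but parts (ii) and (iii) have concrete problems, and (i) is only given in a special case.

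For part (ii), the assertion that a finite subcover of the torus yields ``a finite-dimensional $\x$-independent subspace containing all fibers'' is false: if $\{\x_1,\dots,\x_M\}$ is the finite net and $H := {\rm span}\bigcup_i {\rm Ran}\,P(\x_i)$, then for $\x$ off the net the fiber ${\rm Ran}\,P(\x)$ is generally \emph{not} contained in $H$ (take any smoothly rotating rank-one $P(\x)$). Knowing that $\{P(\x)e^{(i)}_j\}_j$ spans ${\rm Ran}\,P(\x)$ does not put ${\rm Ran}\,P(\x)$ inside $H$, because $P(\x)e^{(i)}_j$ is not in $H$. What the paper does instead is form $P_\epsilon(\x)=\Pi_\epsilon P(\x)\Pi_\epsilon$ (with $\Pi_\epsilon$ the projector onto $H$), which is only \emph{approximately} a projection, correct it to a genuine projection $\tilde P_\epsilon(\x)\subset H$ via a Riesz integral, solve the finite-dimensional problem for $\tilde P_\epsilon$, and then transfer the basis back to ${\rm Ran}\,P(\x)$ with the Sz.-Nagy unitary $U_\epsilon(\x)$ of \eqref{dech11}. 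Without this intertwining step your reduction collapses; the $K$-invariance of $H$ and of the construction (Lemma \ref{lema00}) must then also be checked.

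For part (iii), replacing $\Xi_j^0$ by its truncated Fourier series is problematic: Dirichlet partial sums of a merely \emph{continuous} periodic function need not converge uniformly. You should either use Fej\'er/Ces\`aro means (which are also trigonometric polynomials and do converge uniformly), or, as the paper does, convolve with an approximate identity $g_\delta$ that extends analytically to a strip; the convolution is then analytic, and projecting with the analytic $P(\x)$ plus a Gram--Schmidt with Gram matrix close to ${\rm Id}$ finishes the argument. Your proposal to restore $CS$ by averaging $\tilde\Xi_j(\x)$ with $K\tilde\Xi_j(-\x)$ is a valid alternative to the paper's use of an even kernel $g_\delta(\x)=g_\delta(-\x)$, provided you re-project and re-orthonormalize afterwards and check the averaged vectors remain linearly independent. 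For part (i), your $V(\x)$ is only exhibited in the concrete Bloch--Zak picture (and with the opposite sign to what the paper's conventions for $\tau_\lambda$ require: one wants $V(\x)=e^{+2\pi i\x\cdot\yy}$ so that $V(\x+\lambda)=\tau_\lambda V(\x)$); the paper's Problem~\ref{problema4} is abstract, and the general construction of the family $u_\x$ uses the spectral theorem applied to the commuting unitaries $\tau_{f_j}$ to build commuting bounded self-adjoint generators $M_j$ with $Ku_\x K=u_{-\x}$. Part (iv) matches the paper (parallel transport plus the correction $e^{-2\pi ikA}$, with $A$ commuting with $P(0)$). For part (v), your high-level picture is accurate, but it is worth noting that the paper's mechanism is not a homotopy of the matching matrix $\beta(k_2)$ to a constant; instead it constructs an explicit two-step unitary rotation $u(x,k_2)=e^{-ixh(k_2)}e^{-ix\tilde h(k_2)}$ with $e^{-\frac{i}{2}\tilde h}e^{-\frac{i}{2}h}\beta e^{-\frac{i}{2}h}e^{-\frac{i}{2}\tilde h}={\rm Id}$, where the first ``good'' logarithm $h$ comes from approximating $\beta$ by analytic $CS'$ unitaries with nondegenerate spectrum at $0$ and $1/2$ (Lemma \ref{propo-dec-1}, relying on the Analytic Rellich Theorem) and the second comes from a global Cayley transform on the residual, nearly-identity matching matrix.
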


\vspace{0.2cm}

\begin{remark}\label{remarcamartie}
Let us try to explain very roughly the main ideas behind the proof of { (v)}. The construction is inductive in $d$; using { (iv)} we can construct some basis vectors $\Psi_j(k_1,k_2)$ which are continuous in $\x$ but periodic only in $k_2$. Due to the periodicity of the projection $P(\x)$, the basis at $k_1=1/2$ is related to the basis at $k_1=-1/2$ by a matching $N\times N$ unitary matrix $\beta(k_2)$ which is  continuous and periodic in $k_2$, and whose matrix elements have the symmetry $[\beta(k_2)]_{mn}=[\beta(-k_2)]_{nm}$. The main question is how to rotate each $\Psi_j(k_1,k_2)$ using $\beta(k_2)$ such that the new vectors to be equal at $k_1=\pm 1/2$. This is done in Proposition \ref{teorema-u}, a result interesting in itself. Also, a related question is finding sufficient conditions for a matrix like $\beta(k_2)$ to admit a  periodic {\bf and } continuous logarithm in $k_2$. The main obstacle to the existence of such a logarithm is the possible crossing of eigenvalues of $\beta(k_2)$ when $k_2$ varies, see \eqref{nologaritm} for a generic example. One of the crucial ingredients in order to circumvent the crossing problem is the Analytic Rellich Theorem \cite{R-S4} which allows us to continuously follow the eigenvalues through crossings, also preserving periodicity in $k_2$. 
\end{remark}

\begin{remark}\label{remarcamartie2}
The lack of a Rellich Theorem in more than one dimension is also the main reason for which our proof cannot be generalized to $d=3$. We conjecture though that a variant of Proposition \ref{teorema-u} remains true if $d=3$, but the proof becomes much more technical and seems to be necessary to appeal to `avoided crossing' techniques involving Sard's lemma. This construction will be done elsewhere.
\end{remark}

\subsection{The non-zero magnetic field case}\label{subs1.4}

Our second main result of this paper concerns the question as to whether the existence of Wannier-type localized bases is stable under the application of a (weak) non-decaying magnetic field. For simplicity, we only work in $\R^2$. The setting is as follows.  We assume that there exists a background magnetic field, orthogonal to the plane, whose only non-zero component is denoted with $B_0(\xy)$ and obeys:
\begin{align}\label{M-4}
||B_0||_{C^1(\R^2)}<\infty.
\end{align} 
We can always construct its associate transverse gauge: 
\begin{align}\label{M-3}
\A_0(\xy):=\int_0^1 sB_0(s\xy)ds\;{ {\bf x}^{\perp},\quad {\bf x}^\perp :=(x_2,-x_1).}
\end{align} 
Let $V$ be a bounded, real scalar potential and consider the Hamiltonian
\begin{align}\label{M-1}
H_0=(-i\nabla -\mathbf{A}_0)^2+V.
\end{align}  
Assume that the Hamiltonian $H_0$ has an isolated spectral island $\sigma_0$ and let $P_0$ be its corresponding spectral projection. We also assume that the subspace ${\rm Ran}(P_0)$ has an {(exponentially localized) generalized Wannier basis}; let us explain in detail the meaning of this. Let $\Gamma\subset \R^2$ be an infinite set of points such that 
$$\inf\{||\gamma-\gamma'||:\; \gamma,\gamma' \in\Gamma,\; \gamma\neq\gamma'\} >0.$$ 
The above condition implies that $\Gamma$ is countable; in particular, $\Gamma$ can be a periodic lattice. Let $N<\infty$ and assume that the set of functions 
$$w_{j,\gamma}\in {\rm Ran}(P_0),\quad 1\leq j\leq N,\; \gamma\in\Gamma $$
 form an orthonormal basis of ${\rm Ran}(P_0)$. We also assume that there exist $\alpha>0$ and $M<\infty$ such that  
\begin{align}\label{M-2}
\sup_{j,\gamma}\int_{\R^2} |w_{j,\gamma}(\xy)|^2 e^{2\alpha ||\xy-\gamma||}d\xy <M.
\end{align}
Now let us consider a magnetic field perturbation given by $B(\xy)$ such that 
\begin{align}\label{M-5}
||B||_{C^1(\R^2)}\leq 1,\quad \A(\xy):=\int_0^1 sB(s\xy)ds\; {(x_2,-x_1)},
\end{align}
and introduce the perturbed Hamiltonian:
\begin{align}\label{M-7}
H_b:=(-i\nabla -\A_0 -b\A)^2+V,\quad b\in \R.
\end{align}
Below we list a few known facts about the resolvent of $H_b$, see for details \cite{N3, C, CN}:
\begin{itemize}
\item the resolvent $(H_b-z)^{-1}$ has a Schwartz kernel denoted by $(H_b-z)^{-1}(\xy,\xy')$ which is jointly continuous outside the diagonal, has a singularity of the type 
$-\ln (||\xy-\xy'||)$ near the diagonal, and there exist some $C(z)<\infty$ and  $\alpha(z)>0$ such that $|(H_b-z)^{-1}(\xy,\xy')|\leq C(z) e^{-\alpha(z) ||\xy-\xy'||}$ if $||\xy-\xy'||\geq 1$. 
\item Fix a compact $K\subset \rho(H_0)$. Then there exist $b_0>0$, $\alpha >0$ and $C<\infty$ such that for every $0\leq |b|\leq b_0$ we have $K\subset \rho(H_b)$ and uniformly in $\xy\neq \xy'$:
\begin{equation}\label{M-8}
\sup_{z\in K}\left |(H_b-z)^{-1}(\xy,\xy')-e^{ib\phi(\xy,\xy')}(H_0-z)^{-1}(\xy,\xy')\right |\leq C\;|b| \;e^{-\alpha ||\xy-\xy'||},
\end{equation}
where 
\begin{equation}\label{M-9}
\phi(\xy,\xy'):=\int_0^1 \A(\xy'+s(\xy-\xy'))\cdot (\xy-\xy')ds=-\phi(\xy',\xy).
\end{equation}
\end{itemize}
In particular, if $|b|$ is small enough then $H_b$ has an isolated spectral island $\sigma_b$ close to $\sigma_0$, see also \cite{N1} for a different proof of this fact. Denote by $P_b$ the corresponding projection. Here is the second main result of our paper:
\begin{theorem}\label{teorema2} ${}$ 

\noindent {\rm (i)}. If $|b|$ is sufficiently small, then there exist a constant $C<\infty $, a family of continuous functions $\Xi_{j,\gamma,b}$ and $\alpha>0$ satisfying 
\begin{equation}\label{M-10}
|\Xi_{j,\gamma,b}(\xy)|\leq C e^{-\alpha||\xy-\gamma||}
\end{equation}
such that $\{\Xi_{j,\gamma,b}\}_{j\in \{1,...,N\},\gamma\in \Gamma}$ is an orthonormal basis of ${\rm Ran}(P_b)$ and 
\begin{equation}\label{M-11}
\sup_{\gamma,j}||\Xi_{j,\gamma,b}-e^{ib\phi(\cdot,\gamma)}w_{j,\gamma}||\leq C \; |b|.
\end{equation}

\noindent {\rm (ii)}. Moreover, if $\Gamma$ is a periodic lattice, $B_0=0$, $V$ is $\Gamma$-periodic, $B$ is constant and 
 \begin{equation}\label{M-12}
w_{j,\gamma}(\xy)=w_{j,0}(\xy-\gamma)=\overline{w_{j,\gamma}(\xy)},
\end{equation}
then
\begin{equation}\label{M-13}
\Xi_{j,\gamma,b}(\xy)=e^{ib\phi(\xy,\gamma)}\Xi_{j,0,b}(\xy-\gamma)
\end{equation}
and 
\begin{equation}\label{M-14}
\overline{\Xi_{j,0,b}(\xy)}=\Xi_{j,0,-b}(\xy).
\end{equation}
\end{theorem}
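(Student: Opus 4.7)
The strategy is a L\"owdin (symmetric) orthogonalization of magnetically dressed trial functions, carried out directly in real space: since $\Gamma$ need not be periodic, no Bloch--Floquet decomposition is available. I begin with the Peierls-dressed trial functions
$$\tilde w_{j,\gamma,b}(\xy) := e^{ib\phi(\xy,\gamma)}\, w_{j,\gamma}(\xy),$$
which inherit the exponential localization around $\gamma$ from \eqref{M-2}. Integrating the resolvent bound \eqref{M-8} along a Cauchy contour around $\sigma_0$ (shrinking $b_0$ further if needed) produces the projection-kernel estimate
$$\bigl|P_b(\xy,\xy') - e^{ib\phi(\xy,\xy')}P_0(\xy,\xy')\bigr| \leq C|b|\, e^{-\alpha\|\xy-\xy'\|}.$$
I then set $u_{j,\gamma,b} := P_b\tilde w_{j,\gamma,b} \in \mathrm{Ran}(P_b)$. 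Using $P_0 w_{j,\gamma}=w_{j,\gamma}$ together with the triangle identity $\phi(\xy,\xy')+\phi(\xy',\gamma)-\phi(\xy,\gamma)$ equal to the flux of $B$ through the triangle $(\xy,\xy',\gamma)$ (uniformly bounded, $O(b)$-small after multiplication by $b$), one verifies $u_{j,\gamma,b}=\tilde w_{j,\gamma,b}+b\,r_{j,\gamma,b}$ with $|r_{j,\gamma,b}(\xy)|\leq C e^{-\alpha\|\xy-\gamma\|}$.

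The core analytic step is the control of the Gram matrix $G_{(j,\gamma),(j',\gamma')}:=\langle u_{j,\gamma,b},u_{j',\gamma',b}\rangle = \langle \tilde w_{j,\gamma,b}, P_b\tilde w_{j',\gamma',b}\rangle$. Writing $G=I+bR$ and repeating the above kernel manipulations, together with the exponential localization of $w_{j,\gamma}$ and $w_{j',\gamma'}$, yields an off-diagonal bound $|R_{(j,\gamma),(j',\gamma')}|\leq C e^{-\alpha'\|\gamma-\gamma'\|}$. Schur's test on the index set $\{1,\ldots,N\}\times\Gamma$ (using the uniform lower bound on point spacing of $\Gamma$) then gives $\|R\|_{\ell^2}\leq C'$, so $\|G-I\|=O(b)$. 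For $|b|$ small, $G^{-1/2}$ is therefore well defined via a convergent series in $G-I$, and by inverse-closedness of the class of exponentially off-diagonal matrices near the identity (a Jaffard/Combes--Thomas type statement, possibly with mild loss in the rate) this $O(b)$ off-diagonal exponential decay is transferred to $G^{-1/2}-I$. I then set
$$\Xi_{j,\gamma,b}:=\sum_{j',\gamma'}[G^{-1/2}]_{(j,\gamma),(j',\gamma')}\, u_{j',\gamma',b},$$
which by construction is an orthonormal basis of $\mathrm{Ran}(P_b)$. Combining the exponential decay of the L\"owdin coefficients with that of the $u_{j',\gamma',b}$ yields \eqref{M-10}, while the $O(b)$ estimates on $u-\tilde w$ and on $G^{-1/2}-I$ yield \eqref{M-11}.

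For part (ii), with $B_0=0$, $V$ periodic and $B$ constant, $H_b$ is covariant under the magnetic translations $(\tau_\gamma^b f)(\xy):=e^{ib\phi(\xy,\gamma)}f(\xy-\gamma)$, $\gamma\in\Gamma$; hence so is $P_b$. The hypothesis $w_{j,\gamma}(\xy)=w_{j,0}(\xy-\gamma)$ gives $\tilde w_{j,\gamma,b}=\tau_\gamma^b\tilde w_{j,0,b}$ and therefore $u_{j,\gamma,b}=\tau_\gamma^b u_{j,0,b}$. This covariance propagates through $G$ and through $G^{-1/2}$ (whose matrix elements depend on $(\gamma,\gamma')$ only through $\gamma-\gamma'$ up to the magnetic cocycle), producing \eqref{M-13}. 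Identity \eqref{M-14} comes from $\overline{H_b}=H_{-b}$ (complex conjugation flips $-i\nabla$ but preserves real $V$ and real $b\A$), so $\overline{P_b}=P_{-b}$; together with $\overline{w_{j,0}}=w_{j,0}$ and $\overline{\phi}=\phi$ this implies $\overline{u_{j,0,b}}=u_{j,0,-b}$, $\overline{G(b)}=G(-b)$, and hence $\overline{\Xi_{j,0,b}}=\Xi_{j,0,-b}$.

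The delicate point I expect is the Gram-matrix step: transferring off-diagonal exponential decay from $G-I$ to $G^{-1/2}$ while retaining both the $O(b)$ quantitative bound and a rate $\alpha''$ close to $\alpha$. The cleanest route is probably a Combes--Thomas argument applied directly to $G$ viewed as an operator on $\ell^2(\{1,\ldots,N\}\times\Gamma)$, but some loss in $\alpha$ must be allowed. A secondary technical point is to verify that the triangle-flux bound really suppresses the magnetic phase oscillations in the overlap integrals, so that $R$ is genuinely exponentially off-diagonal and not merely small in $\ell^2$; this is what makes Schur's test applicable and is what unlocks the whole scheme.
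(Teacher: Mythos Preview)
Your approach is correct and close in spirit to the paper's, but with one structural difference worth noting. You project the dressed trial functions onto $\mathrm{Ran}(P_b)$ first and then L\"owdin-orthogonalize; the paper instead orthogonalizes the dressed trial functions \emph{within their own span} $S_b$ (obtaining an intermediate orthonormal basis $\Psi_{j,\gamma,b}$ and an associated projection $\Pi_b$), proves $\|\Pi_b-P_b\|=O(b)$ with exponentially localized kernel difference, and only then transports to $\mathrm{Ran}(P_b)$ via the Sz.-Nagy unitary $U_b$. Your one-step route is more economical, and for part (ii) it pays off: since $P_b$ already commutes with magnetic translations, $u_{j,\gamma,b}=\tau_\gamma^b u_{j,0,b}$ is immediate, whereas the paper must separately check that $\Pi_b$ (and hence $U_b$) commutes with $\tau_\gamma^b$.

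There is, however, a point you pass over too quickly: the phrase ``by construction an orthonormal basis of $\mathrm{Ran}(P_b)$'' hides a completeness argument. L\"owdin gives you an orthonormal system spanning the closed linear span of the $u_{j,\gamma,b}$, and you must still show this span is all of $\mathrm{Ran}(P_b)$. The cleanest fix is exactly what the paper's intermediate $\Pi_b$ provides: if $f\in\mathrm{Ran}(P_b)$ is orthogonal to every $u_{j,\gamma,b}$ then $\langle f,\tilde w_{j,\gamma,b}\rangle=\langle P_bf,\tilde w_{j,\gamma,b}\rangle=0$ for all $j,\gamma$, so $\Pi_b f=0$, whence $\|f\|=\|(P_b-\Pi_b)f\|\leq\|P_b-\Pi_b\|\,\|f\|$ forces $f=0$ once $\|P_b-\Pi_b\|<1$. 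So you do implicitly need the estimate $\|P_b-\Pi_b\|=O(b)$ even in your streamlined scheme. Finally, the transfer of exponential off-diagonal decay from $G-I$ to $G^{-1/2}-I$ that you flag as delicate is in fact routine here: since $\|G-I\|=O(b)$, the Neumann series for $(I+(G-I))^{-1/2}$ converges, and iterating the Schur bound term by term (as the paper does in its Lemma on $M_b$) gives the decay with an explicit $O(b)$ prefactor and no loss in the exponential rate beyond what the series summation costs.
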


\vspace{0.5cm}

\begin{remark}\label{remarca-i}
When $N=1$ (the case of a simple band), Theorem \ref{teorema2}(ii) was 
proved in \cite{N2}. In the current manuscript we use a strategy which is related to the one of \cite{N2}; what we add new is a significant generalization of the results and a substantial simplification of the proof due to the regularized magnetic perturbation theory as developed in \cite{C, N3}. In fact, the estimate \eqref{M-8} is the main new ingredient of the proof of  Theorem \ref{teorema2}. 

Theorem  \ref{teorema2} is interesting in itself but it also plays an important role in the so-called Peierls-Onsager substitution at small magnetic fields \cite{Pe, Lu, PST, FT}. These aspects will be treated elsewhere.    
\end{remark}

\section{Proof of Theorem \ref{metateorema}}\label{sec2}

\subsection{Reduction of Problem \ref{problema4} to the first three}\label{reducere4}

We first show that there exists a family of unitary operators $u_\x$ with { $\x\in\mathbb{C}^d$} such that $u_\lambda=\tau_\lambda$ if $\lambda\in 
\mathbb{Z}^d$, { the map 
$\mathbb{C}^d\ni \x \mapsto  u_{\x} \in \mathcal{L}(\mathcal{H})$ is entire,}
and:
\begin{equation}\label{septemb3}
u_0={\rm Id}, \; u_{\x}u_{\x'} =u_{\x+\x'}(=u_{\x'} u_{\x}),\; u_{\x}^*=u_{\x}^{-1}(=u_{-\x}),\; K u_{\x} =u_{-\x}K=u_{\x}^{-1}K,\quad \forall \x,\x'\in\R^d.
\end{equation}
Denote by $f_j$ the vectors of the standard basis in $\R^d$. The operator $\tau_{f_j}$ is unitary hence the spectral theorem implies that it can be written as 
\begin{equation}\label{septemb4}
\tau_{f_j}=\int_{(-\pi,\pi]} e^{i\phi}dE_j(\phi).
\end{equation}
The operator $M_j= \int_{(-\pi,\pi]} \phi\; dE_j(\phi)$ is bounded, self-adjoint, and $\tau_{f_j}=e^{iM_j}$. 

If $F$ is any $2\pi$-periodic smooth function we define  $\widetilde{F}(e^{it})=F(t)$ and we have
\begin{equation}\label{septemb5}
\widetilde{F}(\tau_{f_j})=\sum_{m\in \mathbb{Z}}\hat{F}(m) (\tau_{f_j})^m,\quad \hat{F}(m):=\frac{1}{2\pi}\int_{-\pi}^\pi F(t)e^{-it m}dt.
\end{equation}
We have $K\tau_{f_j}^mK=(\tau_{f_j})^{-m}$. If $F$ is real, then $\overline{\hat{F}(m)}=\hat{F}(-m)$. This leads to $K\widetilde{F}(\tau_{f_j})K=\widetilde{F}(\tau_{f_j})$ for all $2\pi$-periodic smooth and real functions $F$. By a limiting argument we conclude that the same remains true for the spectral measure of $\tau_{f_j}$, hence $KM_jK=M_j$. Moreover, because the set $\{\tau_{f_j}\}_{j=1}^d$ consists of commuting operators, the same is true for their spectral measures and for $\{M_j\}_{j=1}^d$. Now if $\x=[k_1,...,k_d]\in\R^d$ we define 
$$u_\x:=e^{i (k_1 M_1+...+k_d M_d)}.$$
One can verify that \eqref{septemb3} is obeyed and {because the generators $M_j$ are bounded, the map is also entire}. Now if we define 
$P(\x):=u_{\x}^{-1}\Pi(\x)u_\x$ we see that $P(\cdot)$ is $\mathbb{Z}^d$-periodic, has the same smoothness properties as $\Pi(\cdot)$ and $KP(\x)K=P(-\x)$. Assuming that we have a solution $\{\Xi_j(\x)\}_{j=1}^N$ for Problem \ref{problema3}, then we can easily check that $\Psi_j(\x):=u_\x \Xi_j(\x)$ is a solution for Problem \ref{problema4}.

\subsection{Reduction to finite dimensional Hilbert spaces}
\label{finidimi}

\begin{lemma}\label{lema0}
Fix $d$ and assume that Problem \ref{problema1}  
can be solved if the complex Hilbert space 
has a finite dimension. Then it can also be solved in any 
infinitely dimensional separable complex Hilbert space.  
\end{lemma}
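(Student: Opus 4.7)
The strategy is to approximate the infinite-dimensional problem by a finite-dimensional one on a suitable subspace $V\subset\mathcal{H}$, chosen so that the orthogonal projection $P_V$ onto $V$ is uniformly close to the identity on $\mathrm{Ran}(P(\x))$ for every $\x\in\R^d$, and then lift the finite-dimensional solution back.

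First I would use norm-continuity of $\x\mapsto P(\x)$ and compactness of the torus $\mathbb{T}^d=\R^d/\Z^d$ to fix a sufficiently small $\epsilon>0$ and to select finitely many points $\x_1,\dots,\x_k$ for which the open sets $\{\x:\|P(\x)-P(\x_j)\|<\epsilon\}$ cover $\mathbb{T}^d$. Set $V:=\sum_{j=1}^k \mathrm{Ran}(P(\x_j))$, a subspace of $\mathcal{H}$ of dimension at most $kN$, and let $P_V$ be the orthogonal projection onto $V$. Because $\mathrm{Ran}(P(\x_j))\subset V$ gives $P_V P(\x_j)=P(\x_j)$, a two-term triangle inequality immediately yields
\[
\|P_V P(\x)-P(\x)\|<2\epsilon\qquad\text{uniformly in }\x\in\R^d.
\]

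Next I would introduce the auxiliary family $B(\x):=P_V P(\x)P_V$, viewed as a $C^n$ and $\Z^d$-periodic family of self-adjoint operators on the \emph{finite-dimensional} space $V$. A direct algebraic manipulation gives $B(\x)-B(\x)^2=P_VP(\x)(I-P_V)P(\x)P_V$, whence $\|B(\x)^2-B(\x)\|<2\epsilon$, so that for $\epsilon$ small enough the spectrum of $B(\x)$ is contained in two disjoint neighborhoods of $0$ and $1$, uniformly in $\x$. Since the non-zero eigenvalues of $B(\x)$ coincide with multiplicity with those of $A(\x):=P(\x)P_VP(\x)$, and since on $\mathrm{Ran}(P(\x))$ one has $\|v-A(\x)v\|\le 2\epsilon\|v\|$, the operator $B(\x)$ must have exactly $N$ eigenvalues clustered near $1$. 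The Riesz projection
\[
Q(\x):=\frac{1}{2\pi i}\oint_{|z-1|=1/2}(z-B(\x))^{-1}\,dz
\]
is then a rank-$N$, $\Z^d$-periodic, $C^n$ projection-valued map acting on the finite-dimensional space $V$, and the hypothesis of the lemma delivers a $C^n$, periodic orthonormal basis $\{\eta_j(\x)\}_{j=1}^N$ of $\mathrm{Ran}(Q(\x))$.

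Finally I would pull this basis back into $\mathrm{Ran}(P(\x))$ by setting $\tilde\Xi_j(\x):=P(\x)\eta_j(\x)$. Since $\eta_j(\x)\in V$, we have $P_VP(\x)\eta_j(\x)=B(\x)\eta_j(\x)$, and since $B(\x)$ is invertible on $\mathrm{Ran}(Q(\x))$, the vectors $\{\tilde\Xi_j(\x)\}_{j=1}^N$ are linearly independent and therefore form a basis of the $N$-dimensional space $\mathrm{Ran}(P(\x))$; they also automatically inherit $C^n$ regularity and $\Z^d$-periodicity from the ingredients of the construction. A standard Gram-Schmidt orthonormalization, which is smooth as long as the input vectors are pointwise linearly independent, then yields the required orthonormal basis $\{\Xi_j(\x)\}$. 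The only delicate step is the very first one, namely the uniform approximation of $P(\x)$ by $P_V P(\x)$ using a \emph{finite-dimensional} $V$; this is precisely what compactness of $\mathbb{T}^d$ combined with norm-continuity of $P(\cdot)$ guarantees.
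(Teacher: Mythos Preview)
Your proposal is correct and follows essentially the same route as the paper: choose a finite net of points by compactness, set $V$ (the paper's $H_\epsilon$) to be the span of the ranges $P(\x_j)$, compress to $B(\x)=P_VP(\x)P_V$ (the paper's $P_\epsilon(\x)$), take the Riesz projection around $z=1$ to obtain a genuine rank-$N$ projection $Q(\x)$ on $V$ (the paper's $\tilde P_\epsilon(\x)$), and solve the finite-dimensional problem there. The only cosmetic difference is in the final lifting step: the paper maps the finite-dimensional basis back to $\mathrm{Ran}(P(\x))$ using the Sz.-Nagy unitary intertwiner between $Q(\x)$ and $P(\x)$, whereas you apply $P(\x)$ directly and then Gram--Schmidt; both procedures are standard and preserve $C^n$ regularity and periodicity.
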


\begin{proof}
Let $L\geq 1$ be an integer and consider a net of $(2L+1)^d$ points 
$$\mathcal{N}_L:=\{\x_n\in [-1/2,1/2]^d:\; \x_n:=[n_1/(2L),\dots n_d/(2L)],\; -L\leq n_j\leq L, \; n_j\in\Z, \;1\leq j\leq d\}.$$

Since $P(\cdot)$ is continuous and $\mathbb{Z}^d$ periodic, then for every $\epsilon>0$ there exists $L_\epsilon$ 
large enough such that for every $\x\in \R^d$ there exists some 
$\x_n\in \mathcal{N}_{L_\epsilon}$ such that $||P(\x)-P(\x_n)||\leq
\epsilon$. Now define the finite dimensional subspace (to be
understood as a sum of linear subspaces)
$$H_\epsilon:=\sum_{\x_n\in \mathcal{N}_{L_\epsilon}}{\rm Ran}(P(\x_n)),$$
and denote by $\Pi_\epsilon$ its corresponding orthogonal projection. Clearly, $\Pi_\epsilon P(\x_n)=P(\x_n)\Pi_\epsilon=P(\x_n)$ 
for all $\x_n$. Define 
$P_\epsilon(\x):=\Pi_\epsilon P(\x) \Pi_\epsilon$. For every $\x\in \R^d$
we have (the choice of $\x_n$ below is such that $||P(\x)-P(\x_n)||\leq
\epsilon$):
$$P_\epsilon(\x)-P(\x)=\Pi_\epsilon \{P(\x)-P(\x_n)\} \Pi_\epsilon+P(\x_n)-P(\x)$$
hence $||P_\epsilon(\x)-P(\x)||\leq 2\epsilon$.  It means that 
$P_\epsilon(\x)$ is close to a projection. If
$\epsilon$ is small enough, then the operator: 

$$\tilde{P}_\epsilon(\x):=-\frac{1}{2\pi i}\int_{|z-1|=1/2}(P_\epsilon(\x)-z)^{-1}dz$$ 
is an orthogonal projection whose range is contained in $H_\epsilon$ (to see this use  $(P_\epsilon(\x)-z)^{-1}+z^{-1}=z^{-1}P_\epsilon(\x)(P_\epsilon(\x)-z)^{-1}$ in the above formula). Using the resolvent formula
$$\tilde{P}_\epsilon(\x)-P(\x)=\frac{1}{2\pi i}\int_{|z-1|=1/2}(P_\epsilon(\x)-z)^{-1}(P_\epsilon(\x)-P(\x))(P(\x)-z)^{-1}dz$$
we obtain $||\tilde{P}_\epsilon(\x)-P(\x)||\leq {\rm const}\;\epsilon$. The dimension of $P(\x)$ is constant and equal 
to $N$, thus the dimension of $\tilde{P}_\epsilon(\x)$ also equals $N$ for
all $\x$, provided $\epsilon$ is small enough. This is true because 
we can construct the Sz.-Nagy unitary which intertwines between $\tilde{P}_\epsilon(\x)$ and $P(\x)$:
\begin{equation}\label{dech11}
U_\epsilon(\x):=\{P(\x)\tilde{P}_\epsilon(\x)+({\rm Id}-P(\x))({\rm Id}-\tilde{P}_\epsilon(\x))\}
\{{\rm Id}-(\tilde{P}_\epsilon(\x)-P(\x))^2\}^{-1/2},
\end{equation}
with $P(\x)U_\epsilon(\x)=U_\epsilon(\x)\tilde{P}_\epsilon(\x)$.

To summarize: we constructed a projection valued map
$\tilde{P}_\epsilon(\cdot)$ with dimension $N$ which enjoys the same
smoothness and periodicity properties as $P(\cdot)$, 
and lives in the finite dimensional space $H_\epsilon$. Our
assumption was that we can solve Problem \ref{problema1} in finite dimensional
Hilbert spaces. 
Hence let $\{\tilde{\Psi}_{j,\epsilon}(\x)\}_{j=1}^N$ be an orthonormal basis of 
${\rm Ran}(\tilde{P}_\epsilon(\x))$ consisting of smooth and periodic
vectors. Then a solution to our problem in the infinite dimensional
Hilbert space will be given by the vectors:
$$\Psi_j(\x):=U_\epsilon(\x)\tilde{\Psi}_{j,\epsilon}(\x).$$\end{proof}

\vspace{0.5cm}

Lemma \ref{lema0} showed how to reduce the problem to a finite dimensional 
Hilbert space. We will now show that the conjugation symmetry is also preserved by the previous construction. 

\begin{lemma}\label{lema00} We refer to the quantities defined in Lemma \ref{lema0}. 
Assume that $P(\cdot)$ has the $CS$ property. Then both  $\tilde{P}_\epsilon(\cdot)$ and $U_\epsilon(\cdot)$ have the same property. Moreover, if 
$K\tilde{\Psi}_{j,\epsilon}(\x)=\tilde{\Psi}_{j,\epsilon}(-\x)$, the same property holds for $\Psi_j(\x)$. 
\end{lemma}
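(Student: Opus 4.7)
The plan is to follow the antiunitary $K$ through each of the three objects constructed in Lemma \ref{lema0}: the projection $\Pi_\epsilon$ onto $H_\epsilon$, the Riesz regularization $\tilde{P}_\epsilon$, and the Sz.-Nagy intertwiner $U_\epsilon$, verifying at each step that the relation $K\,\cdot(\x)\,K = \cdot(-\x)$ is preserved. The crucial starting observation is that the discrete net $\mathcal{N}_{L_\epsilon}$ is invariant under $\x_n\mapsto -\x_n$, because the index range $-L\le n_j\le L$ is symmetric. Combined with the CS hypothesis $K\,\mathrm{Ran}(P(\x_n)) = \mathrm{Ran}(P(-\x_n))$, this forces $K H_\epsilon = H_\epsilon$; and since $K$ is antiunitary with $K^2=\mathrm{Id}$, it preserves the orthogonal complement as well. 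Hence $K\Pi_\epsilon K = \Pi_\epsilon$, and directly $K P_\epsilon(\x) K = \Pi_\epsilon\,KP(\x)K\,\Pi_\epsilon = P_\epsilon(-\x)$.

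The next task is to transport CS through the Riesz integral for $\tilde{P}_\epsilon$. I would use the basic identity $K(A-z)^{-1}K = (KAK-\bar z)^{-1}$, valid for any linear $A$ and scalar $z$. Applying $K\cdot K$ to the contour formula, antilinearity conjugates both the scalar prefactor $-\tfrac{1}{2\pi i}$ and the differential $dz$; the contour $\Gamma=\{|z-1|=1/2\}$ equals its own complex conjugate as a set but is traversed with reversed orientation. The two induced sign flips cancel, and after the substitution $w=\bar z$ one arrives at $K\tilde{P}_\epsilon(\x)K = \tilde{P}_\epsilon(-\x)$. I expect this bookkeeping to be the only place where a sign mistake is genuinely easy to make; everything else is formal algebra using $K^2=\mathrm{Id}$.

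For $U_\epsilon$, I would apply $K\cdot K$ termwise in formula \eqref{dech11}. Since $A\mapsto KAK$ is a $*$-automorphism of $\mathcal{L}(\mathcal{H})$ (mapping positive operators to positive operators and commuting with continuous functional calculus, in particular with the inverse square root), the CS relations already established for $P$ and $\tilde{P}_\epsilon$ propagate through both factors of \eqref{dech11} and yield $K U_\epsilon(\x) K = U_\epsilon(-\x)$. Finally, the claim about the basis vectors is a one-line insertion of $K^2 = \mathrm{Id}$: with $\Psi_j(\x) = U_\epsilon(\x)\tilde{\Psi}_{j,\epsilon}(\x)$ and the hypothesis $K\tilde{\Psi}_{j,\epsilon}(\x) = \tilde{\Psi}_{j,\epsilon}(-\x)$, one computes
\[
K\Psi_j(\x) = \bigl(K U_\epsilon(\x) K\bigr)\bigl(K\tilde{\Psi}_{j,\epsilon}(\x)\bigr) = U_\epsilon(-\x)\,\tilde{\Psi}_{j,\epsilon}(-\x) = \Psi_j(-\x),
\]
which completes the proof.
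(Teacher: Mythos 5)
Your proposal is correct and follows essentially the same route as the paper: symmetry of the net $\mathcal{N}_{L_\epsilon}$ gives $K\Pi_\epsilon = \Pi_\epsilon K$, hence $CS$ for $P_\epsilon$, then transport through the Riesz integral and the Sz.-Nagy formula. You are somewhat more explicit than the paper about the orientation/sign bookkeeping in the contour integral and you supply the final one-line verification for $\Psi_j$; one small nitpick is that $A\mapsto KAK$ is a \emph{conjugate-linear} $*$-automorphism rather than a $*$-automorphism in the usual (linear) sense, but since it still commutes with continuous functional calculus by real functions such as $t\mapsto t^{-1/2}$, the argument for $U_\epsilon$ goes through unchanged.
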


\begin{proof}  First we prove that the subspace $H_\epsilon$ is invariant with respect to the action of $K$, i.e. $\Pi_\epsilon$ commutes with $K$. We constructed the grid $\mathcal{N}_{L_\epsilon}$ so that both $\x_n$ and $-\x_n$ belong to it for all $n$.  { Due to the $CS$ property we have $P(-\x_n)K=KP(\x_n)$, which shows that $K{\rm Ran}(P(\x_n))\subset {\rm Ran}(P(-\x_n))$, hence $K(H_\epsilon)\subset H_\epsilon$. Using $K^2={\rm Id}$ we also obtain $H_\epsilon\subset K(H_\epsilon)$, thus $K(H_\epsilon) = H_\epsilon$. 

Now let $f\in H_\epsilon$ and $g\in H_\epsilon^\perp$ be arbitrary. Then using the anti-unitarity of $K$ together with $Kf\in H_\epsilon$, we obtain:
$$\langle K g| f\rangle=\langle Kf|g\rangle =0.$$
This shows that $K(H_\epsilon^\perp)\subset H_\epsilon^\perp$ and in fact $K(H_\epsilon^\perp)= H_\epsilon^\perp$. Then:
$$\Pi_\epsilon K ({\rm Id} -\Pi_\epsilon)=({\rm Id} -\Pi_\epsilon)K\Pi_\epsilon =0.$$
This implies that $\Pi_\epsilon K=K\Pi_\epsilon$. It follows that $P_\epsilon(\x)=\Pi_\epsilon P(\x) \Pi_\epsilon$ has the $CS$ property. We see that 
$$(P_\epsilon(\x)-z)^{-1}K\Psi =K(P_\epsilon(-\x)-\overline{z})^{-1} \Psi$$
which after integration leads to the conclusion that $\tilde{P}_\epsilon(\x)$ has the $CS$ symmetry. 

The last thing we need to prove is that the Sz.-Nagy unitary $U_\epsilon(\x)$ also has $CS$ symmetry. This is immediately implied by $KP(\x)K=P(-\x)$ and $K\tilde{P}_\epsilon(\x)K=\tilde{P}_\epsilon(-\x)$,  
which ends the proof.

}
\end{proof}

\vspace{0.5cm}

It is possible to construct an orthonormal basis in $H_\epsilon$ such that $Kf_s=f_s$. If $\Psi=\sum_s c_s f_s$ then  $K\Psi=\sum_s \overline{c_s} f_s$, i.e. $K$ becomes just a complex conjugation. Thus we can assume without loss of generality that the Hilbert space is
$\mathbb{C}^D$ with $2\leq D<\infty$ and $K$ is complex conjugation. In this case, all operators are characterized by finite dimensional matrices.

\subsection{Lifting continuity to real analyticity}

 The next lemma states that if
the map $P(\cdot)$ is real analytic, then a solution for Problem \ref{problema1} 
can always be turned into a solution for Problem \ref{problema2}.

\begin{lemma}\label{lema3}
Let $\R^d\ni \x\mapsto P(\x)\in \mathcal{L}(\mathbb{C}^D)$ be a
$\mathbb{Z}^d$-periodic, orthogonal projection valued map with ${\rm
  dim}({\rm Ran }P(\x))=N< D$ for all $\x\in \R^d$.  Assume that
all the components $\{P_{mn}(\cdot)\}_{1\leq m,n\leq D}$ in the standard basis of $\mathbb{C}^D$
have a $\mathbb{Z}^d$-periodic, bounded jointly analytic extension to a tubular complex neighborhood $I$ of $\R^d$. Assume
that $\{\Psi_j(\x)\}_{j=1}^N$ is an orthonormal basis of ${\rm Ran
}P(\x)$ for all $\x\in \R^d$, $\mathbb{Z}^d$-periodic, and
{\bf continuous}. Then there exists $\{\Xi_j(\cdot)\}_{j=1}^N$ {\bf holomorphic} on
a tubular complex neighborhood $\R^d\subset I'\subset I$,
$\mathbb{Z}^d$-periodic, which is an orthonormal basis of ${\rm Ran
}P(\x)$ if $\x\in \R^d$.
\end{lemma}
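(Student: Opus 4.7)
The strategy has three stages: approximate the continuous $\Psi_j$'s by $\Z^d$-periodic entire vectors, project them back into $\mathrm{Ran}\,P(\x)$, and then Löwdin-orthonormalize using a holomorphic extension of the resulting Gram matrix obtained via Schwarz reflection.

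\textit{Stage one.} Since $\Psi_j:\R^d\to\C^D$ is continuous and $\Z^d$-periodic, the Fej\'er means of its multi-dimensional Fourier series converge to it uniformly on $\R^d$. Hence for every $\epsilon>0$ there exist $\Z^d$-periodic trigonometric polynomials $\tilde\Psi_j(\x)=\sum_{|\gamma|\leq M}\alpha_{j,\gamma}e^{2\pi i\gamma\cdot\x}$, which are entire in $\x\in\C^d$, such that $\sup_{\x\in\R^d}\|\tilde\Psi_j(\x)-\Psi_j(\x)\|_{\C^D}\leq\epsilon$. I would then set
\[
\Phi_j(\x):=P(\x)\tilde\Psi_j(\x),\qquad \x\in I,
\]
which is $\Z^d$-periodic and jointly holomorphic on $I$ (the idempotency $P(\x)^2=P(\x)$ persists throughout $I$ by analytic continuation). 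On $\R^d$ one has $\Phi_j(\x)\in\mathrm{Ran}\,P(\x)$ and $\|\Phi_j(\x)-\Psi_j(\x)\|\leq\epsilon\sup_{\R^d}\|P(\x)\|$, so for $\epsilon$ small enough $\{\Phi_j(\x)\}_{j=1}^N$ is a (non-orthonormal) basis of $\mathrm{Ran}\,P(\x)$.

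\textit{Stage two.} The main obstacle is that the sesquilinear Gram matrix of the $\Phi_j$'s does not extend holomorphically. This is circumvented by Schwarz reflection: since $I$ may be chosen stable under $\x\mapsto\overline{\x}$, the reflected map $\Phi_j^{\#}(\x):=\overline{\Phi_j(\overline{\x})}$ is $\Z^d$-periodic and jointly holomorphic on $I$ and coincides with the componentwise complex conjugate of $\Phi_j$ on $\R^d$. Therefore the $N\times N$ matrix
\[
\tilde G_{jk}(\x):=\sum_{m=1}^D\Phi_j^{\#\,(m)}(\x)\,\Phi_k^{(m)}(\x)
\]
is $\Z^d$-periodic, jointly holomorphic on $I$, and agrees on $\R^d$ with the Hermitian Gram matrix $G_{jk}(\x):=\langle\Phi_j(\x),\Phi_k(\x)\rangle_{\C^D}$. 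By Stage one, $\|G(\x)-\mathrm{Id}\|\leq C\epsilon$ for every $\x\in\R^d$; by continuity of $\tilde G$ and $\Z^d$-periodicity, there exists a tubular complex neighborhood $\R^d\subset I'\subset I$ on which $\|\tilde G(\x)-\mathrm{Id}\|\leq 1/2$.

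\textit{Stage three.} On $I'$ a holomorphic inverse square root of $\tilde G$ is defined by a Dunford (Riesz) integral,
\[
S(\x):=\frac{1}{2\pi i}\oint_{|z-1|=3/4}z^{-1/2}\bigl(z\,\mathrm{Id}-\tilde G(\x)\bigr)^{-1}\,dz,
\]
which is $\Z^d$-periodic and jointly holomorphic on $I'$ and equals the Hermitian inverse square root $G(\x)^{-1/2}$ on $\R^d$. Setting
\[
\Xi_j(\x):=\sum_{k=1}^N S_{kj}(\x)\,\Phi_k(\x)
\]
produces $\Z^d$-periodic, jointly holomorphic vectors on $I'$. For $\x\in\R^d$ these are exactly the Löwdin symmetric orthonormalization of the $\Phi_k(\x)$'s, so $\langle\Xi_i(\x),\Xi_j(\x)\rangle = (G^{-1/2}GG^{-1/2})_{ij}=\delta_{ij}$, and since each $\Xi_j(\x)\in\mathrm{Ran}\,P(\x)$ they form an orthonormal basis of $\mathrm{Ran}\,P(\x)$. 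The only substantive step is the Schwarz-reflection construction of $\tilde G$; from there, the functional-calculus definition of $S=\tilde G^{-1/2}$ and the final linear combination are routine.
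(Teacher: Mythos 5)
Your proof is correct, and the overall three-stage architecture (smooth out $\Psi_j$ into something holomorphic, project with $P(\x)$, L\"owdin-orthonormalize) is the same as the paper's. Where you genuinely differ is in the two technical tools used to make the stages holomorphic. The paper mollifies $\Psi_j$ against the kernel $g_\delta(\x)=\pi^{-d}\prod_j(1+k_j^2)^{-1}$ (analytic on a strip of width $\delta$), which forces it to prove via a Cauchy-integral argument that $\Psi_{j,\delta}$ extends holomorphically; you instead take Fej\'er means, whose trigonometric-polynomial character makes the approximants trivially entire and $\Z^d$-periodic, eliminating that verification entirely. For the Gram matrix, the paper writes $h_{kj}(\x)$ as a double integral over the real variables $\y,\y'$, so that all anti-linear dependence is hidden in the (real) integration variables and only $g_\delta(\x-\y)$, $g_\delta(\x-\y')$ and $P(\x)$ carry $\x$-dependence; you instead invoke Schwarz reflection, replacing $\overline{\Phi_j(\x)}$ by the holomorphic function $\Phi_j^{\#}(\x)=\overline{\Phi_j(\overline{\x})}$ that agrees with it on $\R^d$. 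Both give a holomorphic extension of the Gram matrix equal to the Hermitian one on $\R^d$, after which the Dunford integral for the inverse square root and the final orthonormalization are identical. Your implementation is somewhat cleaner and more self-contained; the paper's mollifier-based version has the mild advantage that it carries over verbatim to the $CS$-preservation statement in Lemma \ref{lema3'}, but in fact the Fej\'er kernel is even ($K_n(-\x)=K_n(\x)$), so your approach would preserve the $CS$ symmetry there as well.
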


\begin{proof}
The function $g(\x)=\pi^{-d}\prod_{j=1}^d(1+k_j^2)^{-1}$ obeys  $g(\x)=g(-\x)$, it is  analytic on the strip $\{{\bf z}\in \mathbb{C}^d:\; |{\rm Im}(z_j)|<1,\; j\in\{1,...,d\}\}$, and $\int_{\R^d}
g(\x)d\x=1$. If $\delta>0$ we define an approximation of the Dirac
distribution $g_\delta(\x)=\delta^{-d}g(\delta^{-1}\x)$. Let us consider 
$$\Psi_{j,\delta}(\x)=\int_{\R^d}g_\delta(\x-\x')\Psi_j(\x')d\x',$$
which defines a $D$
dimensional vector which is $\mathbb{Z}^d$-periodic. 

{Let us show that $\Psi_{j,\delta}$ admits an analytic extension to a strip 
$$\{{\bf z}=(z_1,...z_d)\in \mathbb{C}^d:\; |{\rm Im}(z_j)|<\delta\}.$$  Due to 
periodicity, it is enough to find an extension to a bounded open neighbourhood $O$ of $[0,1]^d$ of the form
$$O=O_1\times ... \times O_d,\quad  O_j=\{z_j=x_j+iy_j\in\mathbb{C}:\; -2<x_j<2,\; |y_j|<\delta'<\delta\}.$$
For every ${\bf z}=(x_1+iy_1,...,x_d+iy_d)\in \overline{O}$ and every $\x'\in\R^d$ we have the estimate: 
$$|g_\delta({\bf z}-\x')|\leq \pi^{-d}\delta^d \prod_{j=1}^d\frac{ 1}{\delta^2-\delta'^2+(x_j-k_j')^2}.$$
Also:
$$g_\delta(\x-\x')=\frac{1}{(2\pi i)^d}\int_{\partial O_1}...\int_{\partial O_d}\frac{g_\delta(\mbox{\boldmath $\zeta$}-\x')}{\prod_{j=1}^d(\zeta_j-k_j)}d\mbox{\boldmath $\zeta$},\quad \x\in [0,1]^d,\quad \x'\in\R^d.$$

Because the $\Psi_j$'s are bounded on $\R^d$, the Fubini Theorem and the Cauchy Integral Formula allow one to analytically extend each $\Psi_{j,\delta}(\cdot)$ to $O$, and by periodicity, to a strip around $\R^d$.

Let us now continue the construction. A standard approximation argument implies:}
$$\lim_{\delta\to
  0}\left \{\max_{1\leq j\leq
    N}\sup_{\x\in\R^d}||\Psi_{j,\delta}(\x)-\Psi_{j}(\x)||\right \}=0.$$
Denote by $\Phi_{j,\delta}(\x):=P(\x)\Psi_{j,\delta}(\x)$. Then for a given $0<\epsilon\ll 1$ there exists $\delta$ small enough such that: 
$$ \max_{1\leq j\leq
    N}\sup_{\x\in\R^d}||\Phi_{j,\delta}(\x)-\Psi_{j}(\x)||<\epsilon. $$

Define the
  selfadjoint $N\times N$ Gram-Schmidt matrix
\begin{align}\label{dec7}
h_{kj}(\x)=\langle
\Phi_{j,\delta}(\x),\Phi_{k,\delta}(\x)\rangle=\int_{\R^{2d}}g_\delta(\x-\y)g_\delta(\x-\y')\langle
P(\x)\Psi_{j}(\y),\Psi_{k}(\y')\rangle d\y d\y'.
\end{align}

Since $\{\Psi_j(\x)\}_{j=1}^N$ is an orthonormal system, by choosing $\delta$ small enough we can insure that $||h(\x)-{\rm Id}||\leq 1/2$ uniformly for $\x\in\R^d$. 
Moreover, $h(\cdot)$ is bounded, $\mathbb{Z}^d$-periodic and real analytic, and the same holds true for $h(\cdot)^{-1/2}$. Then the vectors 
$$\Xi_j(\x):=\sum_{k=1}^N  \left [h(\x)^{-1/2}\right ]_{kj}\Phi_{k,\delta}(\x)$$
form a basis with all the required properties. 
\end{proof}

\subsection{Preservation of $CS$}

\begin{lemma}\label{lema3'}
Let $P(\x)$, $\{\Psi_j(\x)\}_{j=1}^N$ and $\{\Xi_j(\cdot)\}_{j=1}^N$
as constructed in Lemma \ref{lema3}. Assume that $P(\x)$ has the $CS$ property, and that $\overline{\Psi_j(\x)}=\Psi_j(-\x)$ for
all $j$. Then $\overline{\Xi_j(\x)}=\Xi_j(-\x)$ on $\R^d$ for all $j$.
\end{lemma}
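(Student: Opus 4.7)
The plan is to track the $CS$ symmetry (which, in the present finite-dimensional setup, simply says that complex conjugation sends $P(\x)$ to $P(-\x)$ entry-wise) through each of the four ingredients in the construction of Lemma \ref{lema3}: the mollification $\Psi_{j,\delta}$, the projection $\Phi_{j,\delta}=P\Psi_{j,\delta}$, the Gram matrix $h(\x)$, and the final rotation by $h(\x)^{-1/2}$.

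First I would check the mollification. Since the Cauchy-type kernel $g_\delta$ is real and even,
$$\overline{\Psi_{j,\delta}(\x)}=\int_{\R^d}g_\delta(\x-\x')\overline{\Psi_j(\x')}\,d\x'=\int_{\R^d}g_\delta(\x-\x')\Psi_j(-\x')\,d\x'=\int_{\R^d}g_\delta(-\x-\x'')\Psi_j(\x'')\,d\x''=\Psi_{j,\delta}(-\x),$$
using the substitution $\x''=-\x'$ and the evenness of $g_\delta$. Next, $\overline{P(\x)v}=P(-\x)\overline{v}$ (this is the $CS$ property when $K$ is complex conjugation), hence
$$\overline{\Phi_{j,\delta}(\x)}=\overline{P(\x)\Psi_{j,\delta}(\x)}=P(-\x)\overline{\Psi_{j,\delta}(\x)}=P(-\x)\Psi_{j,\delta}(-\x)=\Phi_{j,\delta}(-\x).$$

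Second, I would push this symmetry into the Gram matrix. Using $\overline{\langle u,v\rangle}=\langle \bar u,\bar v\rangle$ for the (sesquilinear) inner product on $\mathbb{C}^D$,
$$\overline{h_{kj}(\x)}=\overline{\langle \Phi_{j,\delta}(\x),\Phi_{k,\delta}(\x)\rangle}=\langle \Phi_{j,\delta}(-\x),\Phi_{k,\delta}(-\x)\rangle=h_{kj}(-\x).$$
Thus $\overline{h(\x)}=h(-\x)$ entry-wise. Writing $h(\x)^{-1/2}$ through functional calculus (equivalently, as a norm-convergent power series in $h(\x)-\mathrm{Id}$, since $\|h(\x)-\mathrm{Id}\|\leq 1/2$), every coefficient is real, so the property propagates: $\overline{[h(\x)^{-1/2}]_{kj}}=[h(-\x)^{-1/2}]_{kj}$.

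Finally I would combine the two symmetries in the defining formula for $\Xi_j$:
$$\overline{\Xi_j(\x)}=\sum_{k=1}^N \overline{[h(\x)^{-1/2}]_{kj}}\;\overline{\Phi_{k,\delta}(\x)}=\sum_{k=1}^N[h(-\x)^{-1/2}]_{kj}\Phi_{k,\delta}(-\x)=\Xi_j(-\x),$$
which is the desired identity on $\R^d$. There is no real obstacle here: the argument is a routine propagation of symmetry, and the only place where one needs to be slightly careful is in justifying that the functional calculus defining $h^{-1/2}$ commutes with entry-wise complex conjugation, which follows either from the explicit Taylor series or from the representation $h^{-1/2}=\pi^{-1}\int_0^\infty t^{-1/2}(h+t)^{-1}dt$ combined with $\overline{(h(\x)+t)^{-1}}=(h(-\x)+t)^{-1}$.
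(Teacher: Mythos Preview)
Your proof is correct and follows essentially the same route as the paper's own argument: propagate the conjugation symmetry through $\Psi_{j,\delta}$, then $\Phi_{j,\delta}$, then the Gram matrix $h(\x)$, and finally through $h(\x)^{-1/2}$ via its power series in $h(\x)-\mathrm{Id}$. The paper's proof is simply a terser version of what you wrote.
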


\begin{proof}
We will refer to the already defined objects in the proof of Lemma
\ref{lema3}. First, 
due to the symmetry of $g$ it is easy to check that 
 $\overline{\Psi_{j,\delta}(\x)}=\Psi_{j,\delta}(-\x)$. This implies that  $\overline{\Phi_{j,\delta}(\x)}=\Phi_{j,\delta}(-\x)$. This leads to the identity $\overline{h_{jk}(\x)}=h_{jk}(-\x)$. Since $h(\x)^{-1/2}$ can be written as a power series in $h(\x)-{\rm Id}$, we have that $\overline{[h(\x)^{-1/2}]_{kj}}=[h(-\x)^{-1/2}]_{kj}$.  
\end{proof}

\subsection{Some facts about matrices having $CS$ or $CS '$ symmetry}

\begin{lemma}\label{lema-dec-1}
Let $\theta$ denote complex conjugation and consider a self-adjoint matrix family $\alpha(\x)\in \mathcal{L}(\mathbb{C}^N)$ having $CS$ symmetry, i.e. $\theta \alpha(\x)\theta =\alpha(-\x)$. Then:
\begin{equation}\label{dec-1}
[\alpha(\x)]_{mn}=[\alpha(-\x)]_{nm},\quad \sigma(\alpha(\x))=\sigma(\alpha(-\x)).
\end{equation} 
\end{lemma}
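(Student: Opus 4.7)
The proof is a short computation plus an appeal to a standard linear-algebra fact, so I will lay it out in two short paragraphs.

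First, I would unpack the $CS$ hypothesis in coordinates. Since $\theta$ is just complex conjugation on $\mathbb{C}^N$, for any matrix $A$ one has $[\theta A \theta]_{mn}=\overline{A_{mn}}$. Applying this to $A=\alpha(\x)$ and using $\theta\alpha(\x)\theta=\alpha(-\x)$ gives
\begin{equation*}
[\alpha(-\x)]_{mn}=\overline{[\alpha(\x)]_{mn}}.
\end{equation*}
Next, because $\alpha(\x)$ is self-adjoint, $\overline{[\alpha(\x)]_{mn}}=[\alpha(\x)]_{nm}$. Combining these two equalities yields
\begin{equation*}
[\alpha(-\x)]_{mn}=[\alpha(\x)]_{nm},
\end{equation*}
which is the first assertion (after renaming $\x\mapsto -\x$).

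The second assertion now follows almost for free. The identity just established says $\alpha(-\x)={}^t\alpha(\x)$. Since any square matrix and its transpose have the same characteristic polynomial, their spectra coincide, so $\sigma(\alpha(-\x))=\sigma({}^t\alpha(\x))=\sigma(\alpha(\x))$. (Alternatively, one can argue directly from the $CS$ property: if $\alpha(\x)\psi=\lambda\psi$ with $\lambda\in\R$, then $\alpha(-\x)(\theta\psi)=\theta\alpha(\x)\psi=\lambda\,\theta\psi$, so each eigenvalue of $\alpha(\x)$ is also an eigenvalue of $\alpha(-\x)$, and by $\theta^2=\mathrm{Id}$ the correspondence is a bijection preserving multiplicities.) There is no real obstacle in this lemma; it is purely bookkeeping, and the only small point to be careful about is not to confuse the two reasons an entry gets complex-conjugated (once from $\theta\cdot\theta$, once from self-adjointness) when composing them to get the transpose identity.
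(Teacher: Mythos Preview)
Your proof is correct and follows essentially the same route as the paper: you combine the complex-conjugation identity coming from $CS$ symmetry with the conjugation identity coming from self-adjointness to obtain the transpose relation, and then invoke the equality of characteristic polynomials of a matrix and its transpose for the spectral claim. The only addition is your alternative eigenvector argument for the spectral equality, which is also fine.
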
 
\begin{proof} We know that $\overline{[\alpha(\x)]_{mn}}=[\alpha(\x)]_{nm}$ from self-adjointness and $\overline{[\alpha(\x)]_{mn}}=[\alpha(-\x)]_{mn}$ from $CS$ symmetry. The symmetry of the spectrum follows from the identity 
$${\rm det}(A-z\; {\rm Id})={\rm det}({}^tA-z\; {\rm Id}).$$
\end{proof}

\vspace{0.2cm}

\begin{lemma}\label{lema-dec-2}
Let $\theta$ denote the complex conjugation and consider a unitary matrix family $\beta(\x)\in \mathcal{L}(\mathbb{C}^N)$ with the $CS '$ symmetry, i.e. $\theta \beta(\x)\theta =\beta(-\x)^{-1}$ for all $\x$. Then:
\begin{equation}\label{dec-2}
[\beta(\x)]_{mn}=[\beta(-\x)]_{nm},\quad \sigma(\beta(\x))=\sigma(\beta(-\x)).
\end{equation} 
\end{lemma}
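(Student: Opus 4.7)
The plan is to mimic the proof of Lemma \ref{lema-dec-1}, with the unitarity hypothesis replacing self-adjointness to convert the inverse appearing in the $CS'$ identity into a complex conjugate transpose. The key observation is that for unitary $\beta(\x)$ we have $\beta(-\x)^{-1}=\beta(-\x)^{*}$, so the hypothesis $\theta\beta(\x)\theta=\beta(-\x)^{-1}$ becomes $\theta\beta(\x)\theta=\beta(-\x)^{*}$, which is structurally exactly the kind of identity handled in Lemma \ref{lema-dec-1}.

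First I would fix the standard basis $\{e_n\}_{n=1}^N$ of $\mathbb{C}^N$, observing that $\theta e_n=e_n$ so that matrix elements are recovered by $[A]_{mn}=\langle e_m, Ae_n\rangle$ and the conjugation $\theta$ intertwines inner products via $\langle \theta f,\theta g\rangle=\overline{\langle f,g\rangle}$. Applying both sides of $\theta\beta(\x)\theta=\beta(-\x)^{*}$ to $e_n$ and taking the inner product with $e_m$, the left-hand side yields $\overline{[\beta(\x)]_{mn}}$, while the right-hand side yields $[\beta(-\x)^{*}]_{mn}=\overline{[\beta(-\x)]_{nm}}$. Conjugating both sides then gives the first claim $[\beta(\x)]_{mn}=[\beta(-\x)]_{nm}$, i.e.\ $\beta(\x)={}^t\beta(-\x)$.

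For the spectral statement, I would invoke the standard identity
\[
\det(\beta(\x)-z\,{\rm Id})=\det({}^t\beta(-\x)-z\,{\rm Id})=\det(\beta(-\x)-z\,{\rm Id}),
\]
so the characteristic polynomials coincide and hence $\sigma(\beta(\x))=\sigma(\beta(-\x))$.

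I do not anticipate any real obstacle: the argument is a one-line unfolding of definitions, and the only conceptual difference from Lemma \ref{lema-dec-1} is the replacement of ``self-adjoint $\Rightarrow$ $\overline{A_{mn}}=A_{nm}$'' by ``unitary $\Rightarrow$ $A^{-1}=A^{*}$.'' The lemma is stated separately because the symmetry $\beta(\x)={}^t\beta(-\x)$ will later play the same role for the matching unitary $\beta(k_2)$ in the two-dimensional construction alluded to in Remark \ref{remarcamartie} as the symmetry in Lemma \ref{lema-dec-1} plays for self-adjoint objects.
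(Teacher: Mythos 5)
Your proof is correct and follows essentially the same route as the paper's: both arguments use unitarity to replace $\beta(-\x)^{-1}$ by $\beta(-\x)^{*}$, identify $\theta$-conjugation of matrix entries with complex conjugation, and combine the two identities to obtain $[\beta(\x)]_{mn}=[\beta(-\x)]_{nm}$, followed by the standard transpose-determinant identity for the spectral statement.
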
 
\begin{proof} We know that $\overline{[\beta(-\x)^{-1}]_{mn}}=[\beta(-\x)]_{nm}$ from unitarity and $\overline{[\beta(-\x)^{-1}]_{mn}}=[\beta(\x)]_{mn}$ from the $CS '$ symmetry. The proof of the symmetry of the spectrum is the same as in the previous lemma.
\end{proof}

\vspace{0.5cm}

\begin{remark}\label{remark-dec}
We note the fact that both the $CS$ property for self-adjoint matrices and the $CS '$ property for unitary matrices boil down to the same identity, only involving  transposition: $${}^t\alpha(\x)=\alpha(-\x).$$
\end{remark}

\begin{lemma}\label{lemmma-dec-3}
Let $\x=[k_1,k_2]\in\R^2$ and let $\alpha(\x)$ be self-adjoint with  $CS$ symmetry. Then the unitary family $\beta(\x):=e^{ik_1 \alpha(\x)}$ has $CS$ symmetry, while the family $\gamma(\x):=e^{i\alpha(\x)}$ has  $CS '$ symmetry.
\end{lemma}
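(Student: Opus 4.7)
The proof is essentially a one-line consequence of the functional calculus, since both claims reduce to the elementary fact that if $K$ is an antiunitary involution then, for any bounded operator $A$ and any entire function $f$ with real Taylor coefficients, $K f(A) K = f(K A K)$, while $K(iA)K = -i K A K$ because $K$ is antilinear.

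My plan is as follows. By the discussion at the end of Subsection~2.2 we may assume $K=\theta$ is complex conjugation on $\mathbb{C}^N$. The hypothesis on $\alpha$ then reads $\theta\alpha(\x)\theta=\alpha(-\x)$. For the first statement, I expand
\begin{equation*}
\beta(\x)=e^{ik_1\alpha(\x)}=\sum_{n\geq 0}\frac{(ik_1)^n}{n!}\alpha(\x)^n,
\end{equation*}
and apply $\theta$ on both sides. Antilinearity gives $\theta(ik_1)^n\theta=(-ik_1)^n$ (as $k_1\in\R$), and $\theta\alpha(\x)^n\theta=(\theta\alpha(\x)\theta)^n=\alpha(-\x)^n$, so
\begin{equation*}
\theta\beta(\x)\theta=\sum_{n\geq 0}\frac{(-ik_1)^n}{n!}\alpha(-\x)^n=e^{i(-k_1)\alpha(-\x)}=\beta(-\x),
\end{equation*}
which is exactly the $CS$ property for $\beta$.

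For the second statement, the same manipulation applied to $\gamma(\x)=e^{i\alpha(\x)}$ yields
\begin{equation*}
\theta\gamma(\x)\theta=e^{-i\alpha(-\x)}=\bigl(e^{i\alpha(-\x)}\bigr)^{-1}=\gamma(-\x)^{-1},
\end{equation*}
which is the $CS'$ property. There is no real obstacle here; the only subtlety is being careful with the antilinearity of $\theta$ when passing it through scalar factors $i$ and $k_1$, and noting that the sign carried in by $k_1\mapsto -k_1$ in the exponent of $\beta$ is precisely what makes $\beta$ land in the $CS$ class rather than the $CS'$ class, while in $\gamma$ (where no $k_1$ prefactor is present) the sign is unbalanced and produces an inverse, hence $CS'$. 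Both conclusions can alternatively be read off from Remark~2.11 together with the elementary identity ${}^t e^{iA}=e^{i\,{}^t\!A}$.
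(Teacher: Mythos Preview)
Your argument is correct and is essentially the paper's own proof, just with the power-series justification for the identity $\theta e^{iA}\theta=e^{-i\theta A\theta}$ spelled out explicitly; the paper simply writes $\theta\beta(\x)\theta=e^{-ik_1\theta\alpha(\x)\theta}=e^{-ik_1\alpha(-\x)}=\beta(-\x)$ and similarly for $\gamma$.
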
 
\begin{proof} 
We have $$\theta \beta(\x) \theta =e^{-ik_1 \theta \alpha(\x)\theta }=e^{-ik_1 \alpha(-\x) }=\beta(-\x).$$
In a similar way: 
$$\theta \gamma(\x)\theta =e^{-i\theta \alpha(\x)\theta }=[\gamma(-\x)]^{-1}.$$
\end{proof}

\subsection{The case $d=1$}

Let us start by stating without proof a well-known fundamental lemma essentially due to Kato, formulated in a form which will be convenient for us in what follows.

\begin{lemma}\label{lema-dec-3}
Let $Q(x)=Q^*(x)=Q^2(x)$ be a $C^1$-norm differentiable family of orthogonal projections. Define 
\begin{equation}\label{III.1}
{ \mathcal{K}}(x):=i({\rm Id}-2Q(x))Q'(x)=i[Q'(x),Q(x)],\quad { \mathcal{K}}(x)={ \mathcal{K}}^*(x).
\end{equation}
Given $y\in\R$, one can define a family of unitary operators $A(x,y)$ such that 
\begin{equation}\label{III.3}
i\partial_x A(x,y)={ \mathcal{K}}(x)A(x,y),\quad A(y,y)={\rm Id},
\end{equation}
given by: 
\begin{equation}\label{III.3'}
A(x,y)={\rm Id}+\sum_{n\geq 1}(-i)^n\int_y^x ds_1\int_y^{s_1}ds_2...\int_y^{s_{n-1}}ds_n { \mathcal{K}}(s_1)...{ \mathcal{K}}(s_n).
\end{equation}
Furthermore, we have:
\begin{equation}\label{III.4}
i\partial_y A(x,y)=-A(x,y)H(y),\; [A(x,y)]^{-1}=A(y,x),\;
A(x_1,x_2)A(x_2,x_3)=A(x_1,x_3).
\end{equation}
The key intertwining identity is:
\begin{equation}\label{III.6}
Q(x)=A(x,x_0)Q(x_0)A(x_0,x).
\end{equation}
For fixed $x_0\in \R$, if either $Q(\cdot)\in C^n(\R)$ with $n\geq 1$ or $Q(\cdot)$ admits an analytic extension to the strip $\mathcal{I}_a^1:=\{z\in \mathbb{C}:\; |{\rm Im}(z)|<a\}$, then 
the same holds true for $A(\cdot,x_0)$ and $A(x_0,\cdot)$. 
\end{lemma}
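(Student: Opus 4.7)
The plan is to proceed in four logical steps, of which only the intertwining identity is more than routine. First I would collect the algebraic identities that follow from $Q^2=Q=Q^*$. Differentiating $Q^2=Q$ yields $Q'Q+QQ'=Q'$, from which the two displayed expressions for $\mathcal{K}$ agree, since $i[Q',Q]=i(Q'-QQ')-iQQ'=iQ'-2iQQ'$; taking adjoints (using $Q^*=Q$ and $(Q')^*=Q'$) together with the same relation gives $\mathcal{K}^*=\mathcal{K}$. Differentiating $Q^3=Q$ and subtracting the previous relation yields the crucial cancellation $QQ'Q=0$, which drives the intertwining computation below.

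Next I would construct $A$ and record its basic algebraic properties. Because $\mathcal{K}$ is bounded and norm-continuous, the Dyson series (\ref{III.3'}) converges absolutely in operator norm uniformly on compact subsets, and termwise differentiation shows it solves (\ref{III.3}) with $A(y,y)={\rm Id}$. Unitarity is a standard energy estimate: from $\partial_x A=-i\mathcal{K}A$ and $\mathcal{K}^*=\mathcal{K}$ one has $\partial_x(A^*A)=iA^*\mathcal{K}A-iA^*\mathcal{K}A=0$, hence $A^*A={\rm Id}$ throughout. For (\ref{III.4}), both $A(x_1,x_2)A(x_2,x_3)$ and $A(x_1,x_3)$ solve the same ODE in $x_1$ with identical value $A(x_2,x_3)$ at $x_1=x_2$, so uniqueness yields the composition law; specializing $x_1=x_3$ and combining with unitarity gives $A(x,y)^{-1}=A(y,x)=A(x,y)^*$. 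The formula $i\partial_y A(x,y)=-A(x,y)\mathcal{K}(y)$ then follows by differentiating $A(x,y)A(y,x)={\rm Id}$ in $y$ and substituting the first-argument ODE for $A(y,x)$.

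Finally, for the intertwining identity (\ref{III.6}) I would set $R(x):=A(x,x_0)Q(x_0)A(x_0,x)$. Using $\partial_x A(x,x_0)=-i\mathcal{K}(x)A(x,x_0)$ and $\partial_x A(x_0,x)=iA(x_0,x)\mathcal{K}(x)$ from the previous step, one finds $R'(x)=-i[\mathcal{K}(x),R(x)]$ with $R(x_0)=Q(x_0)$. In parallel, direct expansion of $-i[\mathcal{K},Q]$ using $\mathcal{K}=iQ'-2iQQ'$ and $Q^2=Q$ yields $Q'Q+QQ'-2QQ'Q$, which by $Q'Q+QQ'=Q'$ and $QQ'Q=0$ reduces to $Q'(x)$. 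Thus $R$ and $Q$ solve the same first-order linear ODE with the same initial value and must coincide. The $C^n$ and analytic regularity claims then follow from the Dyson series itself, since each iterated integral inherits the regularity of $\mathcal{K}$, and uniform convergence on compact subsets of the strip of analyticity preserves this regularity in the limit. The only step that is not purely mechanical is the algebraic cancellation $QQ'Q=0$ in the final computation; everything else is standard ODE theory.
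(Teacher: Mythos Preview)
Your proof is correct and complete. The paper, however, does not prove this lemma at all: it is introduced with the words ``Let us start by stating without proof a well-known fundamental lemma essentially due to Kato, formulated in a form which will be convenient for us in what follows.'' So there is no paper proof to compare against; what you have written is exactly the standard justification the authors take for granted. One tiny remark: your energy estimate only yields $A^*A={\rm Id}$; to conclude full unitarity (needed for $A(x,y)^{-1}=A(x,y)^*$) you should also note that $B:=AA^*$ solves $B'=-i[\mathcal{K},B]$ with $B(y)={\rm Id}$, hence $B\equiv{\rm Id}$ by uniqueness. Otherwise everything is in order.
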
   

\vspace{0.5cm}

In the case in which the above family is $\Z$-periodic, i.e. $Q(x+1)=Q(x)$ for all $x\in \R$, then ${ \mathcal{K}}(x)$ is $\Z$-periodic and we observe that $A(x+1,x_0+1)$ must equal $A(x,x_0)$ (see \eqref{III.3}). Consider the unitary operator $A(1,0)$. Reasoning as in \eqref{septemb4}, the spectral theorem provides us with a self-adjoint operator $M$ with spectrum in $(-\pi,\pi]$ such that $A(1,0)=e^{i M}$. Also, due to the intertwining relation $Q(1)A(1,0)=A(1,0)Q(0)$ and because $Q(1)=Q(0)$ due to periodicity, we conclude that $A(1,0)$ commutes with $Q(0)$ and so does $M$.  

Define the unitary operator:
\begin{equation}\label{III.16}
U(k):=A(k,0)e^{-ikM},\quad k\in \R.
\end{equation}
One can verify that $U(\cdot)$ is $\Z$-periodic: 
$$U(k+1)=A(k+1,0)e^{-i(k+1)M}=A(k+1,1) A(1,0)e^{-i(k+1)M}=A(k,0)e^{-ikM}=U(k)$$
where we used $A(k+1,1)=A(k,0)$. 

Now let us consider Problem \ref{problema1} with $d=1$. Consider any orthonormal basis $\{\Psi_j(0)\}_{j=1}^N$ in the range of $P(0)$. Construct $U(k)$ as above starting from $P(k)$. We will now prove that the vectors:
\begin{equation}\label{solprob1}
\Xi_j(k):=U(k)\Psi_j(0),\quad k\in\R,\quad 1\leq j\leq N
\end{equation}
give a solution for Problem \ref{problema1}. These vectors are clearly periodic and orthogonal, thus we only need to prove that they live in the range of $P(k)$. But since $M$ commutes with $P(0)$ and due to the intertwining relation $P(k)A(k,0)=A(k,0)P(0)$, it follows that $P(k)U(k)=U(k)P(0)$ and we are done. 

We note that the above construction has not used a possible $CS$ property of 
$P(\cdot)$. If such a property does hold, i.e. besides smoothness and periodicity we also have $\theta P(k)\theta =P(-k)$ for all $k$, then  let us show that we can construct a basis which solves Problem \ref{problema3}. 

Since $\theta P(0)=P(0)\theta$, we have that $\theta \Psi_j(0)$ belongs to the range of $P(0)$, hence the real vectors $\frac{1}{2}(\Psi_j(0)+\theta \Psi_j(0))$ and $\frac{1}{2i}(\Psi_j(0)-\theta \Psi_j(0))$ have the same property. After a Gram-Schmidt procedure we obtain a real orthonormal basis of ${\rm Ran}(P(0))$, denoted by $\{\Xi_j(0)\}_{j=1}^N$. Thus the only thing we need to check is that $U(k)$ has the $CS$ property, i.e. $\theta U(k)\theta =U(-k)$. 

First, due to the identity $P'(-k)=-\theta P'(k)\theta$ we have $\theta { \mathcal{K}}(k)\theta ={ \mathcal{K}}(-k)$, thus $\theta A(k,0)\theta=A(-k,0)$ because they solve the same initial value problem. 

Second, taking $k=1$ in the last identity we get $\theta A(1,0)\theta =A(-1,0)=A(0,1)=[A(1,0)]^{-1}$ i.e.  
\begin{equation}\label{decem-11}
\theta e^{iM}\theta =e^{-iM},
\end{equation}
where $M=M^*$ commutes with $P(0)$ and has its spectrum in $(-\pi,\pi]$. Since the underlying Hilbert space is $\mathbb{C}^D$, all operators are represented by $D\times D$ matrices, hence $M$ has discrete eigenvalues and can be written as $M=\sum_{-\pi<\phi\leq \pi}\phi \Pi_\phi$ where $\Pi_\phi$ are spectral orthogonal projections corresponding to different eigenvalues. From \eqref{decem-11} we immediately obtain that $\theta \Pi_\phi\theta =\Pi_\phi$ for all eigenvalues, hence $\theta M\theta=M$. Thus:
$$\theta e^{-ikM}\theta =e^{ik\theta M\theta}=e^{ikM},\quad k\in \R,$$
hence $\theta U(k)\theta =U(-k)$ and the basis vectors we are looking for are given by $\Xi_j(k):=U(k)\Xi_j(0)$.

\begin{remark}\label{matricea-bate-tot}
Define $\Psi_m(k):=A(k,0)\Xi_m(0)$ and consider $\mathcal{M}_{jk}:=\langle M \Xi_k(0)|\Xi_j(0)\rangle$, i.e. the matrix elements of $P(0)MP(0)$ in the real basis $\Xi(0)$. Using the fact that $M$ commutes with $P(0)=P(1)$ leads to
\begin{equation}\label{matrix1}
\Psi_m(k+1)=A(k,0)P(0)e^{iM}P(0)\Xi_m(0)=\sum_{n=1}^N [e^{i\mathcal{M}}]_{nm}\Psi_n(k)
\end{equation}
and
\begin{equation}\label{matrix2}
\Xi_m(k)=U(k)\Xi_m(0)=A(k,0)P(0)e^{-ikM}\Xi_m(0)=\sum_{n=1}^N [e^{-ik\mathcal{M}}]_{nm}\Psi_n(k).
\end{equation}
In other words, we have found a unitary, $k$-dependent change of basis in the range of $P(k)$, which transforms the $\Psi(k)$'s into a periodic basis also preserving the $CS$ symmetry and the regularity of the original projection.  
\end{remark}

\subsection{The case $d=2$}
From now on (if not otherwise stated) we assume that the underlying Hilbert space is $\mathbb{C}^D$ and the family of projections $P(\cdot)$ has the $CS$ property, i.e. $\theta P(\x)\theta =P(-\x)$ for all $\x\in \R^2$. The idea is to proceed recursively. Consider $P(0,k_2)$. Using the result from $d=1$ we can construct an orthonormal basis $\{\Xi_j(0,k_2)\}_{j=1}^N$  of ${\rm Ran}P(0,k_2)$ such that 
\begin{equation}\label{III.25}
\Xi_j(0,k_2+1)=\Xi_j(0,k_2).\quad \theta \Xi_j(0,k_2)\theta =\Xi_j(0,-k_2),\quad \forall k_2\in \R.
\end{equation}
Now keeping $k_2$ fixed, consider the orthogonal projection family $\Pi(\cdot):=P(\cdot,k_2)$. We can apply Lemma \ref{lema-dec-3} with $Q$ replaced by $\Pi$ and obtain a family of unitary operators denoted by $A_{k_2}(x,y)$ generated by the self-adjoint operators 
$${ \mathcal{K}}_{k_2}(k_1):=i[\partial_1P(k_1,k_2),P(k_1,k_2)].$$
Reasoning as in the case $d=1$ we obtain the intertwining relation: 
\begin{equation}\label{III.26'}
P(k_1,k_2)A_{k_2}(k_1,0)=A_{k_2}(k_1,0)P(0,k_2), 
\end{equation}
and 
\begin{equation}\label{III.26}
\theta { \mathcal{K}}_{k_2}(k_1) \theta ={ \mathcal{K}}_{-k_2}(-k_1),\quad  
\theta A_{k_2}(k_1,0) \theta =A_{-k_2}(-k_1,0). 
\end{equation}
We also note the periodicity in $k_2$:
\begin{equation}\label{III.27}
 { \mathcal{K}}_{k_2+1}(k_1) ={ \mathcal{K}}_{k_2}(k_1),\quad  
 A_{k_2+1}(k_1,0) =A_{k_2}(k_1,0). 
\end{equation}
Due to the intertwining relation \eqref{III.26'} we have that the set of vectors 
$$\Psi_m(\x):=A_{k_2}(k_1,0)\Xi_m(0,k_2)$$
form an orthonormal basis in ${\rm Ran}P(\x)$, having all the right properties but one: they are not periodic in $k_1$. Using $A_{k_2}(k_1+1,1)=A_{k_2}(k_1,0)$ we  observe that:
$$\Psi_m(k_1+1,k_2)=A_{k_2}(k_1,0)A_{k_2}(1,0)\Xi_m(0,k_2).$$
From the intertwining relation \eqref{III.26'} and due to $P(1,k_2)=P(0,k_2)$ we have that $A_{k_2}(1,0)$ commutes with $P(0,k_2)$  hence:
\begin{equation}\label{matrice-dec}
\Psi_m(k_1+1,k_2)= \sum_{n=1}^N\langle A_{k_2}(1,0)\Xi_m(0,k_2)|\Xi_n(0,k_2)\rangle \Psi_n(k_1,k_2).
\end{equation}
The $N\times N$ matrix family defined by
\begin{equation}\label{matrice-dec-1}
\beta_{nm}(k_2):=\langle A_{k_2}(1,0)\Xi_m(0,k_2)|\Xi_n(0,k_2)\rangle
\end{equation}
is unitary in $\mathbb{C}^N$ and we have:
\begin{equation}\label{matrice-dec-2}
\Psi_m(k_1+1,k_2)= \sum_{n=1}^N\beta_{nm}(k_2)\; \Psi_n(k_1,k_2).
\end{equation}
This identity closely resembles \eqref{matrix1} but the important difference is that the matrix $\beta$ depends on $k_2$. The important question we have to answer is whether we can find a $\x$-dependent unitary rotation, like we did in \eqref{matrix2}, which transforms the $\Psi$'s into a $\Z^2$-periodic basis, preserving at least its continuity and $CS$ symmetry. { We note that $\beta(k_2)$ is similar to the `obstruction unitary' $U_{\rm obs}(k_2)$ introduced in \cite{FMP}.}

\begin{lemma}\label{lemma-dec-5}
The $N\times N$ matrix family defined by \eqref{matrice-dec-1} is unitary, $\Z$-periodic, inherits the regularity properties of $P(0,k_2)$ and has the $CS '$ symmetry.
\end{lemma}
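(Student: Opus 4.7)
\bigskip
\noindent\textbf{Proof proposal.} The plan is to verify the four asserted properties of $\beta(k_2)$ one by one, with the only non-trivial input being the symmetries \eqref{III.25}, \eqref{III.26}, \eqref{III.27} together with the anti-unitarity of $\theta$ and the identity $A(x,y)^{-1}=A(y,x)$ from \eqref{III.4}. Unitarity follows essentially from the intertwining relation: since $P(1,k_2)=P(0,k_2)$, the intertwining relation \eqref{III.26'} evaluated at $k_1=1$ shows that $A_{k_2}(1,0)$ commutes with $P(0,k_2)$, so it restricts to a unitary operator on the $N$-dimensional subspace ${\rm Ran}\,P(0,k_2)$. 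The matrix $\beta(k_2)$ is merely the representation of this restriction in the orthonormal basis $\{\Xi_j(0,k_2)\}_{j=1}^N$, hence $\beta(k_2)\in U(N)$.

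For $\Z$-periodicity, I combine the two identities at my disposal: $A_{k_2+1}(1,0)=A_{k_2}(1,0)$ from \eqref{III.27} and $\Xi_j(0,k_2+1)=\Xi_j(0,k_2)$ from \eqref{III.25}. Inserted into the definition \eqref{matrice-dec-1} this gives $\beta(k_2+1)=\beta(k_2)$ directly. As for regularity, the kernel $\mathcal{K}_{k_2}(k_1)=i[\partial_1 P(k_1,k_2),P(k_1,k_2)]$ inherits the joint smoothness (or analyticity) of $P$, and the Dyson expansion \eqref{III.3'} then guarantees that $(k_1,k_2)\mapsto A_{k_2}(k_1,0)$ is as regular as $P$. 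The basis $\Xi_j(0,k_2)$ was constructed in the $d=1$ step via the same Kato-Nagy mechanism, so it is as regular as $k_2\mapsto P(0,k_2)$; the bracket in \eqref{matrice-dec-1} is then at least as regular as $k_2\mapsto P(0,k_2)$.

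The one piece worth actually writing out is the $CS'$ symmetry. By Lemma \ref{lema-dec-2}, it suffices to verify $[\beta(k_2)]_{nm}=[\beta(-k_2)]_{mn}$. Starting from
\[
[\beta(-k_2)]_{mn}=\langle A_{-k_2}(1,0)\Xi_n(0,-k_2)\,|\,\Xi_m(0,-k_2)\rangle,
\]
I apply $\overline{\langle u|v\rangle}=\langle \theta u|\theta v\rangle$ together with $\theta^2={\rm Id}$, and use \eqref{III.25} and \eqref{III.26} in the forms $\theta\Xi_j(0,-k_2)=\Xi_j(0,k_2)$ and $\theta A_{-k_2}(1,0)\theta=A_{k_2}(-1,0)=[A_{k_2}(1,0)]^{-1}$ (the last equality by \eqref{III.4}). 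This transforms the right-hand side into
\[
\overline{\langle [A_{k_2}(1,0)]^{-1}\Xi_n(0,k_2)\,|\,\Xi_m(0,k_2)\rangle}=\langle \Xi_m(0,k_2)\,|\,[A_{k_2}(1,0)]^{-1}\Xi_n(0,k_2)\rangle,
\]
and unitarity of $A_{k_2}(1,0)$ converts this into $\langle A_{k_2}(1,0)\Xi_m(0,k_2)\,|\,\Xi_n(0,k_2)\rangle=[\beta(k_2)]_{nm}$, as required.

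None of these steps looks dangerous; the lemma is essentially a bookkeeping exercise that collects the symmetries of the propagator $A_{k_2}(k_1,0)$ and the frame $\Xi_j(0,k_2)$ already established. The harder work is of course still ahead, namely the rotation problem of \eqref{matrix2}-type now with a genuinely $k_2$-dependent matching matrix $\beta(k_2)$, which is the content of Proposition \ref{teorema-u} announced in Remark \ref{remarcamartie}.
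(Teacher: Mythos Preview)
Your proof is correct and follows essentially the same route as the paper's: the paper only writes out the $CS'$ computation (leaving unitarity, periodicity and regularity as implicit), and your chain $\theta A_{-k_2}(1,0)\theta=A_{k_2}(-1,0)=[A_{k_2}(1,0)]^{-1}$ followed by anti-unitarity is exactly the paper's argument, just run from the other end. One small point: the step $A_{k_2}(-1,0)=[A_{k_2}(1,0)]^{-1}$ does not follow from \eqref{III.4} alone but also needs the $\Z$-periodicity in $k_1$ of the propagator, $A_{k_2}(-1,0)=A_{k_2}(0,1)$, which comes from the periodicity of $\mathcal{K}_{k_2}(\cdot)$ (the paper uses this same intermediate step explicitly).
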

\begin{proof}
We only prove the statement related to the $CS '$ symmetry. Since 
$$A_{-k_2}(1,0)=\theta A_{k_2}(-1,0)\theta=\theta A_{k_2}(0,1)\theta
=\theta [A_{k_2}(1,0)]^*\theta$$ 
and because $ \theta\Xi_m(0,-k_2)=\Xi_m(0,k_2)$ we have:
$$\beta_{nm}(-k_2)=\langle \theta [A_{k_2}(1,0)]^*\Xi_m(0,k_2)|\theta\Xi_n(0,k_2)\rangle =\langle \Xi_n(0,k_2)|[A_{k_2}(1,0)]^*\Xi_m(0,k_2)\rangle =\beta_{mn}(k_2),$$
which according to Lemma \ref{lema-dec-2} is equivalent with the $CS '$ symmetry.  
\end{proof}

\vspace{0.5cm} 

\begin{remark}\label{remarca4.2} Let us {\bf assume} that there exists an $N\times N$ self-adjoint  matrix family $h(k_2)$ which is continuous, $\Z$-periodic, with the $CS$ symmetry, such that $\beta(k_2)=e^{i h(k_2)}$. Note that this was the case when $d=1$ (there $h(k_2)=M$, see \eqref{matrix1}). Then the matrix $e^{-ik_1 h(k_2)}$ has the $CS '$ symmetry (see Lemma \ref{lemmma-dec-3}) and the vectors 
$$\Xi_m(\x):=\sum_{n=1}^N [e^{-ik_1 h(k_2)}]_{nm} \Psi_n(\x)$$
give the periodic, continuous and $CS$ symmetric basis we are looking for. Indeed, in this case we have:
$$\Xi_m(k_1+1,k_2)=\sum_{n=1}^N [e^{-i(k_1+1) h(k_2)}]_{nm} \sum_{j=1}^N[e^{i h(k_2)}]_{jn}\Psi_j(\x)=\Xi_m(\x).$$
\end{remark}

\vspace{0.5cm} 

Unfortunately, finding a `good'  logarithm for $\beta(k_2)$ (i.e. a matrix $h(k_2)$ which is self-adjoint, periodic, continuous, $CS$ symmetric and $\beta(k_2)=e^{ih(k_2)}$) is not always possible, as it can be seen from the following example. Let 
\begin{equation}\label{nologaritm}
\beta(k)=\left [ \begin{array}{ll}
         \quad \cos(2\pi k) & \sin(2\pi k)\\
        -\sin(2\pi k) & \cos(2\pi k)\end{array} \right ]=\cos(2\pi k){\rm Id}+i\sin(2\pi k)\sigma_2 ,\quad k\in\R
        \end{equation}
        be a family of unitary matrices, real analytic, $\Z$-periodic, and which obeys the $CS '$ symmetry. Note that $\beta(k)$ has degenerate eigenvalues at $k=0$ and $k=\pm1/2$, and the range of its spectrum covers the whole unit circle. 
        
        Let us show that one {\bf cannot} find a continuous self-adjoint family $h(k)$ which is also $\Z$-periodic, has the $CS$ symmetry and $\beta(k)=e^{ih(k)}$ for all $k$. Assume the contrary. From ${\rm det}(\beta(k))=1=e^{i {\rm Tr}(h(k))}$ we have that $k\mapsto \frac{1}{2\pi}{\rm Tr}(h(k))$ is a continuous, integer valued function. Hence there exists an integer $m$ such that ${\rm Tr}(h(k))=2\pi m$ for all $k$. Let $F_j(k):=\frac{{\rm Tr}(\sigma_j h(k))}{2}$; then we have
        $$h(k)=\frac{{\rm Tr}( h(k))}{2}{\rm Id}+\sum_{j=1}^3 F_j(k)\sigma_j=m\pi {\rm Id} +{\bf F}(k)\cdot \sigma.$$
        Thus 
        $$\beta(k)=(-1)^me^{i {\bf F}(k)\cdot \sigma}=(-1)^m \left(\cos(|{\bf F}(k)|){\rm Id}+i\frac{\sin(|{\bf F}(k)|)}{|{\bf F}(k)|}{\bf F}(k)\cdot \sigma \right).$$
        Comparing with the definition of $\beta(k)$, both $F_1$ and $F_3$ must be identically zero. Hence ${\bf F}(k)=[0,F_2(k),0]$ and 
        $|{\bf F}(k)|=|F_2(k)|$. Therefore
        $$\beta(k)=(-1)^m \left(\cos(F_2(k)){\rm Id}+i\sin(F_2(k))\sigma_2 \right)=\cos(F_2(k)+m\pi){\rm Id}+i\sin(F_2(k)+m\pi)\sigma_2.$$
        We conclude that there exists an integer-valued function $n(k)$ such that $F_2(k)+m\pi=2\pi k+2\pi n(k)$. Since $F_2$ must be continuous, it follows that $n(k)$ is constant and equals some $n$. Hence $F_2(k)=2\pi k+\pi(2n-m)$ for all $k$. In order to make sure that $\beta_{pq}(k)=\beta_{qp}(-k)$ we must have that $F_2(k)=-F_2(-k)$, i.e. $m=2n$, $(-1)^m=1$ and $F_2(k)=2\pi k$. This shows that $F_2$ cannot be periodic.

\vspace{0.5cm}

 Fortunately, the lack of a `good' logarithm can be circumvented  by a two step rotation procedure (see \eqref{zumba2}), where for each step a `good' logarithm can be constructed. In what follows, we will discuss four particular situations in which a good logarithm can be constructed and then show how they can be combined in order to deal with the general case.

\subsubsection{When $N=1$}
Here the unitary matrix $\beta(k_2)$ is just a complex number on the unit circle, which due to the $CS '$ property (see \eqref{dec-2}) is also even: $\beta(k_2)=\beta(-k_2)$.  
\begin{lemma}\label{lemaunu}
There exists a unique real { and even} function $\phi_p(k_2)=\phi_p(-k_2)$, with the same smoothness properties as $\beta$,  $\mathbb{Z}$-periodic, $\phi_p(0)=0$, such that $\beta(k_2)=\beta(0)e^{i\phi_p(k_2)}$. 
\end{lemma}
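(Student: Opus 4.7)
The plan is to reduce to a function of unit modulus equal to $1$ at the origin, lift it through the universal covering $\R\to S^1$, $t\mapsto e^{it}$, and check that evenness, $\Z$-periodicity and regularity descend to the lift. Concretely, I would set $\alpha(k_2):=\overline{\beta(0)}\,\beta(k_2)$, which inherits from $\beta$ unitarity, evenness $\alpha(k_2)=\alpha(-k_2)$ (this is the content of the $CS'$-property in rank one, cf.\ Lemma \ref{lema-dec-2}), $\Z$-periodicity and the smoothness of $\beta$, while additionally satisfying $\alpha(0)=1$. Since $\beta(k_2)=\beta(0)e^{i\phi_p(k_2)}$ is equivalent to $\alpha(k_2)=e^{i\phi_p(k_2)}$, standard covering-space theory produces a unique continuous $\phi_p:\R\to\R$ with $\phi_p(0)=0$ and $\alpha=e^{i\phi_p}$.

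The next step is to verify the two symmetries by a locally-constant argument. The map $k_2\mapsto\phi_p(-k_2)-\phi_p(k_2)$ is continuous, vanishes at $k_2=0$, and takes values in $2\pi\Z$ because $\alpha$ is even; hence it is identically zero, so $\phi_p$ is even. The map $k_2\mapsto\phi_p(k_2+1)-\phi_p(k_2)$ is continuous and $2\pi\Z$-valued because $\alpha$ is $\Z$-periodic, so it equals a constant $2\pi m$; evaluating at $k_2=-1/2$ and using the evenness just established gives $2\pi m=\phi_p(1/2)-\phi_p(-1/2)=0$, forcing $m=0$. This proves the $\Z$-periodicity of $\phi_p$.

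Regularity is inherited pointwise: near any $k_2^{\ast}$ one may write $\phi_p(k_2)=\phi_p(k_2^{\ast})-i\log\bigl(\alpha(k_2)/\alpha(k_2^{\ast})\bigr)$ with the principal branch of the logarithm, and the right-hand side has the same $C^n$ or real-analytic regularity as $\alpha$. Uniqueness is then immediate: any second continuous function $\phi_p'$ with these properties would make $\phi_p-\phi_p'$ a continuous $2\pi\Z$-valued function vanishing at the origin, hence identically zero. I do not anticipate any real obstacle; the whole content of the lemma is the arithmetic trick that trades evenness at $k_2=-1/2$ against $\Z$-periodicity to kill the winding-number integer $m$. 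The scalar $CS'$ symmetry is precisely what makes this possible, in contrast with the higher-rank situation illustrated by \eqref{nologaritm}.
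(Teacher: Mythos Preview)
Your argument is correct and essentially the same as the paper's: both normalize to $\alpha(0)=1$, produce a continuous lift $\phi_p$ with $\phi_p(0)=0$, deduce evenness from the evenness of $\alpha$, and then kill the winding integer by evaluating $\phi_p(k_2+1)-\phi_p(k_2)$ at $k_2=-1/2$ and using $\phi_p(1/2)=\phi_p(-1/2)$. The only cosmetic difference is that the paper builds the lift by an explicit $\delta$-step patching of principal logarithms (and obtains evenness along the way by extending symmetrically), whereas you invoke covering-space lifting and recover evenness afterwards via the locally-constant argument; the regularity and uniqueness parts are identical.
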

\begin{proof} Without loss of generality we may assume that $\beta(0)=1$. We start by first proving 
the existence of a continuous phase $\phi(t)$ such that $\beta(t)=e^{i\phi(t)}$ and 
$\phi(0)=0$. 

The map $\beta$ is uniformly continuous on $\R$, thus there exists $\delta>0$ such that 
$$|\beta(t)-\beta(s)|=|\beta(t)/\beta(s)-1|\leq 1/2 \quad {\rm whenever}\quad |t-s|\leq \delta.$$
If ${\rm Ln}$ denotes the principal branch of the natural logarithm, then in 
a neighbourhood $[-\delta,\delta]$ we can define $\phi(t)=-i{\rm Ln}(1+(\beta(t)-1))=\phi(-t)$ where $\beta(t)=e^{i\phi(t)}$, $\phi(0)=0$, $\phi$ is real and continuous. 

We can repeat this construction in the interval $(\delta,2\delta]$ by defining 
$$\phi(t)=\phi(\delta)-i{\rm Ln}[1+(\beta(t)/\beta(\delta)-1)].$$ 
A similar formula holds on $[-2\delta,-\delta)$, the extension being continuous at $\pm\delta$ and symmetric. After a finite number of steps we can cover any bounded interval of $\R$. Thus we obtain a continuous, symmetric, real function 
$\phi$ with $\phi(0)=0$, not necessarily periodic, such that $\beta(t)=e^{i\phi(t)}$. We only need to prove its $\Z$-periodicity. 

We know that $\beta$ is $\Z$-periodic, thus the function 
$$F(t):=\frac{\phi(t+1)-\phi(t)}{2\pi}$$
must be both continuous and integer valued. Thus $F(t)$ equals some constant $n\in \Z$. But we have $n=F(-1/2)=0$ because $\phi(1/2)=\phi(-1/2)$. This shows that $\phi(t+1)=\phi(t)$ for all $t$. 
The regularity properties of $\phi$ are the same as those of $\beta$, which can be seen from the formula
$$\phi(t)=\phi(t_0)-i{\rm Ln}[1+(\beta(t)/\beta(t_0)-1)]$$ 
valid for all $t$ with $|t-t_0|$ small enough. In particular, if $\beta$ is analytic in a strip containing $\R$, the same holds true for $\phi$.

The uniqueness of such a phase is a consequence of the following fact:  if  $\phi$ is continuous, $\phi(0)=0$, and $e^{i\phi(t)}=1$ for all $t$, then $\phi$ must be a constant equal to zero. 
\end{proof}

\subsubsection{When the eigenvalues of $\beta$ never cross}
Here we assume that $N>1$ and $\beta(k)$ has {\bf nondegenerate} spectrum for all $k$, besides being continuous, $\Z$-periodic and with $CS '$ symmetry.  
\begin{lemma}\label{lema-dec-10}
There exists a self-adjoint $N\times N$ matrix family $h(k)$ which is continuous and $\Z$-periodic, has the $CS$ property and  $\beta(k)=e^{ih(k)}$. 
\end{lemma}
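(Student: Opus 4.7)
The plan is to diagonalize $\beta(k)$ fiberwise and then build $h(k)$ from a careful choice of continuous real phase for each eigenvalue branch. The non-crossing hypothesis is what makes the global labeling of eigenvalues/projections well-defined, and it is the sole reason this lemma is easier than the general case described in the counterexample \eqref{nologaritm}.

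First, since $\beta(k)$ is continuous and has $N$ distinct eigenvalues for every $k$, standard Riesz-projection arguments produce continuous functions $\mu_1(k),\dots,\mu_N(k)\in S^1$ and continuous rank-one orthogonal projections $P_1(k),\dots,P_N(k)$ with $\beta(k)=\sum_j \mu_j(k)P_j(k)$. I claim each $\mu_j$ is individually $\mathbb{Z}$-periodic and even in $k$. For periodicity: $\beta(k+1)=\beta(k)$ means $\{\mu_j(k+1)\}=\{\mu_j(k)\}$ as sets, so a permutation $\sigma_k$ of indices matches them; by continuity and non-crossing this permutation is constant on $\mathbb{R}$, and picking $k=0$ (where one can choose the labeling) forces $\sigma_k=\mathrm{id}$. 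For evenness: Lemma \ref{lema-dec-2} gives $\sigma(\beta(k))=\sigma(\beta(-k))$, so $\mu_j(-k)=\mu_{\tau(j)}(k)$ for a permutation $\tau$ which, by the same continuity/connectedness argument together with evaluation at $k=0$, equals the identity. Hence $\mu_j(-k)=\mu_j(k)$ for each $j$.

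With each $\mu_j:\mathbb{R}\to S^1$ now continuous, $\mathbb{Z}$-periodic and even, I apply Lemma \ref{lemaunu} to produce a continuous, $\mathbb{Z}$-periodic, even real phase $\phi_{p,j}$ with $\mu_j(k)=\mu_j(0)e^{i\phi_{p,j}(k)}$; choosing any real $\lambda_j(0)$ with $e^{i\lambda_j(0)}=\mu_j(0)$ and setting $\lambda_j(k):=\lambda_j(0)+\phi_{p,j}(k)$ gives a continuous, $\mathbb{Z}$-periodic, even real function with $\mu_j(k)=e^{i\lambda_j(k)}$. I then define
\begin{equation*}
h(k):=\sum_{j=1}^N \lambda_j(k)\,P_j(k),
\end{equation*}
which is self-adjoint, continuous and $\mathbb{Z}$-periodic by construction, and satisfies $\beta(k)=e^{ih(k)}$ by the functional calculus. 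It remains to verify the $CS$ symmetry $\overline{h(k)}=h(-k)$. Writing the $CS'$ identity $\overline{\beta(k)}=\beta(-k)^{-1}$ via the spectral decomposition and using $\overline{\mu_j(k)}=\mu_j(k)^{-1}=\mu_j(-k)^{-1}$ together with the fact that the $\mu_j(k)^{-1}$ are distinct, uniqueness of the spectral decomposition forces $\overline{P_j(k)}=P_j(-k)$ for every $j$. Combined with $\lambda_j(k)=\lambda_j(-k)$, this yields $\overline{h(k)}=\sum_j \lambda_j(-k)P_j(-k)=h(-k)$.

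The only delicate point is the global, continuous labeling of the spectral data: once that is in place, non-crossing makes the permutation arguments trivial and the rest is a componentwise application of the $N=1$ result already proved. The counterexample \eqref{nologaritm} shows exactly why the non-crossing hypothesis is essential here, and also why the general case later in the paper must invoke the analytic Rellich theorem to track eigenvalues through crossings.
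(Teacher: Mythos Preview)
Your approach is the same as the paper's---diagonalize $\beta(k)$, track eigenvalues and rank-one projections continuously, lift each eigenvalue to a real phase via Lemma~\ref{lemaunu}, then set $h(k)=\sum_j \lambda_j(k)P_j(k)$. The argument for evenness of the $\mu_j$ and for $\overline{P_j(k)}=P_j(-k)$ is correct.

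There is, however, a genuine gap in your periodicity step. You write that ``picking $k=0$ (where one can choose the labeling) forces $\sigma_k=\mathrm{id}$.'' This does not work: once the continuous branches $\mu_j$ are fixed (say by a choice of ordering at $k=0$), the value $\mu_j(1)$ is \emph{determined} by following the branch from $0$ to $1$, and there is no freedom left to make $\mu_j(1)=\mu_j(0)$. In fact, without the $CS'$ hypothesis the monodromy permutation can be nontrivial: for $N=2$ the family $\beta(k)=e^{i\pi k}\bigl(\cos(\pi k)\sigma_1+\sin(\pi k)\sigma_2\bigr)$ is $\mathbb{Z}$-periodic with never-crossing eigenvalues $\pm e^{i\pi k}$, yet the two continuous eigenvalue branches swap after one period. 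So your stated reason cannot be correct.

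The fix is to reverse the order of your two claims. Prove evenness first: from $\sigma(\beta(k))=\sigma(\beta(-k))$ you get a constant permutation $\tau$ with $\mu_j(-k)=\mu_{\tau(j)}(k)$, and evaluating at $k=0$ gives $\mu_j(0)=\mu_{\tau(j)}(0)$, whence $\tau=\mathrm{id}$ by simplicity of the spectrum. \emph{Then} deduce periodicity: with the constant permutation $\sigma$ satisfying $\mu_j(k+1)=\mu_{\sigma(j)}(k)$, evaluate at $k=-1/2$ and use evenness to get $\mu_j(1/2)=\mu_{\sigma(j)}(-1/2)=\mu_{\sigma(j)}(1/2)$, so $\sigma=\mathrm{id}$. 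This is exactly how the paper proceeds (it builds the branches on $[-1/2,1/2]$ with evenness enforced, observes $\lambda_j(1/2)=\lambda_j(-1/2)$ and $P_j(1/2)=P_j(-1/2)$, and only then extends periodically). With this reordering your proof is complete and matches the paper's.
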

\begin{proof}
{ $\beta$ is  uniformly continuous on $ [-1/2,1/2]$. Given any $\epsilon>0$ there exists $\delta>0$ such that $||\beta(k)-\beta(k')||\leq \epsilon$ whenever $|k-k'|\leq \delta$. Because  the eigenvalues never cross, there must exist some positive distance $\epsilon_0>0$ between any two eigenvalues at a given $k$. Let $\delta_0>0$ be such that $||\beta(k)-\beta(k')||\leq \epsilon_0/10$ whenever $|k-k'|\leq \delta_0$.

Because $\beta(0)$ has $N$ nondegenerate eigenvalues, we can order them according to their arguments. Thus we have $-\pi <\phi_1<\phi_2<...<\phi_N\leq \pi$ and $\lambda_j(0):=e^{i\phi_j}$. We know that $||\beta(k)-\beta(0)||\leq \epsilon_0/10$ whenever $|k|\leq \delta_0$. Regular perturbation theory implies that each set $\sigma(\beta(k))\cap B_{\epsilon_0/5}(\lambda_j(0))$ consists of exactly one point when $|k|\leq \delta_0$. The operators 
$$P_j(k):=\frac{i}{2\pi} \int_{|z-\lambda_j(0)|= \epsilon_0/5}(\beta(k)-z)^{-1}dz,\quad |k|\leq \delta_0,$$
are the orthogonal spectral projections of $\beta$ on $|k|\leq \delta_0$. They are continuous. Moreover,
$$\lambda_j(k):={\rm Tr} (\beta(k)P_j(k)),\quad |k|\leq \delta_0$$
are $N$ continuous eigenvalues. Also, because the spectrum of $\beta$ is symmetric (see \eqref{dec-2}), we must have $\lambda_j(k)=\lambda_j(-k)$ on $|k|\leq \delta_0$. Now we can repeat the procedure on $[-2\delta_0,-\delta_0]\cup [\delta_0,2\delta_0]$ starting from $k=\pm \delta_0$.  

After a finite number of steps we obtain some continuous eigenvalues $\lambda_j(k)=\lambda_j(-k)$ on $-1/2\leq k\leq 1/2$, together with their one dimensional Riesz projections.  The disk $|z-\lambda_j(k)|\leq \epsilon_0/5$ does not contain any other eigenvalues of $\beta(k)$ besides $\lambda_j(k)$, uniformly in $|k|\leq 1/2$ and $j$, and
$$P_j(k)=\frac{i}{2\pi} \int_{|z-\lambda_j(k)|= \epsilon_0/5}(\beta(k)-z)^{-1}dz,\quad |k|\leq 1/2.$$
Using the fact that ${}^t[(\beta(k)-z)^{-1}]=(\beta(-k)-z)^{-1}$ (from the $CS '$ property) together with $\lambda_j(k)=\lambda_j(-k)$, we have ${}^tP_j(k)={}^tP_j(-k)$. Moreover, $P_j(-1/2)=P_j(1/2)$ due to the periodicity of $\beta$.

Now any $k\in\R$ can be uniquely written as $k=n+\underline{k}$ with $n\in\Z$ and $-1/2<\underline{k}\leq 1/2$. Define 
$$\Lambda_j(k):=\lambda_j(\underline{k}),\quad P_j(k):=P_j(\underline{k}).$$ Since the spectrum of $\beta$ is periodic (as a set), the $\Lambda_j$'s are the periodic and continuous eigenvalues of $\beta$ on $\R$, which for $n>0$ obey:
$$\Lambda_j(n+ \underline{k})=\lambda_j(\underline{k})=\lambda_j(-\underline{k})=
\Lambda_j(-n-\underline{k}),$$
i.e. they are also even. From Lemma \ref{lemaunu} it follows that we can find $\phi_j$ periodic, continuous and even such that $\Lambda_j(k)=e^{i\phi_j(k)}$. The corresponding projections, $P_j(k)$, (as given by the
Riesz formula) are continuous, periodic, and obey
$$P_j(n+ \underline{k})=P_j(\underline{k})={}^tP_j(-\underline{k})=
{}^tP_j(-n-\underline{k}),\quad \forall n>0.$$
 Thus $h(k):=\sum_{j=1}^N \phi_j(k)P_j(k)$ obeys all the necessary conditions. 
 
 }
 
\end{proof}

\subsubsection{When a global Cayley transform exists}\label{subsection4.3}
Again we assume $N>1$, $\beta(\cdot)$ continuous, $\Z$-periodic, and with $CS '$ symmetry. We also assume that there exists some $\phi_0\in (-\pi,\pi]$ such that $e^{i\phi_0}$ {\bf never} belongs to the spectrum of $\beta(k)$. Then the statement of Lemma \ref{lema-dec-10} remains true, although the proof is rather different, as we will show in what follows. 

The matrix $\gamma(k):=e^{i(\pi-\phi_0)}\beta(k)$ has the property that $-1$ never lies in its spectrum. The matrix 
\begin{equation}\label{dec-14-1}
s(k):=i \{{\rm Id}-\gamma(k)\}\{{\rm Id}+\gamma(k)\}^{-1}
\end{equation}
is self-adjoint, continuous, periodic and $\theta s(k)\theta =s(-k)$, hence it has the $CS$ symmetry. By elementary algebra we get: 
$$\gamma(k)=\{{\rm Id}+is(k)\}\{{\rm Id}-is(k)\}^{-1}.$$
Consider $f(z)=(1+iz)/(1-iz)$ defined on a (narrow) strip $S$ containing the (real) spectrum of $s(k)$ for all $k$. Then:
\begin{equation}\label{dec-14-2}
h(k):=-\pi +\phi_0-i{\rm Ln} \gamma(k)=-\pi +\phi_0+\frac{1}{2\pi} \int_{\partial S} ({\rm Ln} f(z)) (s(k)-z)^{-1}dz 
\end{equation}
obeys all the properties.

\subsubsection{When $\beta(\cdot)$ is real analytic and $\beta(0)$ and $\beta(1/2)$ have nondegenerate spectrum}\label{subsection4.4}
This is the last particular situation in which we prove the existence of a `nice' logarithm with the properties listed in Lemma \ref{lema-dec-10}. The real analyticity of $\beta(\cdot)$ allows us to relax the strict non-crossing condition we had in Lemma \ref{lema-dec-10}, now we only demand nondegenerate spectrum at $k=0$ and $k=1/2$. Note that the periodicity of $\beta(\cdot)$ implies, in particular, that the spectrum at $k=-1/2$ is also nondegenerate.  

The first important remark is that the spectrum of $\beta(\cdot)$ consists of real analytic eigenvalues. Let us construct them. Assume that $e^{i\phi_0}$ is not in the spectrum of $\beta(0)$ and define 
$\gamma(k):=e^{i(\pi-\phi_0)}\beta(k)$. Then ${\rm Id}+\gamma(k)$ is invertible for $k$ near $0$ and we can proceed as in \eqref{dec-14-1} and \eqref{dec-14-2} and construct a self-adjoint matrix $h(\cdot)$ which is analytic in a small open disk containing $k=0$ such that 
$\beta(k)= e^{i(-\pi+\phi_0+h(k))}$ near $0$.  We can label the eigenvalues of $\beta(k)$ as $\{\lambda_j(k)=e^{i\phi_j(k)}\}_{j=1}^N$ choosing for example that at $k=0$ we have $-\pi< \phi_1(0)<...<\phi_N(0)\leq \pi$. 

{ The second step is to extend these projections and eigenvalues to $|k|\leq 1/2$ so that they remain real-analytic. If the eigenvalues of $\beta$ never cross, then the projections and eigenvalues we constructed in Lemma \ref{lema-dec-10} are the objects we are looking for. But here the eigenvalues can cross, and we need to be able to analytically continue the one dimensional projections through crossing points. By applying a finite number of times the Analytic Rellich Theorem \cite{R-S4} to the `local self-adjoint logarithms' \eqref{dec-14-2} we can follow analytically each projection $P_j(k)$ and each corresponding eigenvalue $\lambda_j(\cdot)$ to $|k|\leq 1/2$.  Because $\beta$ is periodic and $\beta(1/2)=\beta(-1/2)$ are non-degenerate, this construction can be extended to the whole of $\R$. 
}

Near $k=0$  the spectrum of $\beta$ is nondegenerate and we can write each $P_j(k)$ as a Riesz integral on a $k$-independent contour:
$$P_j(k)=\frac{i}{2\pi} \int_{|z-\lambda_j(0)|= r_0}(\beta(k)-z)^{-1}dz,$$
hence ${}^t[P_j(k)]=P_j(-k)$ near zero, identity which remains true on $\R$ through analytic continuation.

Also, because  $\lambda_j(k)=\lambda_j(-k)$ near $k=0$ (due to the $CS '$ symmetry and the non-degeneracy of the spectrum of $\beta(0)$), this equality remains true on $\R$ by analytic continuation.

The only thing we still need to prove is the periodicity of eigenvalues and their eigenprojections. We know that $\lambda_j(-1/2)=\lambda_j(1/2)=a_j$ from the symmetry, all of them being nondegenerate since by assumption $\beta(1/2)[=\beta(-1/2)]$ has nondegenerate spectrum. Let $r_0>0$ small enough such that $|z-a_j|\leq r_0$ does not contain other eigenvalues. Then if $k$ is small enough we have:
$$\frac{i}{2\pi} \int_{|z-a_j|= r_0}(\beta(-1/2+k)-z)^{-1}dz=\frac{i}{2\pi} \int_{|z-a_j|= r_0}(\beta(1/2+k)-z)^{-1}dz$$
due to the periodicity of $\beta$, thus $P_j(-1/2+k)=P_j(1/2+k)$ near $k=0$, hence they are equal everywhere through real analyticity. This proves that each $P_j$ is periodic. Finally, for small enough $k$ we have:
$$\lambda_j(-1/2+k)={\rm Tr}\{P_j(-1/2+k)\beta(-1/2+k)\}={\rm Tr}\{P_j(1/2+k)\beta(1/2+k)\}=\lambda_j(1/2+k)$$
which must hold everywhere so they are periodic. Now we can apply Lemma \ref{lemaunu} for each $\lambda_j$ obtaining $\lambda_j(k)=e^{i\phi_j(k)}$ where $\phi_j$ is continuous, $\Z$-periodic and even. Then the solution is $h(k):=\sum_{j=1}^N \phi_j(k)\;P_j(k)$.

\subsubsection{The general case}\label{atwostep}

Let us go back to \eqref{matrice-dec-2}, the starting point of our discussion. Remember that the matrix family $\beta(\cdot)$ has the $CS '$ symmetry and we want to unitarily rotate the $\Psi_m$'s so that the new basis stays $CS$ symmetric, remains at least continuous and is also periodic.  

Let us rewrite \eqref{matrice-dec-2} by setting $k_1$ equal to $-1/2$:
 \begin{equation}\label{dec-14-3}
\Psi_m(1/2,k_2)= \sum_{n=1}^N[\beta(k_2)]_{nm}\; \Psi_n(-1/2,k_2).
\end{equation}
We will see  in the next lemma that the only relevant interval is $k_1\in [-1/2,1/2]$.
\begin{lemma}\label{lemma-dec-16}
Assume that the vectors $\{\widetilde{\Xi}_n(x,k_2)\}_{n=1}^N$ form an orthonormal basis for $P(x,k_2)$ and are continuous on $[-1/2,1/2]\times \R$, have the $CS$ symmetry $\overline{\widetilde{\Xi}_n(x,k_2)}=\widetilde{\Xi}_n(-x,-k_2)$ and moreover,  $\widetilde{\Xi}_n(-1/2,k_2)=\widetilde{\Xi}_n(1/2,k_2)$. Given $k_1\in\R$, we can uniquely write it as $k_1=n+x$ with $n\in\Z$ and $x\in (-1/2,1/2]$. Then the vectors defined by 
$ \Xi_n(k_1,k_2):=\widetilde{\Xi}_n(x,k_2)$ are continuous on $\R^2$,  $CS$ symmetric and $\Z^2$-periodic.
\end{lemma}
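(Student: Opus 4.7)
The plan is to unfold the fundamental domain $(-1/2,1/2]\times\R$ by translation in $k_1$, and then check the four conditions (orthonormal basis, continuity, $\Z^2$-periodicity, $CS$-symmetry) in turn. The only nontrivial task will be verifying continuity across the ``seam lines'' $\{k_1=n+1/2\}$ and writing down the $CS$-symmetry at those same half-integer points.

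First I would note that the orthonormal basis property is immediate: since $P(\cdot,k_2)$ is $\Z$-periodic in its first argument, one has $P(k_1,k_2)=P(x,k_2)$, and so $\{\Xi_n(k_1,k_2)\}_{n=1}^N=\{\widetilde{\Xi}_n(x,k_2)\}_{n=1}^N$ is automatically an orthonormal basis of $\mathrm{Ran}\,P(k_1,k_2)$. Periodicity in $k_1$ is built into the definition: if $k_1=n+x$ with $x\in(-1/2,1/2]$ then $k_1+1=(n+1)+x$ and the same $x$ is selected, so $\Xi_n(k_1+1,k_2)=\widetilde{\Xi}_n(x,k_2)=\Xi_n(k_1,k_2)$. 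Periodicity in $k_2$ is inherited directly from the $k_2$-periodicity of $\widetilde{\Xi}_n(x,\cdot)$ carried along from \eqref{III.25} and \eqref{III.27}.

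The continuity verification is the main place where the matching condition is used. Inside each strip $\{n-1/2<k_1<n+1/2\}$ continuity of $\Xi_n$ follows from the joint continuity of $\widetilde{\Xi}_n$ on $[-1/2,1/2]\times\R$, since the change of variables $k_1\mapsto x=k_1-n$ is a homeomorphism onto $(-1/2,1/2)$. The only points that need to be inspected are the seam lines $k_1=n+1/2$. Approaching such a point from the left one has $x\to 1/2^-$, hence $\Xi_n(k_1,k_2)\to\widetilde{\Xi}_n(1/2,k_2)$; approaching from the right one has $k_1=(n+1)+x$ with $x\to -1/2^+$, hence $\Xi_n(k_1,k_2)\to\widetilde{\Xi}_n(-1/2,k_2)$. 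The hypothesis $\widetilde{\Xi}_n(-1/2,k_2)=\widetilde{\Xi}_n(1/2,k_2)$ identifies the two limits, giving continuity across every seam.

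Finally, for the $CS$-symmetry one checks $\overline{\Xi_n(k_1,k_2)}=\Xi_n(-k_1,-k_2)$ by cases on the canonical representative of $k_1$. If $k_1=n+x$ with $x\in(-1/2,1/2)$ then $-k_1=(-n)+(-x)$ is already in canonical form, so $\Xi_n(-k_1,-k_2)=\widetilde{\Xi}_n(-x,-k_2)=\overline{\widetilde{\Xi}_n(x,k_2)}=\overline{\Xi_n(k_1,k_2)}$, directly from the $CS$-symmetry assumed for $\widetilde{\Xi}_n$. The remaining case $x=1/2$ is the one where both hypotheses must be used together: $-k_1=-(n+1)+1/2$ is the canonical form, so $\Xi_n(-k_1,-k_2)=\widetilde{\Xi}_n(1/2,-k_2)$, which by the seam condition equals $\widetilde{\Xi}_n(-1/2,-k_2)$, and this in turn equals $\overline{\widetilde{\Xi}_n(1/2,k_2)}=\overline{\Xi_n(k_1,k_2)}$ by the $CS$-identity applied at $(1/2,k_2)$. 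The compatibility of these two conditions is exactly what the lemma is designed to package, and no further estimates are needed.
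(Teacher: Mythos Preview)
Your argument is correct and follows the same elementary route as the paper's own proof: periodicity in $k_1$ is by construction, continuity across the seam lines $k_1\in\Z+1/2$ uses the matching hypothesis $\widetilde{\Xi}_n(-1/2,k_2)=\widetilde{\Xi}_n(1/2,k_2)$, and the $CS$ symmetry reduces to that of $\widetilde{\Xi}_n$. Your treatment is in fact more careful than the paper's: the paper compresses everything into the single line $\overline{\widetilde{\Xi}_n(x,k_2)}=\widetilde{\Xi}_n(-x,-k_2)=\Xi_n(-x-n,-k_2)$ and does not separately discuss continuity or the boundary case $x=1/2$, whereas you correctly note that this case requires invoking both the matching condition and the $CS$ identity together.
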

\begin{proof}
The periodicity is a consequence of the identity  $\Xi_n(x+n,k_2)=\widetilde{\Xi}_n(x,k_2)$ which is valid for all $n\in\Z$ and $x\in (-1/2,1/2]$. The only thing we still need to prove is the $CS$ symmetry. We have:
$$\overline{\Xi_n(x+n,k_2)}=\overline{\widetilde{\Xi}_n(x,k_2)}
=\widetilde{\Xi}_n(-x,-k_2)=\Xi_n(-x-n,-k_2).$$
\end{proof}

\vspace{0.5cm} 

The main technical result needed for the proof of Theorem  \ref{metateorema} is the following:

\begin{proposition}\label{teorema-u}
There exists a family of unitary matrices $u(x,k_2)$ continuous on $[-1/2,1/2]\times \R$ and 
$\mathbb{Z}$-periodic in $k_2$, such that: 
\begin{equation}\label{nov10}
[u(-1/2,k_2)]^{-1}\beta(k_2)u(1/2,k_2)={\rm Id},
\end{equation}
which also has $CS$ symmetry:
\begin{equation}\label{nov11}
\theta u(x,k_2)\theta =u(-x,-k_2).
\end{equation}
Moreover, the vectors:
\begin{align}\label{nov4}
\widetilde{\Xi}_m(x,k_2):= \sum_{m=1}^N [u(x,k_2)]_{nm} \Psi_n(x,k_2)
\end{align}
obey the conditions needed in Lemma \ref{lemma-dec-16}.
\end{proposition}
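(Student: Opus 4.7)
If we can construct a self-adjoint $N\times N$ matrix family $h(k_2)$ which is continuous, $\mathbb{Z}$-periodic, $CS$-symmetric (i.e.\ $\theta h(k_2)\theta=h(-k_2)$) and satisfies $\beta(k_2)=e^{ih(k_2)}$, then the ansatz
$$u(x,k_2):=e^{-i(x+1/2)h(k_2)}$$
solves the problem. Indeed, $u(-1/2,k_2)^{-1}\beta(k_2)u(1/2,k_2)=e^{-ih(k_2)}\beta(k_2)=\mathrm{Id}$, giving \eqref{nov10}, while \eqref{nov11} is immediate from Lemma \ref{lemmma-dec-3}. That the rotated vectors in \eqref{nov4} satisfy the hypotheses of Lemma \ref{lemma-dec-16} is then a direct verification: the continuity and $CS$ symmetry are inherited from $u$, while the matching $\widetilde{\Xi}_m(-1/2,k_2)=\widetilde{\Xi}_m(1/2,k_2)$ is equivalent, via \eqref{matrice-dec-2}, to the relation $u(-1/2,k_2)=\beta(k_2)u(1/2,k_2)$.

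\textbf{Two-step strategy.} As the example \eqref{nologaritm} shows, such an $h$ need not exist globally. The remedy is to split $u=u_1\cdot u_2$, seeking a factorization
$$\beta(k_2)=\beta^{(1)}(k_2)\,\beta^{(2)}(k_2)$$
into two continuous, $\mathbb{Z}$-periodic, $CS'$-symmetric unitary families, each of which admits a good logarithm via one of the four special-case constructions of the previous subsections. Given such logarithms $h_i$ with $\beta^{(i)}(k_2)=e^{ih_i(k_2)}$, set $u_i(x,k_2):=e^{-i(x+1/2)h_i(k_2)}$ and $u:=u_1u_2$. Each factor has the $CS$ symmetry by Lemma \ref{lemmma-dec-3}, hence so does their product, and
$$u(-1/2,k_2)^{-1}\beta(k_2)u(1/2,k_2)=e^{-ih_2}e^{-ih_1}\beta^{(1)}\beta^{(2)}=\mathrm{Id},$$
so \eqref{nov10} and \eqref{nov11} hold.

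\textbf{Choosing the factorization.} Because $P(\cdot)$ is real analytic we may assume, by Lemma \ref{lema3} and Lemma \ref{lema3'}, that $\beta$ itself is real analytic. The Analytic Rellich Theorem then produces globally real-analytic eigenvalues $\lambda_j(k_2)=e^{i\phi_j(k_2)}$ and one-dimensional Riesz projections $P_j(k_2)$. The obstruction to a direct logarithm comes from the fact that the phases $\phi_j$, though analytic, may fail to be periodic or even. We choose $\beta^{(1)}(k_2)$ to be a $CS'$-symmetric real-analytic unitary family with nondegenerate spectrum at the symmetric points $k_2=0$ and $k_2=1/2$, whose role is to "absorb" the winding/permutation obstruction of $\beta$; concretely one takes $\beta^{(1)}(k_2)=e^{ih_1(k_2)}$ where $h_1$ is constructed from the integer winding data of the Rellich phases. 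Since $\beta^{(1)}$ falls under the hypotheses of Subsection \ref{subsection4.4}, the good logarithm $h_1$ exists. By construction, $\beta^{(2)}:=(\beta^{(1)})^{-1}\beta$ is a continuous, $\mathbb{Z}$-periodic, $CS'$-symmetric unitary family whose spectrum misses some $e^{i\phi_0}$ uniformly in $k_2$, so Subsection \ref{subsection4.3} (Cayley transform) provides the good logarithm $h_2$.

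\textbf{Main obstacle.} The delicate step is the explicit construction of $\beta^{(1)}$: it must simultaneously (i) absorb the full topological obstruction of $\beta$ (winding numbers and the Rellich permutation induced by the $CS'$-spectral symmetry $\lambda_j(-k_2)=\lambda_{\sigma(j)}(k_2)$), (ii) preserve the $CS'$ symmetry, and (iii) have nondegenerate spectrum at $k_2=0,1/2$ so that both Subsection \ref{subsection4.4} applies and $\beta^{(2)}$ inherits a uniform spectral gap away from some $e^{i\phi_0}$. All three are controlled by the analyticity of $\beta$ together with the finiteness of crossings over one period. Once $h_1$ and $h_2$ are at hand the remaining verification of \eqref{nov10}, \eqref{nov11} and of the hypotheses of Lemma \ref{lemma-dec-16} is mechanical.
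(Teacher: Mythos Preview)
Your proposal has two genuine errors that break the argument.

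\textbf{The ansatz violates \eqref{nov11}.} With $u(x,k_2)=e^{-i(x+1/2)h(k_2)}$ one computes
\[
\theta\,u(x,k_2)\,\theta=e^{i(x+1/2)h(-k_2)},\qquad u(-x,-k_2)=e^{i(x-1/2)h(-k_2)},
\]
and these agree only if $e^{ih(-k_2)}=\mathrm{Id}$ for all $k_2$, i.e.\ $\beta\equiv\mathrm{Id}$. Lemma~\ref{lemmma-dec-3} applies to $e^{-ixh(k_2)}$, not to your shifted exponent. With the correct ansatz $u(x,k_2)=e^{-ixh(k_2)}$ one has $u(-1/2,k_2)^{-1}\beta(k_2)u(1/2,k_2)=e^{-ih/2}\beta\,e^{-ih/2}$, a symmetric sandwich, not a one-sided product; this is what forces the paper's structure.

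\textbf{The factorization is incompatible with $CS'$.} If both $\beta^{(1)}$ and $\beta^{(2)}$ are $CS'$-symmetric, then
\[
\theta\bigl(\beta^{(1)}(k_2)\beta^{(2)}(k_2)\bigr)\theta=[\beta^{(1)}(-k_2)]^{-1}[\beta^{(2)}(-k_2)]^{-1}=\bigl[\beta^{(2)}(-k_2)\beta^{(1)}(-k_2)\bigr]^{-1},
\]
so the product is $CS'$-symmetric only if the factors commute; in that case $\beta=e^{i(h_1+h_2)}$ already has a single good logarithm and the two-step is superfluous. Moreover, even with your (non-$CS$) ansatz, the computation you wrote is wrong: since $u_i(-1/2)=\mathrm{Id}$ and $u_i(1/2)=e^{-ih_i}$, one gets $u(-1/2)^{-1}\beta\,u(1/2)=\beta\,e^{-ih_1}e^{-ih_2}=e^{ih_1}e^{ih_2}e^{-ih_1}e^{-ih_2}$, which is $\mathrm{Id}$ only when $h_1,h_2$ commute.

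The paper's two-step is instead a \emph{nested sandwich} (Lemma~\ref{lemma-dec-17} and \eqref{zumba2}): one finds a good $h$ so that $\tilde\beta:=e^{-ih/2}\beta\,e^{-ih/2}$ satisfies $\|\tilde\beta-\mathrm{Id}\|\le 1/2$ (equation~\eqref{zumba1}), whence $\tilde\beta$ admits a good logarithm $\tilde h$ via the Cayley transform of Subsection~\ref{subsection4.3}. The first $h$ is produced not from $\beta$ but from a real-analytic approximant $\beta_n$ of $\beta$ with nondegenerate spectrum at $0$ and $1/2$ (Lemma~\ref{propo-dec-1}), to which Subsection~\ref{subsection4.4} applies directly. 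Your description of $\beta^{(1)}$ as ``absorbing the winding/permutation obstruction'' is not a construction, and as you acknowledge, this is exactly the hard step; the paper's approximation argument is what actually fills it.
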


\subsubsection{Proof of Proposition \ref{teorema-u}}

Before anything else, let us relate the proposition to the cases in which we can write $\beta(k_2)=e^{i h(k_2)}$ with some `nice' $h(k_2)$. In all those cases we can put $u(x,k_2)=e^{-ix h(k_2)}$ and we immediately see that 
\eqref{nov10} means no more than:
$$e^{-\frac{i}{2}h(k_2)}\beta(k_2)e^{-\frac{i}{2}h(k_2)}={\rm Id}.$$

Since a `good' logarithm might not always exist, the new important idea is to realize that one can straighten up the initial basis by performing {\bf two successive} unitary rotations.   First, given $\beta$ as in the theorem, we will show that we can always find some `good' (i.e. $\Z$-periodic, with $CS$ symmetry and continuous) $h(k_2)$ such that   
\begin{equation}\label{zumba1}
\sup_{k_2\in\R}||e^{-\frac{i}{2}h(k_2)}\beta(k_2)e^{-\frac{i}{2}h(k_2)}-{\rm Id}||\leq  1/2.
\end{equation}
Then rotating the $\Psi$-basis with $e^{-ix h(k_2)}$ we get  $\widetilde{\Psi}_m(x,k_2):= \sum_{n=1}^N [e^{-ix h(k_2)}]_{nm} \Psi_n(x,k_2)$ 
which obey: 
\begin{equation}\label{dec-14-4}
\widetilde{\Psi}_m(1/2,k_2)= \sum_{n=1}^N [e^{-\frac{i}{2}h(k_2)}\beta(k_2)e^{-\frac{i}{2}h(k_2)}]_{nm}\; \widetilde{\Psi}_n(-1/2,k_2).
\end{equation}
But now the new matching unitary matrix $$\tilde{\beta}(k_2):=e^{-\frac{i}{2}h(k_2)}\beta(k_2)e^{-\frac{i}{2}h(k_2)}={\rm Id}+e^{-\frac{i}{2}h(k_2)}[\beta(k_2)-e^{ih(k_2)}]e^{-\frac{i}{2}h(k_2)}$$ 
has the $CS '$ property, is periodic, continuous and always has $-1$ in its resolvent set. Reasoning as in Subsection \ref{subsection4.3} (see formula \eqref{dec-14-2}) we obtain another `good' $\tilde{h}(k_2)$ such that $\tilde{\beta}(k_2)=e^{i \tilde{h}(k_2)}$. By rotating $\widetilde{\Psi}_n(x,k_2)$ with $e^{-ix \tilde{h}(k_2)}$ we obtain our $\widetilde{\Xi}_n(x,k_2)$. This also gives an expression for the unitary $u$:
\begin{equation}\label{zumba2}
u(x,k_2)=e^{-ix h(k_2)}e^{-ix \tilde{h}(k_2)},\quad e^{-\frac{i}{2} \tilde{h}(k_2)}e^{-\frac{i}{2}h(k_2)}\beta(k_2)e^{-\frac{i}{2} h(k_2)}e^{-\frac{i}{2} \tilde{h}(k_2)}={\rm Id}.
\end{equation}

Now let us give the precise statements. 
\begin{lemma}\label{lemma-dec-17}
Assume that we can find $M\geq 1$ self-adjoint $N\times N$ matrix families $H_j(k_2)$ which are $\mathbb{Z}$-periodic, continuous and with $CS$ symmetry
so that:
\begin{align}\label{nov12}
e^{-\frac{i}{2} H_1(k_2)}\cdot \dots \cdot e^{-\frac{i}{2} H_M(k_2)}\beta(k_2)e^{-\frac{i}{2} H_M(k_2)}\cdot \dots \cdot e^{-\frac{i}{2} 
H_1(k_2)}={\rm Id}.
\end{align}
Then $u(x,k_2):=e^{-ix H_M(k_2)}...e^{-ix H_1(k_2)}$ is  unitary, $\mathbb{Z}$-periodic in the second variable and obeys \eqref{nov10} and \eqref{nov11}. Moreover, the vectors defined in \eqref{nov4} obey the conditions needed in Lemma \ref{lemma-dec-16}.
\end{lemma}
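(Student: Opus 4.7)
The proof breaks into five routine verifications once the structure is unwound. First, unitarity of $u(x,k_2)$ is immediate since each factor $e^{-ixH_j(k_2)}$ is unitary (self-adjointness of $H_j$), and so is any product of unitaries; $\mathbb{Z}$-periodicity in $k_2$ follows from the periodicity of each $H_j(\cdot)$. The main content is to check the boundary identity \eqref{nov10} and the symmetry \eqref{nov11}, and then to feed the result into Lemma \ref{lemma-dec-16}.

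For \eqref{nov10}, the key observation is that $u(1/2,k_2) = e^{-\frac{i}{2}H_M(k_2)}\cdots e^{-\frac{i}{2}H_1(k_2)}$ and, by unitarity, $[u(-1/2,k_2)]^{-1} = [e^{\frac{i}{2}H_M(k_2)}\cdots e^{\frac{i}{2}H_1(k_2)}]^{-1} = e^{-\frac{i}{2}H_1(k_2)}\cdots e^{-\frac{i}{2}H_M(k_2)}$. Multiplying these sandwich the assumption \eqref{nov12} around $\beta(k_2)$, giving $\mathrm{Id}$. For \eqref{nov11}, the point is that each $H_j$ has the $CS$ property $\theta H_j(k_2)\theta = H_j(-k_2)$, and using antilinearity of $\theta$ in the power series one obtains $\theta e^{-ixH_j(k_2)}\theta = e^{ixH_j(-k_2)} = e^{-i(-x)H_j(-k_2)}$. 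Since $\theta^2=\mathrm{Id}$, $\theta$ distributes over products, and the order of the factors is preserved, we conclude $\theta u(x,k_2)\theta = u(-x,-k_2)$.

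It remains to verify that the vectors $\widetilde{\Xi}_m$ defined by \eqref{nov4} satisfy the three hypotheses of Lemma \ref{lemma-dec-16}. Continuity on $[-1/2,1/2]\times\R$ is clear since both $u$ and the $\Psi_n$'s are continuous. The matching at $x=\pm 1/2$ follows from \eqref{matrice-dec-2}: expanding $\Psi_n(1/2,k_2) = \sum_p \beta_{pn}(k_2)\Psi_p(-1/2,k_2)$ gives $\widetilde{\Xi}_m(1/2,k_2) = \sum_p [\beta(k_2)u(1/2,k_2)]_{pm}\Psi_p(-1/2,k_2)$, and by \eqref{nov10} this equals $\sum_p [u(-1/2,k_2)]_{pm}\Psi_p(-1/2,k_2) = \widetilde{\Xi}_m(-1/2,k_2)$. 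For the $CS$ symmetry $\overline{\widetilde{\Xi}_m(x,k_2)} = \widetilde{\Xi}_m(-x,-k_2)$, I first need the analogous identity for the $\Psi_n$'s: combining $\Psi_n(\x) = A_{k_2}(k_1,0)\Xi_n(0,k_2)$ with \eqref{III.25} and \eqref{III.26} yields $\overline{\Psi_n(x,k_2)} = \Psi_n(-x,-k_2)$. Then, reading \eqref{nov11} as $\overline{[u(x,k_2)]_{nm}} = [u(-x,-k_2)]_{nm}$ (this is how a complex conjugation $\theta$ acts on matrix entries), the two symmetries combine to give the required identity.

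The main potential obstacle is purely bookkeeping in the $CS$ calculation: the antilinearity of $\theta$ must be carefully tracked when pulling it through the exponential series, and the order in which $\theta$ distributes over the products of the $H_j$'s must be kept. Beyond that, every step reduces to a one-line algebraic identity using assumption \eqref{nov12} and the previously established $CS$ properties of the $\Psi_n$'s and the $H_j$'s.
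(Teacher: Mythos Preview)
Your proof is correct and follows essentially the same approach as the paper: unitarity and periodicity are immediate, \eqref{nov10} is read off from \eqref{nov12}, \eqref{nov11} comes from the $CS$ property of each $H_j$ (the paper simply cites Lemma~\ref{lemmma-dec-3} for this), and the matching $\widetilde{\Xi}_m(1/2,k_2)=\widetilde{\Xi}_m(-1/2,k_2)$ is obtained by the same substitution using \eqref{matrice-dec-2} and \eqref{nov10}. You additionally spell out the $CS$ symmetry of the $\widetilde{\Xi}_m$'s, which the paper leaves implicit; this is a harmless elaboration.
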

\begin{proof}
Unitarity of $u$ and its periodicity in $k_2$ are obvious, while the $CS$ symmetry is a consequence of Lemma \ref{lemmma-dec-3}. Also, \eqref{nov10} is a direct consequence of \eqref{nov12}. 

Now let us consider the vectors in \eqref{nov4}. The only thing left to be proved is that they are the same when $x=\pm 1/2$. We have: \begin{align}\label{nov5}
\Psi_m(x,k_2)= \sum_{s=1}^N [u(x,k_2)^{-1}]_{sm} \widetilde{\Xi}_s(x,k_2).
\end{align}
Then:
\begin{align}\label{nov6}
\widetilde{\Xi}_j(1/2,k_2)&= \sum_{r=1}^N [u(1/2,k_2)]_{rj} \Psi_r(1/2,k_2)=
\sum_{r=1}^N [u(1/2,k_2)]_{rj}\sum_{m=1}^N\beta_{mr}(k_2)\Psi_m(-1/2,k_2)\nonumber \\
& =\sum_{r=1}^N [u(1/2,k_2)]_{rj}\sum_{m=1}^N\beta_{mr}(k_2)\sum_{s=1}^N [u(-1/2,k_2)^{-1}]_{sm} \widetilde{\Xi}_s(-1/2,k_2)\nonumber \\
&=\sum_{s=1}^N [u(-1/2,k_2)^{-1}\beta(k_2)u(1/2,k_2)]_{sj}\widetilde{\Xi}_s(-1/2,k_2)
\nonumber \\
&=\widetilde{\Xi}_j(-1/2,k_2),
\end{align}
where in the last line we used \eqref{nov10}.
\end{proof}

\vspace{0.5cm}

The last difficult technical problem we still have to solve is the construction of the approximation given in \eqref{zumba1}. When this is done, \eqref{zumba2} provides the unitary we are looking for in Proposition \ref{teorema-u}, which finishes the construction. 

The following lemma solves the approximation problem and is of independent interest:

\begin{lemma}\label{propo-dec-1}
Let $\beta(k)$ be a {\bf continuous} family of unitary matrices, $\Z$-periodic and satisfying the $CS '$ property. Then there exists a sequence of {\bf real analytic} families $\beta_n(k)$  of unitary matrices, $\Z$-periodic, {\bf with nondegenerate spectrum at $0$ and $1/2$} and satisfying the $CS '$ symmetry such that:
\begin{align}\label{zumba3}
\lim_{n\to\infty}\sup_{k\in \R}||\beta_n(k)-\beta(k)||=0.
\end{align}
\end{lemma}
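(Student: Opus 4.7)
The strategy breaks naturally into three stages: (i) produce a real-analytic periodic approximation of $\beta$, (ii) project it onto the unitary group while preserving the $CS'$ property, and (iii) perturb at the two fixed points $k=0$ and $k=1/2$ to split any accidental degeneracies, again preserving $CS'$.

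For stages (i)--(ii), I would pick an even, real Schwartz mollifier $g$ with entire extension (for instance a Gaussian) and form $\beta_\delta := g_\delta * \beta$. As in the proof of Lemma~\ref{lema3}, $\beta_\delta$ extends to a $\mathbb{Z}$-periodic holomorphic matrix-valued function on some complex strip, and $\beta_\delta \to \beta$ uniformly on $\R$. Convolution destroys unitarity, but the evenness of $g$ combined with $\overline{\beta(k)} = \beta(-k)^{-1}$ yields the substitute relation $\overline{\beta_\delta(k)} = \beta_\delta(-k)^*$. To regain unitarity I would take the polar part
$$U_\delta(k) := \beta_\delta(k)\bigl[\beta_\delta(k)^* \beta_\delta(k)\bigr]^{-1/2},$$
well-defined for small $\delta$ via the Riesz integral on a contour around $z=1$, hence real-analytic, $\mathbb{Z}$-periodic and unitary, with $U_\delta \to \beta$ uniformly. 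The $CS'$ property $\overline{U_\delta(k)} = U_\delta(-k)^{-1}$ follows from the previous identity together with the intertwining $A^*(AA^*)^{-1/2} = (A^*A)^{-1/2} A^*$ in Riesz functional calculus.

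For stage (iii), note that $CS'$ at the two fixed points of $k \mapsto -k \bmod \mathbb{Z}$ forces $U_\delta(j/2)^T = U_\delta(j/2)$. Every symmetric unitary $M = A + iB$ has commuting real-symmetric real and imaginary parts with $A^2 + B^2 = I$, so admits a simultaneous orthogonal diagonalization $M = O \Lambda O^T$, $O \in O(N)$, $\Lambda$ diagonal unitary. I would factor $U_\delta(j/2) = O_j \Lambda_j O_j^T$ for $j=0,1$, choose small real diagonal matrices $K_j$ so that the numbers $(\Lambda_j)_{ii} e^{i(K_j)_{ii}}$ are pairwise distinct, and set $H_j := O_j K_j O_j^T$ (real symmetric). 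Interpolating via
$$H(k) := \cos^2(\pi k)\, H_0 + \sin^2(\pi k)\, H_1$$
produces an entire, $\mathbb{Z}$-periodic, real-symmetric-valued family with $H(0)=H_0$, $H(1/2)=H_1$, hence $CS$-symmetric. By Lemma~\ref{lemmma-dec-3}, $e^{iH(k)/2}$ is unitary with $CS'$ symmetry. I would then define
$$\beta_n(k) := e^{iH^{(n)}(k)/2}\, U_{\delta_n}(k)\, e^{iH^{(n)}(k)/2},$$
taking $\delta_n \to 0$ and the $K_j^{(n)}$ small (chosen after $\delta_n$ is fixed). Unitarity, $\mathbb{Z}$-periodicity and real-analyticity are immediate. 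A direct computation using the diagonal forms shows $\beta_n(j/2) = O_{j,n}\,\Lambda_{j,n} e^{iK_j^{(n)}} O_{j,n}^T$, a diagonal unitary with simple spectrum, and uniform convergence $\beta_n \to \beta$ follows.

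The main obstacle is preserving $CS'$ while perturbing at the fixed points. Left multiplication by a $CS'$-symmetric correction fails, because the antiunitary $\theta$ reverses the order of products: if $U, V$ are $CS'$ then $\theta(UV)\theta = V^{-1}U^{-1} = (UV)^{-1}$ only if $U$ and $V$ commute. The crucial observation is that the symmetric \emph{sandwich} $VUV$ unconditionally preserves $CS'$: $\theta(VUV)\theta = V(-k)^{-1}U(-k)^{-1}V(-k)^{-1} = (VUV)(-k)^{-1}$. A secondary technical nuisance is checking that the inverse square root in the polar decomposition commutes appropriately with $\theta$, which reduces to the elementary functional calculus identity $p(A^*A)A^* = A^*p(AA^*)$ applied termwise to the series for $z^{-1/2}$ around $z=1$.
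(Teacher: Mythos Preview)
Your proof is correct and takes a genuinely different route from the paper's.

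The paper proceeds in the opposite order. In its Step~1 it first splits the degeneracies at $k=0$ and $k=1/2$ while keeping $\beta$ merely continuous: near each fixed point it writes $\beta$ as the exponential of a local self-adjoint logarithm (via cluster projections and a Cayley transform), then adds a bump-function-supported perturbation built from one-dimensional pieces of the real spectral projections $\Pi_j(0)$ and $\Pi_j(1/2)$. Only afterwards, in Step~2, does it mollify (with a Lorentzian rather than a Gaussian) and polar-decompose, noting that the already-simple spectrum at the fixed points survives the small analytic perturbation by continuity of eigenvalues.

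Your approach reverses this: you mollify and polar-decompose first to obtain a real-analytic $CS'$ family $U_\delta$, and then split the degeneracies globally. The key new idea, absent from the paper, is the Autonne--Takagi factorization $U_\delta(j/2)=O_j\Lambda_j O_j^T$ of the symmetric unitaries at the fixed points, combined with the observation that the \emph{symmetric sandwich} $V U V$ preserves $CS'$ unconditionally (whereas a one-sided product $VU$ would require commutativity). This lets you correct with a single entire, $\Z$-periodic, $CS$-symmetric generator $H(k)=\cos^2(\pi k)H_0+\sin^2(\pi k)H_1$ instead of a locally supported bump.

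Each approach has its merits. The paper's local perturbation is conceptually straightforward but requires constructing local logarithms and handling the cluster structure explicitly. Your sandwich construction is slicker and more global; it also makes transparent why the perturbation can be kept real-analytic throughout, since $H(k)$ is manifestly entire. The mollification-plus-polar-decomposition step is essentially identical in both arguments (your verification of $CS'$ for $U_\delta$ via ${}^t\beta_\delta(k)=\beta_\delta(-k)$ and the intertwining $A^*(AA^*)^{-1/2}=(A^*A)^{-1/2}A^*$ matches the paper's computation).
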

\begin{proof} Before giving the proof, let us remark that each $\beta_n$ satisfies the conditions from Subsection \ref{subsection4.4}, hence they can be written as $\beta_n(k)=e^{i h_n(k)}$ where each $h_n$ is continuous, periodic and with $CS$ symmetry. This implies \eqref{zumba1} since $||\beta_n(k)-\beta(k)||=||{\rm Id}-e^{-i h_n(k)/2}\beta(k)e^{-i h_n(k)/2}||$. 

Now let us prove the lemma. We split the proof into two parts. First, we construct a continuous family $\beta_n$, with a completely nondegenerate spectrum at $0$ and $1/2$, $\Z$-periodic and with $CS '$ symmetry. Second, we show how it can be made real analytic by preserving all other properties.  

\noindent {\bf Step 1.} Assume that the spectrum of $\beta(0)$ consists of $1\leq p_0\leq  N$ clusters of degenerate eigenvalues labeled as $\{\lambda_1(0),...,\lambda_{p_0}(0)\}$ in the increasing order of their principal arguments. If $s>0$ is sufficiently small and $|k|<s$, due to the continuity of $\beta(\cdot)$ we know that the spectrum of $\beta(k)$ will also consist of well separated 
clusters of eigenvalues. Let $C_j$ be a simple, positively oriented contour, independent of $|k|<s$, enclosing the $j$'th cluster and no other eigenvalues. Let $\Pi_j(k)$ be the spectral projection of $\beta(x)$ corresponding to the $j$'th cluster. We have:
\begin{align}\label{zumba4}
\Pi_j(k)=\frac{i}{2\pi}\int_{C_j}(\beta(k)-z)^{-1}dz,\quad {}^t \Pi_j(k)=\Pi_j(-k),\quad \lim_{k\to 0}||\Pi_j(k)-\Pi_j(0)||=0,\quad |k|<s.
\end{align}
The matrix $\beta(k)$ is block diagonal with respect to the decomposition $\mathbb{C}^N=\bigoplus_{j=1}^{p_0} \Pi_j(k)\mathbb{C}^N$, i.e. $\beta(k)=\sum_{j=1}^{p_0} \Pi_j(k)\beta(k)\Pi_j(k).$
Define $$\widetilde{\beta}(k):=\sum_{j=1}^{p_0} \lambda_j(0)\Pi_j(k)=e^{i\sum_{j=1}^{p_0} {\rm Arg}(\lambda_j(0))\Pi_j(k)},\quad |k|<s.$$
Clearly, $\widetilde{\beta}(k)$ is unitary, commutes with $\beta(k)$ and ${}^t\widetilde{\beta}(k)=\widetilde{\beta}(-k)$ if $|k|<s$. Define $\gamma(k):=\widetilde{\beta}^{-1}(k)\beta(k)$; we have that $\gamma(k)$ is unitary, commutes with $\beta(k)$,  ${}^t\gamma(k)=\gamma(-k)$ if $|k|<s$ and $\lim_{k\to 0}\gamma(k)={\rm Id}$. Using the Cayley transform as in Subsection \ref{subsection4.3}, see formulas \eqref{dec-14-1}-\eqref{dec-14-2} with $\phi_0=\pi$, we can find a self-adjoint matrix $\widetilde{h}(k)$ such that 
\begin{align}\label{zumba5}
\gamma(k)=e^{i \widetilde{h}(k)},\quad \widetilde{h}(0)=0,\quad {}^t\widetilde{h}(k)=\widetilde{h}(-k),\quad [\Pi_j(k),\widetilde{h}(k)]=0,\quad |k|<s.
\end{align}
Putting everything together and using that $\beta(k)=\beta(k+m)$ for all $m\in\Z$ we obtain:
\begin{align}\label{zumba6}
\beta(k+m)=e^{i(\sum_{j=1}^{p_0} {\rm Arg}(\lambda_j(0))\Pi_j(k) +\widetilde{h}(k))},\quad |k|<s,\quad \forall m\in\Z.
\end{align}
We proceed similarly for $k\in\{|k\pm 1/2|<s\}$. Notice that due to the periodicity of $\beta$, there will be the same number of eigenvalue clusters at $k=\pm 1/2$ and we can choose the same ordering for them, i.e. $\lambda_j(1/2)=\lambda_j(-1/2)$ for all $1\leq j\leq p_{1/2}\leq N$; note that there is no connection with the numbering at $k=0$. From the Riesz integral formula, choosing a contour $\tilde{C}_j$ around each $\lambda_j(1/2)=\lambda_j(-1/2)$ and using that $\beta(k)=\beta(k+1)$ we obtain $\Pi_j(-1/2)=\Pi_j(1/2)$ and moreover:
\begin{align}\label{zumba7}
{}^t\Pi_j(k)=\Pi_j(-k),\quad {}^t\Pi_j(1/2)=\Pi_j(-1/2)=\Pi_j(1/2),\quad ,\quad |k\pm 1/2|<s.
\end{align}
As in \eqref{zumba6} we get:
\begin{align}\label{zumba8}
\beta(k+m)=e^{i(\sum_{j=1}^{p_{1/2}} {\rm Arg}(\lambda_j(1/2))\Pi_j(k) +\widetilde{h}(k))},\quad |k\pm 1/2|<s,\quad \forall m\in\Z.
\end{align}
In particular, $\Pi_j(\pm 1/2)$ are real and symmetric matrices. 

The main idea is to locally perturb $\beta$ around all integers and half-integers such that the spectrum is nondegenerate at these points and all symmetry properties are left unchanged. Let us now describe the perturbation. First, consider $k=0$. Since $\Pi_j(0)^*=\Pi_j(0)={}^t\Pi_j(0)$, these matrices are also real and symmetric. Accordingly, ${\rm Ran}(\Pi_j(0))$ is invariant under complex conjugation and we can find a real orthonormal basis. Let $1\leq r_j(0)\leq N$ be the multiplicity of $\lambda_j(0)$, i.e. the dimension of   
${\rm Ran}(\Pi_j(0))$. We have:
\begin{align}\label{zumba9}
\Pi_j(0)=\sum_{l_j=1}^{r_j(0)} P_{j,l_j}(0),\quad {\rm dim\; Ran}(P_{j,l_j}(0))=1,\quad P_{j,l_j}=P_{j,l_j}^*={}^tP_{j,l_j}=P_{j,l_j}^2.
\end{align}
Define:
\begin{align}\label{zumba10}
A_j(0):=\sum_{l_j=1}^{r_j(0)} (l_j-1)\; P_{j,l_j}(0),\quad 1\leq j\leq p_0.
\end{align}
By construction, we have ${}^tA_j(0)=A_j(0)$. Seen as an operator acting on ${\rm Ran}(\Pi_j(0))$, the spectrum of $A_j(0)$ is nondegenerate and equals $\{0,1,...,r_j(0)-1\}$. 

Now consider $k=\pm 1/2$. Since $\Pi_j(1/2)=\Pi_j(-1/2)$ are real for all $1\leq j\leq p_{1/2}$ (see \eqref{zumba7}), we can mimic  the  construction in \eqref{zumba10} and obtain:
 \begin{align}\label{zumba11}
A_j(1/2):=\sum_{m_j=1}^{r_j(1/2)} (m_j-1)\; P_{j,m_j}(1/2),\quad 1\leq j\leq p_{1/2}.
\end{align}
Now let $g_s:\R\mapsto [0,1]$, continuous, even ($g_s(k)=g_s(-k)$), $g_s(0)=1$, ${\rm supp}(g_s)\subset [-s/2,s/2]$. Define:
\begin{align}\label{zumba11'}
v_s(k):=
s \sum_{m\in\Z} g_s(k+m) \left (\sum_{j=1}^{p_0}A_j(0)\right ) +
s \sum_{m\in\Z} g_s(k+m+1/2)\left (\sum_{j=1}^{p_{1/2}}A_j(1/2)\right ). 
\end{align}

From \eqref{zumba6} and \eqref{zumba8} we see that $\beta(k)=e^{ih(k)}$ on the set:
$$\Omega_s:=\bigcup_{m\in\Z} \{|k-m|<s\}\cup \{|k-m-1/2|<s\}.$$
We see that $\Omega_s$ is symmetric with respect to the origin and consists of a union of disjoint open intervals centered around all integers and half-integers. Also, the support of $v_s$ is contained in $\Omega_s$, it is periodic and has the $CS$ symmetry. Define $\beta_s(k)$ to be $ e^{i(h(k)+v_s(k))}$ if $k\in\Omega_s$, and let  $\beta_s(k)=\beta(k)$ outside $\Omega_s$. The family $\beta_s$ obeys all the required properties and converges uniformly in norm to $\beta$ when $s\to 0$. 

\noindent {\bf Step 2.} Here we will show how we can make $\beta_s$ real analytic without affecting the other required properties. In order to simplify notation, we drop the subscript $s$ and we assume that $\beta(\cdot)$ is continuous, with $CS '$ symmetry, $\Z$-periodic and with completely nondegenerate spectrum at $0$ and $1/2$ (hence at all integers and half-integers). Let $f(x)=\pi^{-1}(1+x^2)^{-1}$. If $\nu>0$ we define $f_\nu(x):=\nu^{-1}f(x/\nu)$. The function $f$ is positive, symmetric $f(x)=f(-x)$, $f_\nu$ is an approximation of the Dirac distribution and is analytic on the strip $\{|{\rm Im}(k)|<\nu\}$. Then the matrix family 
\begin{equation}\label{zumba12}
\mu_\nu(k):=\int_\R f_\nu (k-k') \beta(k')dk'
\end{equation}
is real analytic and $\Z$-periodic. A-priori $\mu_\nu(k)$ is neither self-adjoint nor unitary, but it obeys  ${}^t\mu_\nu(k)=\mu_\nu(-k)$. Also, $\mu_\nu(\cdot)$ converges uniformly to $\beta(\cdot)$ when $\nu\to 0$, which implies:
\begin{equation}\label{zumba13}
\lim_{\nu\to 0}\sup_{k\in\R}||\mu_\nu(k)\mu_\nu(k)^*-{\rm Id}||=0.
\end{equation}
If $\nu$ is small enough, the operator $\mu_\nu(k)\mu_\nu(k)^*$ is self-adjoint and positive, hence we can define:
\begin{equation}\label{zumba14}
\beta_\nu(k):=\{\mu_\nu(k)\mu_\nu(k)^*\}^{-1/2} \mu_\nu(k),\quad \beta_\nu(k)\beta_\nu(k)^*={\rm Id}.
\end{equation}
Clearly, $\beta_\nu(\cdot)$ is unitary, $\Z$-periodic and converges uniformly to $\beta(\cdot)$ when $\nu\to 0$.  This guarantees that $\beta_\nu$ has completely nondegenerate spectrum at $0$ and $\pm 1/2$ for $\nu$ small enough. We still need to prove that $\beta_\nu(\cdot) $ has the $CS '$ symmetry and it is real analytic. 

If $\nu$ is small enough we can write:
\begin{equation}\label{zumba15}
\beta_\nu(k)=\frac{i}{2\pi} \int_{|z-1|=1/2} z^{-1/2}(\mu_\nu(k)\mu_\nu(k)^*-z)^{-1}\mu_\nu(k)dz
\end{equation}
Now the matrix family  
$$\gamma_\nu(k):=\mu_\nu(k)\mu_\nu(k)^*=\int_{\R^2} f_\nu (k-k')f_\nu (k-k'') \beta(k')\beta(k'')^*dk'dk''$$
has a holomorphic extension to the strip $\{|{\rm Im}(k)|<\nu\}$. If $\nu$ is small enough, due to \eqref{zumba13} it follows that 
$(\mu_\nu(k)\mu_\nu(k)^*-z)^{-1}$ exists for all $|z-1|=1/2$ and is uniformly bounded on $\{|{\rm Im}(k)|<\alpha\}$ with $0<\alpha\ll \nu$. Together with \eqref{zumba12} this proves that $\beta_\nu$ has a holomorphic extension to $\{|{\rm Im}(k)|<\alpha\}$. 

The last thing we need to prove is ${}^t\beta_\nu(k)=\beta_\nu(-k)$, which would imply the $CS '$ property. From \eqref{zumba14} we see that if $\nu$ is small enough one can write:
$$ \beta_\nu(k)=\sum_{n\geq 0}a_n [\mu_\nu(k)\mu_\nu(k)^*-{\rm Id}]^n\mu_\nu(k),\quad (1+x)^{-1/2}=\sum_{n\geq 0} a_nx^n.$$
We have already seen that ${}^t\mu_\nu(k)=\mu_\nu(-k)$ and this leads to ${}^t[\mu_\nu(k)^*]=\mu_\nu(-k)^*$. Then:
$${}^t\{[\mu_\nu(k)\mu_\nu(k)^*-{\rm Id}]^n\mu_\nu(k)\}=\mu_\nu(-k)[\mu_\nu(-k)^*\mu_\nu(-k)-{\rm Id}]^n=[\mu_\nu(-k)\mu_\nu(-k)^*-{\rm Id}]^n\mu_\nu(-k)$$
where the second identity can be proved by induction with respect to $n$. The proof of the lemma is over.

\end{proof}

\section{Proof of Theorem \ref{teorema2}}\label{sec3}

We will use the generic notation $\alpha$ and $C$ for a finite number of positive constants which will appear during the proof. 

The difficulty stems from the fact that since $P_b$ is not norm continuous in $b$, one cannot use the continuity approach of \cite{NN2}. The first step in overcoming this difficulty is to use an old ansatz by Peierls \cite{Pe} and Luttinger \cite{Lu} and to couple it with the regularized magnetic perturbation theory in order to construct (out of the $w_{j,\gamma}$'s) an intermediate orthogonal projection $\Pi_b$ such that 
\begin{equation}\label{M.14}
\lim_{b\to 0} ||\Pi_b-P_b||=0.
\end{equation}
The second step is to construct a localized orthonormal basis for ${\rm Ran}(\Pi_b)$, while the last step is to transfer it to ${\rm Ran}(P_b)$ by the `continuity' method of \cite{N2, NN2}.

If $|b|$ is small enough, the orthogonal projection $P_b$ can be written as a Riesz integral (see \eqref{M-8}): 
$$P_b=\frac{i}{2\pi} \int_\mathcal{C} (H_b-z)^{-1}dz$$
where $\mathcal{C}$ is a simple, positively oriented contour which surrounds $\sigma_0$ (hence $\sigma_b$) and no other part of the spectrum of $H_0$. By writing
$$P_b=\frac{i}{2\pi} \int_\mathcal{C} [(H_b-z)^{-1}-(H_b-i)^{-1}]dz=
\frac{i}{2\pi} \int_\mathcal{C} (z-i)(H_b-z)^{-1}(H_b-i)^{-1}dz$$
{ and using \eqref{M-8}}, one can prove ({ see also Section 4.1 and 4.2 in \cite{C}}) that $P_b$ has a jointly continuous integral kernel $P_b(\xy,\xy')$ and there exist $\alpha>0$ and $C<\infty$ such that
\begin{equation}\label{M.15}
|P_b(\xy,\xy')|\leq C  e^{-\alpha||\xy-\xy'||},\quad \forall \xy,\xy'\in\R^2,\quad |b|\leq b_0.
\end{equation}
In particular, this holds true for $b=0$. Since $w_{j,\gamma}(\xy)=\int_{\R^2} P_0(\xy,\xy')w_{j,\gamma}(\xy')d\xy'$,  we have that  each $w_{j,\gamma}$ is continuous and by applying the Cauchy-Schwarz inequality and \eqref{M.15} we obtain the existence of some constants $C'<\infty$ and $\alpha'<\alpha$ such that
\begin{equation}\label{M.16}
\sup_{\xy\in\R^2,\gamma\in \Gamma}e^{\alpha' ||\xy-\gamma||}|w_{j,\gamma}(\xy)|\leq C'.
\end{equation}
This implies the identity:
\begin{equation}\label{M.16'}
P_0(\xy,\xy')=\sum_{j=1}^N\sum_{\gamma\in \Gamma} w_{j,\gamma}(\xy)\overline{w_{j,\gamma}}(\xy')
\end{equation}
where the above series is absolutely convergent; let us show why. The estimate \eqref{M.16} can be reinterpreted as $|w_{j,\gamma}(\xy)|\leq Ce^{-\alpha ||\xy-\gamma||}$ for some $C<\infty$ and $\alpha>0$, uniformly in $\gamma$ and $j$. Using the triangle inequality we obtain:
\begin{align}\label{ianuar12''}
\sum_{j=1}^N\sum_{\gamma\in \Gamma} |w_{j,\gamma}(\xy)|\; |w_{j,\gamma}(\xy')|\leq {\rm const}\;  e^{-\alpha ||\xy-\xy'||/2} .
\end{align}
 Later we will need the following bound which holds true due to our sparsity assumption on $\Gamma$: 
 \begin{align}\label{ianuar12'''}
\sup_{\bc\in\R^2}\sum_{\gamma\in \Gamma} ||\bc-\gamma||^k e^{-\alpha ||\bc-\gamma||} <\infty ,\quad \forall k \geq 0.
\end{align}

\subsection{Proof of {\rm (i)}.}
Applying the Riesz integral formula in \eqref{M-8} we obtain:
\begin{equation}\label{ianuar11}
\left |P_b(\xy,\xy')-e^{ib\phi(\xy,\xy')}P_0(\xy,\xy')\right |\leq C\;|b| \;e^{-\alpha ||\xy-\xy'||}.
\end{equation}
Define:
\begin{equation}\label{M.17}
\widehat{P}_b(\xy,\xy'):=e^{ib\phi(\xy,\xy')}P_0(\xy,\xy'),
\end{equation}
and denote by $\widehat{P}_b$ the corresponding operator. 
{ Applying a Schur-Holmgren estimate using \eqref{ianuar11} leads us to:
\begin{equation}\label{ianuar15}
 ||\widehat{P}_b-{P}_b||\leq C\;|b|,\quad {\rm where}\quad |{P}_b(\xy,\xy')-\widehat{P}_b(\xy,\xy')|\leq C\;|b|\; e^{-\alpha ||\xy-\xy'||}.
\end{equation}
}
Define the kernel: 
\begin{equation}\label{ianuar12}
\widetilde{P}_b(\xy,\xy')=\sum_{j=1}^N\sum_{\gamma\in \Gamma} e^{ib\phi(\xy,\gamma)}w_{j,\gamma}(\xy)\overline{e^{ib\phi(\xy',\gamma)}w_{j,\gamma}(\xy')},
\end{equation}
and by $\widetilde{P}_b$ the corresponding operator. The series is absolutely convergent (see the estimate in \eqref{ianuar12''}) and defines a jointly continuous function. Moreover, we have:
\begin{equation}\label{ianuar14'}
|\widetilde{P}_b(\xy,\xy')|\leq C\; e^{-\alpha ||\xy-\xy'||}.
\end{equation}
From the definition of $\phi(\xy,\xy')$ (see \eqref{M-9} and \eqref{M-5}) it follows that it equals the magnetic flux through the triangle with corners situated at $0$, $\xy$ and $\xy'$.

{ Let $fl(\xy,\y,\xy')$ denote the magnetic flux through  the triangle with corners situated at $\xy$, $\y$ and $\xy'$. }
Using Stokes' theorem and the fact that $|B(\xy)|\leq 1$ we have:
\begin{equation}\label{ianuar9'}
fl(\xy,\y,\xy')=\phi(\xy,\y)+\phi(\y,\xy')-\phi(\xy,\xy'),\quad |fl(\xy,\y,\xy')|\leq \frac{1}{2}|\xy-\y|\; |\y-\xy'|.
\end{equation}
Using the various definitions, the triangle and Cauchy-Schwarz inequality we obtain:
\begin{align*}
|\widehat{P}_b(\xy,\xy')-\widetilde{P}_b(\xy,\xy')|&\leq \sum_{j=1}^N\sum_{\gamma\in \Gamma} 2|\sin(b \;fl(\xy,\gamma,\xy')/2)| \;
|w_{j,\gamma}(\xy)|\; |w_{j,\gamma}(\xy')|\\
&\leq |b| N C^2 e^{-\alpha ||\xy-\xy'||/2}\sup_{\bc\in\R^2}\sum_{\gamma\in \Gamma} ||\bc-\gamma||^2 e^{-\alpha ||\bc-\gamma||},
\end{align*}
which leads to the existence of a $C<\infty$ and $\alpha>0$ such that 
\begin{equation}\label{ianuar14}
|\widehat{P}_b(\xy,\xy')-\widetilde{P}_b(\xy,\xy')|\leq C\;|b|\; e^{-\alpha ||\xy-\xy'||},\quad ||\widehat{P}_b-\widetilde{P}_b||\leq C\;|b|.
\end{equation}

Consider the closed subspace 
$$S_b:={\rm closure}\{{\rm Span}\{e^{i b\phi(\cdot,\gamma)}w_{j,\gamma}(\cdot):\; \gamma\in \Gamma,\; j\in\{1,2,...,N\}\}\}$$
and its associated Gram-Schmidt matrix 
\begin{equation}\label{M.28}
M_b(\gamma,j;\gamma',j'):=\left \langle e^{i b\phi(\cdot,\gamma')}w_{j',\gamma'}(\cdot)|e^{i b\phi(\cdot,\gamma)}w_{j,\gamma}(\cdot)\right \rangle_{L^2(\R^2)}.
\end{equation}
\begin{lemma}\label{lemaianuar1}
Consider the self-adjoint operator $M_b$ acting on the space $l^2(\Gamma)\otimes \C^N$ given by the matrix elements $M_b(\gamma,j;\gamma',j')$. 
If $b_0$ is small enough, then there exists some $\alpha>0$ and $C<\infty$ such that if $|b|\leq b_0$:
\begin{equation}\label{yanuarie1}
|\delta_{jj'}\delta_{\gamma \gamma'}-M_b(\gamma,j;\gamma',j')|\leq C\;|b|\; e^{-\alpha||\gamma-\gamma'||}.
\end{equation}
Moreover, $M_b\geq 1/2$, and if $\beta\in\{1/2,1\}$ then  
\begin{equation}\label{yanuarie1'}
|\delta_{jj'}\delta_{\gamma \gamma'}-M_b^{-\beta}(\gamma,j;\gamma',j')|\leq C\;|b|\; e^{-\alpha||\gamma-\gamma'||}.
\end{equation}
Finally, the vectors:
\begin{equation}\label{ianuar21}
\Psi_{j,\gamma,b}(\xy):=\sum_{\gamma'\in\Z^2}\sum_{j'=1}^N[M_b^{-1/2}](\gamma',j';\gamma,j)e^{i b\phi(\xy,\gamma')}w_{j',\gamma'}(\xy)
\end{equation}
form an orthonormal basis of $S_b$, being uniformly exponentially localized around $\gamma$, i.e. 
\begin{equation}\label{febrya}
|\Psi_{j,\gamma,b}(\xy)|\leq C\; e^{-\alpha||\xy-\gamma||},\; 
|\Psi_{j,\gamma,b}(\xy)-e^{i b\phi(\xy,\gamma)}w_{j,\gamma}(\xy)|\leq C\; |b|\; e^{-\alpha||\xy-\gamma||}.
\end{equation}
\end{lemma}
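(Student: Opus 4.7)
The starting point for \eqref{yanuarie1} is to use the flux identity \eqref{ianuar9'} to rewrite the phase difference inside the Gram integral. Specifically,
$$\phi(\xy,\gamma')-\phi(\xy,\gamma)=\phi(\gamma,\gamma')-fl(\xy,\gamma,\gamma'),$$
so that
$$M_b(\gamma,j;\gamma',j')=e^{-ib\phi(\gamma,\gamma')}\int_{\R^2}\overline{w_{j,\gamma}(\xy)}\,w_{j',\gamma'}(\xy)\,e^{ib\,fl(\xy,\gamma,\gamma')}d\xy.$$
The key observation is that the constant-in-$\xy$ factor $e^{-ib\phi(\gamma,\gamma')}$ (which a priori grows polynomially in $\|\gamma\|,\|\gamma'\|$) multiplies an integral whose $b=0$ value is $\langle w_{j',\gamma'}|w_{j,\gamma}\rangle=\delta_{jj'}\delta_{\gamma\gamma'}$. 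Hence for $\gamma=\gamma'$ we even have $M_b(\gamma,j;\gamma,j')=\delta_{jj'}$ identically in $b$, since $\phi(\gamma,\gamma)=0=fl(\xy,\gamma,\gamma)$. For $\gamma\neq\gamma'$ the orthogonality of the $w$'s kills the constant term after subtracting~$1$, leaving
$$M_b(\gamma,j;\gamma',j')=e^{-ib\phi(\gamma,\gamma')}\int_{\R^2}\overline{w_{j,\gamma}(\xy)}\,w_{j',\gamma'}(\xy)\,(e^{ib\,fl(\xy,\gamma,\gamma')}-1)d\xy.$$
Combining $|e^{ix}-1|\leq|x|$, the flux bound $|fl(\xy,\gamma,\gamma')|\leq \tfrac12\|\xy-\gamma\|\|\xy-\gamma'\|$ from \eqref{ianuar9'}, and the exponential decay \eqref{M.16} to absorb the two linear factors, then splitting the product decay via the triangle inequality $\|\xy-\gamma\|+\|\xy-\gamma'\|\geq \|\gamma-\gamma'\|$ and integrating in $\xy$, I obtain \eqref{yanuarie1} with a possibly smaller~$\alpha$.

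For the second part, the Schur--Holmgren test applied to \eqref{yanuarie1}, together with the sparsity estimate \eqref{ianuar12'''}, gives $\|M_b-{\rm Id}\|_{\mathrm{op}}\leq C|b|$; consequently $M_b\geq 1-C|b|\geq 1/2$ whenever $|b|\leq b_0$ is small enough. To upgrade this to the off-diagonal exponential decay of $M_b^{-\beta}$ in \eqref{yanuarie1'}, I would work in the Banach algebra $\mathcal{A}_\alpha$ of operators on $l^2(\Gamma)\otimes\C^N$ whose matrix elements satisfy a Schur-type norm weighted by $e^{\alpha\|\gamma-\gamma'\|}$ (closedness under composition follows from the sparsity of $\Gamma$). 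For $\beta=1$ one expands
$$M_b^{-1}-{\rm Id}=-(M_b-{\rm Id})+(M_b-{\rm Id})^2 M_b^{-1},$$
and iterates a Neumann series of $(M_b-{\rm Id})^n$, each term of which lies in $\mathcal{A}_{\alpha'}$ for some $\alpha'<\alpha$ with norm of order $(C|b|)^n$. For $\beta=1/2$ the same conclusion is reached via the integral representation
$$M_b^{-1/2}={\rm Id}+\frac{1}{\pi}\int_0^\infty s^{-1/2}\bigl[(M_b+s)^{-1}-(1+s)^{-1}\bigr]ds,$$
estimating the resolvent $(M_b+s)^{-1}$ in $\mathcal{A}_{\alpha'}$ uniformly in $s\geq 0$ by a Combes--Thomas-type conjugation with a multiplier $e^{\alpha' F(\gamma)}$; the resulting bound is $O(|b|)$ in the weighted norm, yielding \eqref{yanuarie1'}. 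I expect this last step --- transferring the operator norm smallness of $M_b-{\rm Id}$ into a weighted (exponentially decaying) Banach algebra estimate for $M_b^{-\beta}-{\rm Id}$ --- to be the main technical obstacle.

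The third block follows easily. Orthonormality of $\{\Psi_{j,\gamma,b}\}$ amounts to $(M_b^{-1/2})^*M_bM_b^{-1/2}={\rm Id}$, which is immediate from self-adjointness and positivity of $M_b$ and the convention used in \eqref{M.28}. Completeness of $\{\Psi_{j,\gamma,b}\}$ in $S_b$ follows because $M_b^{-1/2}$ is invertible on $l^2(\Gamma)\otimes\C^N$, so the $\Psi$'s span the same closure as the $e^{ib\phi(\cdot,\gamma)}w_{j,\gamma}$. For the localization bounds \eqref{febrya}, I subtract and insert $\delta_{\gamma\gamma'}\delta_{jj'}$ inside the defining sum \eqref{ianuar21} to get
$$\Psi_{j,\gamma,b}(\xy)-e^{ib\phi(\xy,\gamma)}w_{j,\gamma}(\xy)=\sum_{\gamma',j'}\bigl[M_b^{-1/2}(\gamma',j';\gamma,j)-\delta_{\gamma\gamma'}\delta_{jj'}\bigr]e^{ib\phi(\xy,\gamma')}w_{j',\gamma'}(\xy),$$
and then bound the right-hand side pointwise by $C|b|\sum_{\gamma'}e^{-\alpha\|\gamma-\gamma'\|}e^{-\alpha\|\xy-\gamma'\|}$, which the triangle inequality and the sparsity estimate \eqref{ianuar12'''} control by $C'|b|e^{-\alpha'\|\xy-\gamma\|}$. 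The uniform bound $|\Psi_{j,\gamma,b}(\xy)|\leq Ce^{-\alpha\|\xy-\gamma\|}$ follows by the same summation argument applied directly to \eqref{ianuar21}, using \eqref{yanuarie1'} to control the full matrix element of $M_b^{-1/2}$ (and not just its difference from the identity).
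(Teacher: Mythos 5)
Your proposal is essentially the paper's proof. For \eqref{yanuarie1} you use exactly the same flux-identity manipulation as the paper: rewrite the phase difference via \eqref{ianuar9'}, observe that the constant phase $e^{ib\phi(\gamma',\gamma)}$ is harmless because the $b$-independent part of the integral is $\delta_{jj'}\delta_{\gamma\gamma'}$, then bound $e^{ib\,fl}-1$ linearly in $|b|$ and absorb the two polynomial factors into the exponential decay of the Wannier functions. This is precisely \eqref{febry1} in the paper. Your Schur--Holmgren argument for $\|M_b-{\rm Id}\|\leq C|b|$ and hence $M_b\geq 1/2$ also matches.

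The one place you over-engineer is $M_b^{-1/2}$. The paper does not need the integral representation or a Combes--Thomas conjugation: once the weighted Schur estimate on $D_b^n$ is in place (their \eqref{febry2}: $e^{\alpha\|\gamma-\gamma'\|}|D_b^n(\gamma,j;\gamma',j')|\leq |b|^n\tilde C^n$, which is exactly your Banach-algebra observation made explicit), they simply plug this into the binomial series $(1+x)^{-\beta}=\sum_n a_n x^n$; since $\sum_n |a_n|(\tilde C|b|)^n<\infty$ for small $|b|$, one gets \eqref{yanuarie1'} for any fixed $\beta>0$ in one stroke, with no separate treatment of $\beta=1$ versus $\beta=1/2$. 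Your Combes--Thomas route would also work, but it buys nothing here because all the analyticity you need is already encoded in the fact that the inverse-square-root power series has positive radius of convergence. Your final block (orthonormality of the $\Psi$'s via $(M_b^{-1/2})^*M_bM_b^{-1/2}={\rm Id}$, completeness because $M_b^{-1/2}$ is invertible, and the pointwise bounds \eqref{febrya} by inserting $\delta_{\gamma\gamma'}\delta_{jj'}$ and summing with \eqref{yanuarie1'} and \eqref{ianuar12'''}) coincides with the paper's.
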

\begin{proof}
Define $D_b:=M_b-{\rm Id}$. Since $ \delta_{jj'}\delta_{\gamma \gamma'}=\left \langle w_{j',\gamma'}(\cdot)|w_{j,\gamma}(\cdot)\right \rangle_{L^2(\R^2)}$ and $\phi(\gamma,\gamma)=0$ we have: 
\begin{equation*}
D_b(\gamma,j;\gamma',j')= e^{ib\phi(\gamma',\gamma)} \int_{\R^d} w_{j',\gamma'}(\xy)\; \overline{w_{j,\gamma}(\xy)}\left (e^{ib\; fl(\gamma',\xy,\gamma)}-1\right )d\xy
\end{equation*}
and
$$|D_b(\gamma,j;\gamma',j')|\leq |b| e^{-\alpha'||\gamma-\gamma'||}\int_{\R^d}||\gamma'-\xy||  e^{\alpha'||\gamma'-\xy||}
|w_{j',\gamma'}(\xy)|\; ||\gamma-\xy|| \; e^{\alpha'||\gamma-\xy||}
|w_{j,\gamma}(\xy)| d\xy$$
for some small enough $\alpha'>0$. Using the Cauchy-Schwarz  inequality we find:
\begin{equation}\label{febry1}
|D_b(\gamma,j;\gamma',j')|\leq |b| \; C e^{-\alpha||\gamma-\gamma'||}.
\end{equation}
If $n\geq 2$ we have
$$D_b^n(\gamma,j;\gamma',j')=\sum_{j_1,\gamma_1}...\sum_{j_{n-1},\gamma_{n-1}}D_b(\gamma,j;\gamma_1,j_1)...D_b(\gamma_{n-1},j_{n-1};\gamma',j')$$
and \eqref{febry1} implies:
\begin{align}\label{febry2}
e^{\alpha||\gamma-\gamma'||}|D_b^n(\gamma,j;\gamma',j')|&\leq |b|CN^{n-1} \left (\sup_{k, j'',\gamma''}\sum_{\bc\in \Gamma}|D_b(\gamma'',j'';\bc,k)|e^{\alpha||\gamma''-\bc||}\right )^{n-1}\nonumber \\
&\leq 
|b|^n \tilde{C}^n,\quad \forall n\geq 1.
\end{align}
The estimate \eqref{febry1} also implies that $D_b$ goes to zero with $b$ in the operator norm, hence $M_b$ is close to the identity operator, thus  $M_b^{-\beta}=({\rm Id}+D_b)^{-\beta}$ exists and can be expressed as a norm-convergent power series around zero.  Then \eqref{febry2}  implies \eqref{yanuarie1'}. Finally, it is straightforward to check that the vectors defined in \eqref{ianuar21} are orthogonal and span $S_b$, while \eqref{febrya} is an easy consequence of \eqref{yanuarie1'} and the triangle inequality. 
\end{proof}

\vspace{0.5cm}

The orthogonal projection associated to the subspace $S_b$ is 
\begin{equation}\label{ianuarie10}
\Pi_b:=\sum_{\gamma\in\Gamma}\sum_{j=1}^N |\Psi_{j,\gamma,b}\rangle  \langle \Psi_{j,\gamma,b}|.
\end{equation}
Under the conditions of Lemma \ref{lemaianuar1}, using \eqref{ianuar21} and the self-adjointness of $M_b$ we obtain that $\Pi_b$ has a jointly continuous integral kernel given by:
$$\Pi_b(\xy,\xy')=\sum_{\gamma',\gamma''\in\Gamma}\sum_{j',j''=1}^N M_b^{-1}(\gamma',j';\gamma'',j'')e^{i b\phi(\xy,\gamma')}w_{j',\gamma'}(\xy)e^{i b\phi(\gamma'',\xy')}\overline{w_{j'',\gamma''}(\xy')}$$
where the series converges absolutely. Using  \eqref{yanuarie1'} and \eqref{ianuar12} we find that there exists some $\alpha>0$ such that  
\begin{equation}\label{yanuarie2}
|\Pi_b(\xy,\xy')-\widetilde{P}_b(\xy,\xy')|\leq  C\;|b|\; e^{-\alpha||\xy-\xy'||}.
\end{equation}
Using \eqref{ianuar15} and \eqref{ianuar14} we obtain:
\begin{equation}\label{yanuarie3}
|\Pi_b(\xy,\xy')-{P}_b(\xy,\xy')|\leq  C\;|b|\; e^{-\alpha||\xy-\xy'||},\quad ||\Pi_b-P_b||\leq C\;|b|,
\end{equation}
which finishes the proof of \eqref{M.14}. 

The next step is to consider the Sz.-Nagy unitary operator $U_b$ which intertwines $P_b$ with $\Pi_b$, i.e. $P_bU_b=U_b\Pi_b$, given (as in \eqref{dech11}) by the formula: 
\begin{equation}\label{dech11'}
U_b:=\{P_b\Pi_b +({\rm Id}-P_b)({\rm Id}-\Pi_b)\}
\{{\rm Id}-(\Pi_b-P_b)^2\}^{-1/2}.
\end{equation}
A consequence of \eqref{yanuarie3} is that for small enough $b$, the operator $U_b-{\rm Id}$ has a jointly continuous integral kernel which obeys the same estimate as in \eqref{yanuarie3}.  
Then the set of vectors:
\begin{equation}\label{finalian}
\Xi_{j,\gamma,b}:=U_b\Psi_{j,\gamma,b}=
\sum_{\gamma'\in\Gamma}\sum_{j'=1}^N M_b^{-1/2}(\gamma',j';\gamma,j)[U_b \; e^{i b\phi(\cdot,\gamma')}w_{j',\gamma'}(\cdot)]
\end{equation}
form an orthonormal basis of ${\rm Ran}(P_b)$ satisfying \eqref{M-10} and \eqref{M-11}. 

\vspace{0.5cm}

\subsection{Proof of {\rm (ii)}.}
Remember that here the magnetic field is constant with magnitude $B(\xy)=1$ and $\Gamma$ is a periodic lattice embedded in $\R^2$. In this case, the phase is not just antisymmetric but it is also bilinear, since now $\phi(\xy,\xy')= \frac{1}{2}(x_1'x_2-x_1x_2')$ or, with the usual abuse of language, $\phi(\xy,\xy')=\frac{1}{2}{\bf B}\cdot (\xy'\wedge \xy)$. 
The identity $(\xy-\y)\wedge (\xy'-\y)=-\xy'\wedge \xy +\y\wedge \xy +\xy'\wedge \y$ 
implies:
\begin{equation}\label{ianuar20}
\phi(\xy,\y)+\phi(\y,\xy')-\phi(\xy,\xy')=\phi(\xy'-\y,\xy-\y)=\phi(\xy'-\y,\xy-\xy').
\end{equation}
Denote by $\tau_\gamma^b$ the unitary magnetic translation given by:
$$[\tau_\gamma^bf](\xy)=e^{ib\phi(\xy,\gamma)}f(\xy-\gamma).$$ 

Using \eqref{M-12}, \eqref{ianuar20} and \eqref{M.28} we obtain:
$$M_b(\gamma,j;\gamma',j')=\langle \tau_{\gamma'}^b (w_{j',0})|\tau_{\gamma}^b ( w_{j,0})\rangle_{L^2(\R^2)}=:e^{ib \phi(\gamma,\gamma')}\mathcal{M}_b(\gamma-\gamma';j,j'),$$
where 
\begin{equation}\label{febry15}
\mathcal{M}_b(\gamma'';j,j')=\int_{\R^2} e^{-ib\phi(\y,\gamma'')}\overline{w_{j,0}(\y-\gamma'')}\;w_{j',0}(\y)d\y=\langle w_{j',0}| \tau_{\gamma''}^b (w_{j,0})\rangle_{L^2(\R^2)}.
\end{equation}
\begin{lemma}\label{lemafebry} There exists some $b_0$ small enough such that the operator $M_b^{-1/2}$ has a matrix of the form
$$M_b^{-1/2}(\gamma,j;\gamma',j')=:e^{ib \phi(\gamma,\gamma')}\mathcal{T}_b(\gamma-\gamma';j,j'),$$
where $|\mathcal{T}_b(\gamma'';j,j')|\leq Ce^{-\alpha |\gamma''|}$ for every $0\leq b\leq b_0$. 
\end{lemma}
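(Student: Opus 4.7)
The plan is to exploit the fact that operators on $\ell^2(\Gamma)\otimes\C^N$ whose kernels factor as $K(\gamma,j;\gamma',j') = e^{ib\phi(\gamma,\gamma')}\mathcal{K}(\gamma-\gamma';j,j')$ form a subalgebra, and to apply this structural fact term-by-term to the power series representation of $M_b^{-1/2}$ already guaranteed by Lemma \ref{lemaianuar1}.

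First I would observe that both $\mathrm{Id}$ (since $\phi(\gamma,\gamma)=0$) and $M_b$ (by \eqref{M.28}--\eqref{febry15}) have this form, hence so does $D_b := M_b - \mathrm{Id}$, with some kernel $\mathcal{D}_b(\eta;j,j')$ satisfying $|\mathcal{D}_b(\eta;j,j')|\leq C|b|e^{-\alpha|\eta|}$ by \eqref{febry1}. Next I would verify the algebra structure: given $A,B$ with kernels $e^{ib\phi(\gamma,\gamma')}\mathcal{A}(\gamma-\gamma')$ and $e^{ib\phi(\gamma,\gamma')}\mathcal{B}(\gamma-\gamma')$ (suppressing the $N$-indices), substitute $\gamma''=\gamma-\eta$ in the product and apply the cocycle identity \eqref{ianuar20}:
\begin{equation*}
\phi(\gamma,\gamma-\eta)+\phi(\gamma-\eta,\gamma') - \phi(\gamma,\gamma') = \phi(\gamma'-\gamma+\eta,\gamma-\gamma').
\end{equation*}
Writing $\xi = \gamma-\gamma'$, this shows
\begin{equation*}
(AB)(\gamma,j;\gamma',j') = e^{ib\phi(\gamma,\gamma')}(\mathcal{A}\star_b\mathcal{B})(\xi;j,j'),
\end{equation*}
where $\star_b$ denotes the twisted convolution
\begin{equation*}
(\mathcal{A}\star_b\mathcal{B})(\xi;j,j') := \sum_{\eta\in\Gamma}\sum_{j''=1}^N e^{ib\phi(\eta-\xi,\xi)}\mathcal{A}(\eta;j,j'')\mathcal{B}(\xi-\eta;j'',j').
\end{equation*}
In particular, the algebraic structure is preserved under arbitrary polynomials in $D_b$.

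The crucial point for decay is that $|e^{ib\phi(\eta-\xi,\xi)}|=1$, so $|\mathcal{A}\star_b\mathcal{B}|$ is dominated by the ordinary (untwisted) matrix convolution of $|\mathcal{A}|$ and $|\mathcal{B}|$. Combining this with the sparsity bound \eqref{ianuar12'''}, a routine induction (mirroring the step from \eqref{febry1} to \eqref{febry2}) gives a constant $\tilde C$, independent of $n$ and of $b$ small, such that
\begin{equation*}
|\mathcal{D}_b^{\star_b n}(\xi;j,j')| \leq (\tilde C|b|)^n e^{-\alpha'|\xi|}
\end{equation*}
for some fixed $0<\alpha'<\alpha$ (the small loss in the exponent absorbs the polynomial factor coming from the sum over intermediate sites).

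For the final step, recall that $(1+x)^{-1/2}=\sum_{n\geq 0} a_n x^n$ converges absolutely for $|x|<1$ with $|a_n|\leq C$, and that by \eqref{febry2} $\|D_b\|\to 0$ as $b\to 0$. Hence for $|b|\leq b_0$ small enough,
\begin{equation*}
M_b^{-1/2} = \sum_{n\geq 0} a_n D_b^n,
\end{equation*}
the series converging in operator norm. Each summand has the desired magnetic-translation covariance structure with kernel $a_n\mathcal{D}_b^{\star_b n}$, so the same is true of $M_b^{-1/2}$ with
\begin{equation*}
\mathcal{T}_b(\xi;j,j') = \delta_{\xi,0}\delta_{j,j'} + \sum_{n\geq 1} a_n \mathcal{D}_b^{\star_b n}(\xi;j,j'),
\end{equation*}
and the exponential estimate $|\mathcal{T}_b(\xi;j,j')|\leq Ce^{-\alpha'|\xi|}$ follows by summing the geometric series $\sum_n (\tilde C|b|)^n$. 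The main obstacle is purely notational: correctly tracking the cocycle \eqref{ianuar20} to confirm that the twisted convolution only picks up a phase depending on relative coordinates; once that is in place, the exponential bound is an immediate consequence of the uniform pointwise estimate on $\mathcal{D}_b$ together with the convolution-friendly nature of $\star_b$.
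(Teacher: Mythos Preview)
Your proof is correct and follows essentially the same approach as the paper: both expand $M_b^{-1/2}=(\mathrm{Id}+D_b)^{-1/2}$ as a power series, verify via the cocycle identity \eqref{ianuar20} that the magnetic-covariant structure $e^{ib\phi(\gamma,\gamma')}\mathcal{K}(\gamma-\gamma')$ is closed under products (the paper does this by computing $D_b^2$ explicitly in \eqref{febry10} and then inducting, you package it as a twisted convolution $\star_b$), and obtain the exponential decay from the pointwise bound on $\mathcal{D}_b$. Your presentation is slightly more abstract but the content is identical; the only superfluous step is the ``small loss in the exponent'', since the bound \eqref{febry2} already works with the same $\alpha$.
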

\begin{proof}
In the proof of Lemma \eqref{lemaianuar1} we introduced the operator $D_b=M_b-{\rm Id}$ in $l^2(\Gamma)\otimes \C^N$ and showed that if $|b|$ is small enough then 
 $M_b^{-1/2}=({\rm Id}+D_d)^{-1/2}$ exists as a power series in $D_b$. The matrix elements of $D_b$ inherit the symmetry properties of $M_b$:
  $$\mathcal{D}_b(\gamma'';j,j'):=\mathcal{M}_b(\gamma'';j,j')-
  \delta_{\gamma''0}\delta_{jj'},\quad D_b(\gamma,j;\gamma',j')=e^{ib \phi(\gamma,\gamma')}\mathcal{D}_b(\gamma-\gamma';j,j').$$ 
  Then using \eqref{ianuar20} and the fact that $\Gamma$ is a lattice we obtain:
  \begin{align}\label{febry10}
 D_b^2(\gamma,j;\gamma',j')&=\sum_{j''=1}^N \sum_{\gamma''\in\Gamma} e^{ib (\phi(\gamma,\gamma'')+\phi(\gamma'',\gamma'))} \mathcal{D}_b(\gamma-\gamma'';j,j'')
 \mathcal{D}_b(\gamma''-\gamma';j'',j')\nonumber \\
 &=e^{ib \phi(\gamma,\gamma')}
 \sum_{j''=1}^N \sum_{\gamma''\in\Gamma}e^{ib \phi(\gamma'-\gamma'',\gamma-\gamma')}\mathcal{D}_b(\gamma-\gamma'';j,j'')
 \mathcal{D}_b(\gamma''-\gamma';j'',j')\nonumber \\
 &=:e^{ib \phi(\gamma,\gamma')}\mathcal{D}_b^{(2)}(\gamma-\gamma';j,j'),
  \end{align}
where 
$$\mathcal{D}_b^{(2)}(\tilde{\gamma};j,j')=\sum_{j''=1}^N \sum_{\gamma''\in\Gamma}e^{ib \phi(\tilde{\gamma},\gamma'')}\mathcal{D}_b(\tilde{\gamma}-\gamma'';j,j'')
 \mathcal{D}_b(\gamma'';j'',j').$$
If $n\geq 3$ one can prove by induction that $D_b^n(\gamma,j;\gamma',j'):=e^{ib \phi(\gamma,\gamma')}\mathcal{D}_b^{(n)}(\gamma-\gamma';j,j')$ where:
$$\mathcal{D}_b^{(n)}(\tilde{\gamma};j,j')=\sum_{j''=1}^N \sum_{\gamma''\in\Gamma}e^{ib \phi(\tilde{\gamma},\gamma'')}\mathcal{D}_b^{(n-1)}(\tilde{\gamma}-\gamma'';j,j'')
 \mathcal{D}_b(\gamma'';j'',j').$$
 This structure will be preserved for any convergent series in $D_b$, while the exponential localization is a direct consequence of \eqref{yanuarie1'}.
 
\end{proof} 

\vspace{0.5cm}

Using the above expression for $M_b^{-1/2}(\gamma,j;\gamma',j')$  in \eqref{ianuar21} we obtain:
\begin{align}\label{ianuar22}
\Psi_{j,\gamma,b}(\xy)=\sum_{\gamma'\in\Gamma}\sum_{j'=1}e^{ib \phi(\gamma',\gamma)}\mathcal{T}_b(\gamma'-\gamma;j',j)e^{ib \phi(\xy,\gamma')} w_{j',0}(\xy-\gamma').
\end{align}
We note the identity 
$$\phi(\gamma-\gamma',\xy-\gamma')=\phi(\xy-\gamma,\gamma'-\gamma)$$
which together with \eqref{ianuar20} gives:
$$\phi(\xy,\gamma')+\phi(\gamma',\gamma)-\phi(\xy,\gamma)=\phi(\gamma-\gamma',\xy-\gamma')=\phi(\xy-\gamma,\gamma'-\gamma).$$
Replacing this in \eqref{ianuar22} we obtain:
\begin{align}\label{ianuar23'}
\Psi_{j,\gamma,b}(\xy)&=e^{ib \phi(\xy,\gamma)}\sum_{\gamma'\in\Gamma}\sum_{j'=1}^N
\mathcal{T}_b(\gamma'-\gamma;j',j)e^{ib \phi(\xy-\gamma,\gamma'-\gamma)}w_{j',0}(\xy-\gamma')\nonumber \\
&=e^{ib \phi(\xy,\gamma)}\sum_{\gamma''\in\Gamma}\sum_{j'=1}^N
\mathcal{T}_b(\gamma'';j',j)e^{ib \phi(\xy-\gamma,\gamma'')} w_{j',0}(\xy-\gamma-\gamma'')\nonumber \\
&=\left [\tau_{\gamma}^b\left (\sum_{\gamma''\in\Gamma}\sum_{j'=1}^N
\mathcal{T}_b(\gamma'';j',j)e^{ib \phi(\cdot,\gamma'')} w_{j',0}(\cdot-\gamma'')\right )\right ](\xy).
\end{align}
Denote by:
\begin{equation}\label{febry14}
u_{j,b}:=\sum_{\gamma''\in\Gamma}\sum_{j'=1}^N
\mathcal{T}_b(\gamma'';j',j)\;\tau_{\gamma''}^b (w_{j',0}).
\end{equation}
From \eqref{ianuar23'} we have $\Psi_{j,\gamma,b}=\tau_{\gamma}^b (u_{j,b})$. The last technical result we need is the following: 
\begin{lemma}
Both $P_b$ and $\Pi_b$ commute with the magnetic translations $\tau_{\gamma}^b$, and so does $U_b$.
\end{lemma}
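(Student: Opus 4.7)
The plan has three stages, one for each operator. The underlying reason behind all three commutation relations is the fact that, in the constant magnetic field case with the symmetric gauge, the magnetic translations $\{\tau_\gamma^b\}_{\gamma\in\Gamma}$ satisfy a group-like composition law up to a scalar phase, namely
\begin{equation*}
\tau_{\gamma_0}^b\,\tau_\gamma^b \;=\; e^{-ib\phi(\gamma_0,\gamma)}\,\tau_{\gamma_0+\gamma}^b,
\end{equation*}
which I would first derive directly from $\phi(\xy,\gamma_0)+\phi(\xy-\gamma_0,\gamma)=\phi(\xy,\gamma_0+\gamma)-\phi(\gamma_0,\gamma)$, a consequence of the bilinearity of $\phi$. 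I would also record the bilinearity identity $\nabla_\xy\phi(\xy,\gamma)=\tfrac{1}{2}(\gamma_2,-\gamma_1)$, which equals $\A(\xy)-\A(\xy-\gamma)$; this implies the pointwise identity $[(-i\nabla-b\A),\tau_\gamma^b]=0$ by a one-line calculation. Since $V$ is $\Gamma$-periodic and multiplication by a $\Gamma$-periodic function trivially commutes with $\tau_\gamma^b$, this yields $H_b\tau_\gamma^b=\tau_\gamma^b H_b$ for all $\gamma\in\Gamma$.

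From the commutation $[H_b,\tau_\gamma^b]=0$ one gets $[(H_b-z)^{-1},\tau_\gamma^b]=0$ for $z\in\rho(H_b)$, and the Riesz integral representation of $P_b$ immediately gives $[P_b,\tau_\gamma^b]=0$. This handles the first operator.

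For $\Pi_b$, I would start from the key formula $\Psi_{j,\gamma,b}=\tau_\gamma^b u_{j,b}$ already proved in \eqref{ianuar23'}. Combined with the composition rule above, this yields
\begin{equation*}
\tau_{\gamma_0}^b\Psi_{j,\gamma,b}\;=\;e^{-ib\phi(\gamma_0,\gamma)}\,\Psi_{j,\gamma_0+\gamma,b},\qquad \gamma_0,\gamma\in\Gamma.
\end{equation*}
Plugging this into the spectral expansion \eqref{ianuarie10} and using that $(\tau_{\gamma_0}^b)^*=\tau_{-\gamma_0}^b$ (another consequence of the composition rule together with $\phi(\gamma_0,\gamma_0)=0$), a short bookkeeping computation shows that the unimodular phases $e^{\mp ib\phi(\gamma_0,\gamma)}$ cancel in $\tau_{\gamma_0}^b\Pi_b(\tau_{\gamma_0}^b)^{-1}$ after reindexing $\gamma\mapsto \gamma-\gamma_0$ on one side. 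Thus $[\Pi_b,\tau_{\gamma_0}^b]=0$ for every $\gamma_0\in\Gamma$. The only step requiring care here is the phase arithmetic; I expect this to be the main bookkeeping obstacle, but it is purely algebraic because $\phi$ is bilinear and $\phi(\gamma_0,\gamma_0)=0$.

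Finally, the commutation for $U_b$ is immediate from the explicit formula \eqref{dech11'}: once $P_b$ and $\Pi_b$ both commute with $\tau_{\gamma_0}^b$, so do $P_b\Pi_b+({\rm Id}-P_b)({\rm Id}-\Pi_b)$ and $(\Pi_b-P_b)^2$, and hence (by functional calculus, since the spectrum of the latter lies in a small neighborhood of $0$ for $|b|$ small) so does $\{{\rm Id}-(\Pi_b-P_b)^2\}^{-1/2}$. The product of two operators commuting with $\tau_{\gamma_0}^b$ commutes with $\tau_{\gamma_0}^b$, and this concludes the proof.
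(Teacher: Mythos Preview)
Your proposal is correct. The treatment of $P_b$ (via $[H_b,\tau_\gamma^b]=0$ and the Riesz formula) and of $U_b$ (via the explicit Sz.-Nagy expression \eqref{dech11'}) matches the paper's argument.

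For $\Pi_b$ you take a slightly different and arguably cleaner route. The paper writes out the integral kernel of $\Pi_b$ using $\Psi_{j,\gamma,b}=\tau_\gamma^b u_{j,b}$, then computes the kernels of $\tau_\gamma^b\Pi_b$ and $\Pi_b\tau_\gamma^b$ separately, reindexes, and checks by hand that the accumulated magnetic phases agree via the bilinearity and antisymmetry of $\phi$. You instead work at the operator level: you first isolate the projective composition law $\tau_{\gamma_0}^b\tau_\gamma^b=e^{-ib\phi(\gamma_0,\gamma)}\tau_{\gamma_0+\gamma}^b$, deduce $\tau_{\gamma_0}^b\Psi_{j,\gamma,b}=e^{-ib\phi(\gamma_0,\gamma)}\Psi_{j,\gamma_0+\gamma,b}$, and then observe that in the rank-one expansion \eqref{ianuarie10} the unimodular phases cancel inside each $|\cdot\rangle\langle\cdot|$, so a single reindexing $\gamma\mapsto\gamma-\gamma_0$ finishes the job. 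This makes the phase bookkeeping transparent and avoids the kernel-level computation; the paper's approach, on the other hand, is more concrete and does not require first establishing the projective group law as a separate lemma. Both rest on exactly the same algebraic identities for $\phi$.
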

\begin{proof}
Since $H_b$ commutes with $\tau_{\gamma}^b$, so does $P_b$. We only need to prove this for $\Pi_b$. Using the formula $\Psi_{j,\gamma,b}=\tau_{\gamma}^b (u_{j,b})$, we can express the kernel of $\Pi_b$ (see \eqref{ianuarie10} as: 
$$\Pi_b(\xy,\xy')=\sum_{\gamma'\in\Gamma}\sum_{j=1}^N e^{ib\phi(\xy,\gamma')} u_{j,b}(\xy-\gamma') e^{-ib\phi(\xy',\gamma')} \overline{u_{j,b}(\xy'-\gamma')}. $$
We have:
\begin{equation}\label{febry11}
[\tau_\gamma^b\Pi_b](\xy,\xy')=\sum_{\gamma'\in\Gamma}\sum_{j=1}^N e^{ib\phi(\xy,\gamma)}e^{ib\phi(\xy-\gamma,\gamma')} u_{j,b}(\xy-\gamma'-\gamma) e^{-ib\phi(\xy',\gamma')} \overline{u_{j,b}(\xy'-\gamma')}
\end{equation}
and 
$$[\Pi_b\tau_\gamma^b](\xy,\xy')=\sum_{\gamma'\in\Gamma}\sum_{j=1}^N e^{ib\phi(\xy,\gamma')} u_{j,b}(\xy-\gamma') e^{-ib\phi(\xy'+\gamma,\gamma')} e^{ib\phi(\xy'+\gamma,\gamma)}\overline{u_{j,b}(\xy'-\gamma'+\gamma)}.$$
Changing $\gamma'$ with $\gamma'+\gamma$ in the second formula we obtain:
\begin{equation}\label{febry12}
[\Pi_b\tau_\gamma^b](\xy,\xy')=\sum_{\gamma'\in\Gamma}\sum_{j=1}^N e^{ib\phi(\xy,\gamma'+\gamma)} u_{j,b}(\xy-\gamma'-\gamma) e^{-ib\phi(\xy'+\gamma,\gamma'+\gamma)} e^{ib\phi(\xy'+\gamma,\gamma)}\overline{u_{j,b}(\xy'-\gamma')}.
\end{equation}
Using the bi-linearity and antisymmetry of $\phi$ we see that the magnetic phases in \eqref{febry11} and \eqref{febry12} are the same and the two kernels coincide. 
\end{proof}

\vspace{0.5cm}

Denote by $\widetilde{u}_{j,b}:=U_bu_{j,b}$. Since $\tau_{\gamma}^b$ commutes with $U_b$, we obtain that 
$$\Xi_{j,\gamma,b}=U_b \Psi_{j,\gamma,b}=U_b \tau_{\gamma}^b (u_{j,b})=\tau_{\gamma}^b (\widetilde{u}_{j,b})$$ and:
\begin{equation}\label{ianuar23}
P_b=\sum_{j=1}^N\sum_{\gamma\in \Gamma} |\tau_\gamma^b (\widetilde{u}_{j,b})\rangle \langle \tau_\gamma^b (\widetilde{u}_{j,b})|,\quad \Xi_{j,0,b}=\widetilde{u}_{j,b}.
\end{equation}
In particular, this proves \eqref{M-13}. The last thing we need to prove is \eqref{M-14}, or equivalently: 
\begin{equation}\label{febry13}
\overline{\widetilde{u}_{j,b}}=\widetilde{u}_{j,-b}.
\end{equation}
This last identity would be implied by two others:
\begin{equation}\label{febry20}
\overline{{u}_{j,b}}={u}_{j,-b}\quad {\rm and}\quad \overline{{U}_{b}f}={U}_{-b}\overline{f},\; \forall f\in L^2(\R^2).
\end{equation}
In order to prove $\overline{{u}_{j,b}}={u}_{j,-b}$, we see from \eqref{febry14} and \eqref{M-12} that the only thing we need for this is 
\begin{equation}\label{febry21}
\overline{\mathcal{T}_b(\gamma'';j',j)}=\mathcal{T}_{-b}(\gamma'';j',j).
\end{equation}
From \eqref{febry15} and \eqref{M-12} we obtain that $\overline{\mathcal{M}_b(\gamma'';j',j)}=\mathcal{M}_{-b}(\gamma'';j',j)$, hence $\overline{{M}_{b}\psi}={M}_{-b}\overline{\psi}$ for all $\psi\in l^2(\Gamma)\otimes \C^N$. By functional calculus, the same identity is obeyed by ${M}_{b}^{-1/2}$ which proves \eqref{febry21}, hence  $\overline{{u}_{j,b}}={u}_{j,-b}$. Since $\Psi_{j,\gamma,b}=\tau_{\gamma}^b (u_{j,b})$ we immediately get $\overline{\Psi_{j,\gamma,b}}=\Psi_{j,\gamma,-b}$. From \eqref{ianuarie10} we obtain that $\overline{\Pi_b f}=\Pi_{-b} \overline{f}$ for all $f\in L^2(\R^2)$. Moreover, since $\overline{H_b g}=H_{-b}\; \overline{g}$ for all  $g\in C_0^\infty(\R^2)$, a standard 
argument shows that  $\overline{P_b f}=P_{-b} \overline{f}$ for all $f\in L^2(\R^2)$. Now using the formula \eqref{dech11'} we obtain the second identity in \eqref{febry20} and the proof of Theorem \ref{teorema2} is over.

\section*{Acknowledgements}

 H.C. acknowledges  financial support from Grant 4181-00042 of the Danish Council for Independent Research $|$ Natural Sciences, and from a Bitdefender Invited Professor Scholarship with IMAR, Bucharest. Both I.H. and G.N. acknowledge  financial support from VELUX Visiting Professor Program and from Aalborg University. 
 
The authors thank D. Monaco and G. Panati for several very useful discussions.

\end{document}